\numberwithin{equation}{section}
\newtheorem{theorem}{Theorem}
\newtheorem*{theorem*}{Theorem}
\newtheorem{lemma}{Lemma}[section]
\newtheorem{corollary}{Corollary}[section]
\newtheorem{proposition}{Proposition}[section]
\newtheorem{remark}{Remark}[section]
\numberwithin{equation}{section}
\renewcommand{\a}{\alpha}
\renewcommand{\b}{\beta}
\newcommand{\e}{\varepsilon}
\newcommand{\E}{\mathbf{E}}
\def\R{{\mathbb{R}}}
\def\N{{\mathbb{N}}}
\def\Z{{\mathbb{Z}}}
\def\T{{\mathbb{T}}}
\def\E{{\mathbb{E}}}
\def\sgn{{\operatorname{sgn}}}
\begin{document}

\title{Superballistic and superdiffusive scaling limits of stochastic harmonic chains with
long-range interactions}

\author[H.~Suda]{Hayate Suda}
\address{Graduate School of Mathematical Sciences, University of Tokyo, 3-8-1, Komaba, Meguro-ku, Tokyo, 153--8914, Japan}
\email{hayates@ms.u-tokyo.ac.jp}

\begin{abstract}
We consider one-dimensional infinite chains of harmonic oscillators with random exchanges of momenta and long-range interaction potentials which have polynomial decay rate $|x|^{-\theta}, x \to \infty, \theta > 1$ where $x \in \Z$ is the interaction range. The dynamics conserve total momentum, total length and total energy. We prove that the systems evolve macroscopically  on superballistic space-time scale $(y \e^{-1}, t \e^{- \frac{\theta - 1}{2}})$ when $1 < \theta < 3$, $(y \e^{-1}, t \e^{-1} \sqrt{- \log(\e^{-1})}^{-1})$ when $\theta = 3$, and ballistic space-time scale $(y \e^{-1}, t \e^{-1})$ when $\theta > 3$. Combining our results and the results in \cite{S2}, we show the existence of two different space-time scales on which the systems evolve. In addition, we prove scaling limits of recentered normal modes of superballistic wave equations, which are analogues of Riemann invariants and capture fluctuations around characteristics. The space-time scale is superdiffusive when $2 < \theta \le 4$ and diffusive when $\theta > 4$.
\end{abstract}

\keywords{stochastic harmonic chain, long-range interaction, superballistic wave equation, superdiffusion, fractional diffusion equation}

\maketitle

\section{Introduction}

In recent years, one-dimensional stochastic harmonic chains have been considered as good approximations of some non-random anharmonic chains and mostly studied from the point of view of anomalous heat transport and corresponding superdiffusion of energy \cite{BBO,BOS,JKO}, see also the review \cite[Chapter 5]{Ls}. Roughly speaking, the model is defined as Hamiltonian dynamics perturbed by momentum-exchange noise which acts only on momenta locally: 
\begin{align}
\begin{dcases}
d q_{x}(t) = p_{x}(t) dt \\
d p_{x}(t) = - \sum_{z \in \Z}\a_{x-z} q_{z}(t) dt + \gamma S(p_{x-1}(t), p_{x}(t), p_{x+1}(t) ; dw_{x-1}(t), dw_{x}(t), dw_{x+1}(t)).
\end{dcases}
\end{align}
Here $p_{x}$ stands for the momentum of the particle $x \in \Z$, and $q_{x}$ for the position of the particle $x \in \Z$. Typical assumptions of interaction potential $\a_{x}, x \in \Z$ are as follows:

$(a.1) ~ \a_{x} \le 0 $ for all $x \in \Z \setminus \{ 0 \}$, $\a_{x} \neq 0$ for some $x \in \Z$.

$(a.2) ~ \a_{x} = \a_{-x} $ for all $ x \in \Z.$

$(a.3) $ There exists some positive constant $C>0$ such that $|\a_{x}| \le Ce^{-\frac{|x|}{C}} $ for all $x \in \Z$.

$(a.4) ~ \hat{\a}(k) >0 $ for all $k \neq 0$ , $\hat{\a}(0) = 0 , \hat{\a}^{''}(0) > 0$.

\noindent Here $\hat{\a}$ is the discrete Fourier transform defined as
\begin{align}
\hat{\a}(k) := \sum_{x \in \Z} \a_{x} e^{-2 \pi \sqrt{-1}x k} , \quad k \in \T, 
\end{align}
and $\displaystyle{\T = [-\frac{1}{2}, \frac{1}{2}) }$ is the one-dimensional torus. Especially, exponential-decay interaction proportional to interaction range is assumed. In addition, $\{ w_{x}(t) ; x \in \Z, t \ge 0 \}$ are i.i.d. standard Brownian motions, $\gamma > 0$ is the strength of stochastic perturbation $S$, and $S$ is added to destroy infinite number of local conservation law, but to conserve total momentum $\displaystyle{\sum_{x} p_{x}}$, total length of the system $\displaystyle{\sum_{x}r_{x} , ~ r_{x} := q_{x} - q_{x-1}}$ and total energy $\displaystyle{\sum_{x}e_{x}, ~ e_{x} := \frac{1}{2}|p_{x}|^{2} - \frac{1}{4} \sum_{z \in \Z} \a_{x-z} (q_{x} - q_{z})^{2}}$ formally. We will give rigorous definition of $S$ in Section 3. The physical meaning of $r_{x}, x \in \Z$ is the inter-particle distance or tension between neighbor particles. We call this model \textit{exponential decay model}.  

In this paper, we consider the models which have strong long-range interaction, that is, our interaction potential is given by
\begin{align}
\a_{x} := - |x|^{- \theta} , \quad x \in \Z \setminus \{ 0 \} \quad \a_{0} := 2 \sum_{x \in \N} |x|^{- \theta} \quad \theta > 1.
\end{align}
Our model does not satisfy the conditions $(a.3)$. In addition, if $\theta \le 3$ then $(a.4)$ is not satisfied because $\sum_{x \in \Z} |x|^{2 - \theta} = \infty$ and we can not define $\a^{''}$ as a continuous function on $\T$. From this polynomial decay model, we obtain following new phenomena for stochastic harmonic chains:
\begin{description}

\item[Two different space-time scales without stochastic perturbation.]  
In \cite{KO2}, the authors pointed out an interesting feature of exponential decay models: there exists two different equilibrium states and corresponding different space-time scales of the conserved quantities due to the stochastic perturbation. One is called \textit{mechanical equilibrium}, which means that the macroscopic profiles of momenta and tensions are constant, and the ohter is called \textit{thermal equilibrium}, which means that the macroscopic profile of energy is constant. At first, the system go to mechanical equilibrium at ballistic scaling. Actually, they prove the weak convergence of the scaled empirical measure of $\big(p_{x}(t), r_{x}(t), e_{x}(t) \big)$: 
\begin{align}
\lim_{\e \to 0} \e \sum_{x \in \Z} \E_{\e} [ \begin{pmatrix} p_{x}(\frac{t}{\e}) \\ r_{x}(\frac{t}{\e}) \\ e_{x}(\frac{t}{\e})  \end{pmatrix} ] \cdot \begin{pmatrix} J_1(\e x,t) \\ J_2(\e x,t) \\ J_3(\e x,t) \end{pmatrix} = \int_{\R} dy ~ \begin{pmatrix} \bar{p}(y,t) \\ \bar{r}(y,t) \\ \bar{e}(y,t) \end{pmatrix} \cdot \begin{pmatrix} J_1(y,t) \\ J_2(y,t) \\ J_3(y,t) \end{pmatrix}
\end{align}
for any test function $J_i \in C^{\infty}_{0}(\R \times [0,\infty)) , i=1,2,3$ where $(\bar{p},\bar{r})$ is the solution of the linear wave equation
\begin{align}\label{eq:eulereq}
\begin{dcases} \partial_{t} \bar{r}(y,t) = \partial_{y} \bar{p}(y,t) \\ \partial_{t} \bar{p}(y,t) = \frac{\hat{\a}^{''}(0)}{8\pi^{2}} \partial_{y} \bar{r}(y,t).\end{dcases}
\end{align}
In addition, $\displaystyle{\bar{e}(y,t) := \frac{1}{2}( \bar{p}^{2}(y,t) + \frac{\hat{\a}^{''}(0)}{8\pi^{2}} \bar{r}^{2}(y,t) ) + T(y)}$ and $T(y)$ is called the macroscopic temperature profile.
After the system reaches mechanical equilibrium, then $(p,r)$ terms are macroscopically constant, but the energy will evolve in time. At superdiffusive scaling the energy distribution converges weakly,
\begin{align}
\lim_{\e \to 0} \e \sum_{x \in \Z} \int_{0}^{\infty} dt ~ \E_{\e} [e_{x}(\frac{t}{\e^{3/2}})] J(\e x,t) = \int_{\R} dy \int_{0}^{\infty} dt ~ T(y,t) J(y,t)
\end{align}
for any $J \in C^{\infty}_{0}(\R \times [0,\infty))$ and the limit point $T(y,t)$ satisfies a $3/4$-fractional diffusion equation
\begin{align} 
\partial_{t} T(y,t) &= - \frac{(\a^{''}(0))^{3/4}}{2^{9/4}(3 \gamma)^{1/2}} (-\Delta)^{\frac{3}{4}}T(y,t) \\
T(y,0) &= T(y),
\end{align}
where $\gamma > 0$ is the strength of the stochastic noise. In their paper, the authors show those two pictures by dividing the variables $(p,r)$ into two terms $(p = p^{'} + p^{''}, r = r^{'} + r^{''})$, where $(p^{'},r^{'})$ is called \textit{thermal terms} and $(p^{''},r^{''})$ \textit{phononic terms}, and each term have different initial distributions, called \textit{thermal type} and \textit{phononic type}. A phononic type distribution means that there is no thermal energy, and a thermal type distribution means that the system is at mechanical equilibrium, see \cite[Section 2]{KO2}. The role of the stochastic perturbation is very crucial for the exponential decay model.  Actually, if there is no stochastic perturbation, which means that $\gamma = 0$, then the macroscopic energy transport is purely ballistic. 

For polynomial decay models, in \cite{S2} we show that the thermal energy evolves on a superdiffusive space-time scale and the time evolution law of the macroscopic thermal energy $W(y,t), y \in \R, t \ge 0$ is given by a fractional diffusion equation and the exponent of the fractional Laplacian depends on $\theta > 2$ : 
\begin{theorem*}[Theorem 1 in \cite{S2}]
Under some initial conditions, the following weak convergence holds for any test function $J \in C^{\infty}_{0}(\R \times [0,\infty))$ : 
\begin{align}
\lim_{\e \to 0} \e \sum_{x \in \Z} \int_{0}^{\infty}dt ~ \mathbb{E}_{\e}\Big[ e_{x}(\frac{t}{f_{\theta}(\e)}) \Big] J(\e x,t) = \int_{\R} dy \int_{0}^{\infty} dt ~ W(y,t)J(y,t),
\end{align}
where $f_{\theta}(\e)$ is the time scaling which defined as 
\begin{align}
f_{\theta}(\e) := \begin{dcases} \e^{\frac{6}{7 - \theta}} &1 < \theta < 3, \\ |h(\e)|^{3}  &\theta = 3, \\ \e^{\frac{3}{2}}  &\theta > 3, \end{dcases}
\end{align}
and $h(\cdot)$ is the inverse function of $y \mapsto \Big(\frac{y^{4}}{- \log{(y)}}\Big)^{\frac{1}{2}}$ on $[0,1)$. In addition, $W(y,t)$ is the solution of the following fractional diffusion equation:
\begin{align}
\partial_{t} W(y,t) = \begin{dcases} - C_{\theta,\gamma_{0}} (-\Delta)^{\frac{3}{7-\theta}} W(y,t) \quad &2 < \theta \le 3, \\ - C_{\theta,\gamma_{0}} (-\Delta)^{\frac{3}{4}} W(y,t) \quad &\theta > 3, \end{dcases}
\end{align}
where
$C_{\theta,\gamma_{0}}$ is a positive constant defined as
\begin{align}
C_{\theta,\gamma_{0}} := \begin{dcases} \frac{24 \csc{\big(\frac{(4-\theta)\pi}{7 - \theta}} \big)}{7 -\theta} \Big(\frac{(\theta - 1)}{24 }\Big)^{\frac{6}{7-\theta}} \gamma_{0}^{-\frac{\theta - 1}{7 - \theta}} C_{1}(\theta)^{\frac{3}{7-\theta}}  \quad &2 < \theta \le 3, \\ \frac{\sqrt{6}}{12} \gamma_{0}^{-\frac{1}{2}} C_{1}(\theta)^{\frac{3}{4}} \quad &\theta > 3. \end{dcases} 
\end{align}
and $C_{1}(\theta)$ is defined in Appendix \ref{app:asympofa} in this paper. 
\end{theorem*}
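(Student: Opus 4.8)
The plan is to adapt the Wigner-function/resolvent method for anomalous energy transport in stochastic harmonic chains to the singular long-range dispersion relation. First I would pass to Fourier variables on $\T$ and diagonalize the harmonic part, introducing the dispersion relation $\om(k) := \sqrt{\hat\a(k)}$ and the complex phonon field $\hat\psi(k) := \om(k)\hat q(k) + \sqrt{-1}\,\hat p(k)$, so that the total energy is $\frac12\int_\T|\hat\psi(k)|^2\,dk$. Since the momentum-exchange noise acts locally in $x$, it becomes a convolution (collision) operator in $k$ whose total scattering rate $R(k)$ vanishes quadratically, $R(k)\sim c\,k^{2}$ as $k\to0$, owing to momentum conservation. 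The decisive input is the small-$k$ asymptotics of the dispersion relation coming from Appendix \ref{app:asympofa}: for $1<\theta<3$ one has $\hat\a(k)\sim C_{1}(\theta)|k|^{\theta-1}$, hence $\om(k)\sim|k|^{(\theta-1)/2}$ and a \emph{divergent} group velocity $\om'(k)\sim|k|^{(\theta-3)/2}$, which is the origin of the anomaly; for $\theta>3$ the velocity is bounded and one is in the familiar acoustic regime.

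Next I would introduce the energy Wigner distribution $W^{\e}(t,y,k)$, built from the products $\hat\psi(k+\e\xi/2)^{*}\hat\psi(k-\e\xi/2)$, and write its evolution under the generator, which splits into a transport part carrying the phase $\om'(k)\e\xi$ and a collision part of rate $\gamma R(k)$. Rather than analyze the time-dependent equation directly, I would take the Laplace transform in the macroscopic time and study the resolvent equation $(\lambda-\mathcal{L}_{\e})\tilde W^{\e} = W^{\e}(0)$ as $\e\to0$. Projecting onto the conserved ($k$-independent) total-energy sector and inverting the fast collision dynamics on its orthogonal complement produces, to leading order, an effective symbol of the form
\begin{align}
\frac{\e^{2}\xi^{2}}{\gamma\, f_{\theta}(\e)}\int_{\T}\frac{\om'(k)^{2}}{R(k)}\,dk .
\end{align}
With $\om'(k)^{2}\sim|k|^{\theta-3}$ and $R(k)\sim c\,k^{2}$ the integrand behaves like $|k|^{\theta-5}$, which is non-integrable at $k=0$ for every $\theta<4$; this divergence is precisely the signature that the limit is fractional rather than diffusive.

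The divergence is regularized by the breakdown of the perturbative inversion: the collision operator can be inverted against transport only where $\e\,\om'(k)\xi\ll\gamma R(k)$, i.e.\ for $|k|\gg k_{*}$ with $k_{*}\sim(\e\xi)^{2/(7-\theta)}$. Cutting the $k$-integral at $k_{*}$ gives $\int_{|k|>k_{*}}|k|^{\theta-5}\,dk\sim k_{*}^{\theta-4}$, so the effective symbol is of order $f_{\theta}(\e)^{-1}\e^{6/(7-\theta)}|\xi|^{6/(7-\theta)}$. The time scale $f_{\theta}(\e)=\e^{6/(7-\theta)}$ is chosen exactly so that the powers of $\e$ cancel and the symbol converges to $C_{\theta,\gamma_{0}}|\xi|^{6/(7-\theta)}$, the Fourier multiplier of $(-\Delta)^{3/(7-\theta)}$; for $\theta>3$ the bounded velocity gives $k_{*}\sim(\e\xi)^{1/2}$, the exponent saturates at $3/4$, and $f_{\theta}(\e)=\e^{3/2}$. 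The constant $C_{\theta,\gamma_{0}}$ then emerges from evaluating the limiting $k$-integral explicitly, the cosecant factor being the hallmark of the resulting contour/residue computation. Finally, uniqueness for the limiting fractional diffusion equation identifies $W(y,t)$ and yields the claimed convergence after a standard Laplace-transform inversion.

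The hard part is the rigorous treatment of the small-$k$ phonons, where the divergent group velocity makes the transport term unbounded and the perturbative inversion of the collision operator fails. One must establish uniform-in-$\e$ bounds and tightness for $W^{\e}$ while capturing the exact singular scaling that produces both $f_{\theta}(\e)$ and the fractional exponent; controlling the error terms from the non-conserved sector uniformly down to $k_{*}$ is the core of the estimate. The borderline case $\theta=3$, where the velocity singularity degenerates into a logarithm, forces the inverse-function time scale $f_{\theta}(\e)=|h(\e)|^{3}$ and demands separate, more delicate asymptotic bookkeeping of the logarithmic corrections.
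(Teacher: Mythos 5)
The first thing to say is that the present paper does not prove this statement: it is Theorem~1 of the companion work \cite{S2}, quoted verbatim in the Introduction as background for the superballistic and fluctuation results that are actually proved here. There is therefore no proof in this document to compare yours against; the techniques of Sections~5--7 (direct $\mathbb{L}^{2}$ control of the mean dynamics of $(\hat{p},\hat{l})$ through the matrix exponentials $\exp(A_{\e}(k)t)$) are deliberately more elementary than what the energy superdiffusion statement requires, and they address a different space-time scale and different initial data (phononic rather than thermal type).

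That said, your sketch is the right strategy for the quoted theorem, and it is the strategy of \cite{S2} (following \cite{JKO} for the exponential-decay case): wave field $\hat{\psi}$, energy Wigner distribution, Laplace transform in time, resolvent analysis of transport plus collision. Your exponent bookkeeping is consistent with the statement: with $\om'(k)\sim |k|^{(\theta-3)/2}$ from Lemma~\ref{lem:asympofa} and $R(k)\sim k^{2}$, the balance $\e|\xi|\,\om'(k)\sim \gamma R(k)$ gives $k_{*}\sim(\e|\xi|)^{2/(7-\theta)}$, and $\e^{2}\xi^{2}k_{*}^{\theta-4}\sim(\e|\xi|)^{6/(7-\theta)}$ reproduces both $f_{\theta}(\e)$ and the exponent $3/(7-\theta)$, saturating at $3/4$ once $\om'$ is bounded ($\theta>3$). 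Two caveats. First, the cutoff-at-$k_{*}$ heuristic only yields the order of magnitude: the actual limit of the resolvent is a genuinely non-local L\'evy symbol obtained from the full expression of the type $\int dk\,\gamma R(k)\bigl(\e\xi\om'(k)\bigr)^{2}\bigl((\gamma R(k))^{2}+(\e\xi\om'(k))^{2}\bigr)^{-1}$, and that computation --- not a truncated diffusive integral --- is what produces the cosecant in $C_{\theta,\gamma_{0}}$; your argument as written would not identify the constant. Second, the theorem concerns the thermal part of the energy under thermal-type initial conditions, and the limiting fractional equation is asserted only for $\theta>2$ even though the time scale is given for all $\theta>1$; your sketch does not explain the exclusion of $1<\theta\le 2$. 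These are gaps of a heuristic rather than errors: the skeleton is correct and matches the method of \cite{S2}, but essentially all of the analytic content (uniform bounds, tightness, control of the non-conserved sector down to $k_{*}$, and the logarithmic bookkeeping at $\theta=3$) remains to be supplied.
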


In the present paper, we consider the scaling limits for phononic terms of polynomial decay models. We explain our main results in the latter half of Introduction. We emphasize that by combining \cite[Theorem 1, 2]{S2}, Theorem \ref{thm:superballistic1} and Theorem \ref{thm:superballisticenergy} in this paper, we verify that if $2 < \theta \le 3$, then there exists two different space-time scales regardless of the existence of stochastic perturbation. Especially, the faster scaling is superballistic scaling and the other is ballistic or superdiffusive scaling.  That is, the long-range interaction decomposes the energy into two terms in the sense of space-time scale. 

Before we explain Theorem \ref{thm:superballistic1}, \ref{thm:superballisticenergy} and show the limiting equations with superballistic space-time scaling, we will introduce new variable, which is dual variable of the momentum for polynomial decay models.

\item[New conserved quantity for polynomial decay models]

For our model, $r_{x} = q_x - q_{x - 1}, x \in \Z$ is no longer the proper dual variable for $p_{x}, x \in \Z$ because $r_{x}$ only captures the tension between neighbor particles. Actually, when $\theta \le 3$, then we can not derive the limiting equations for the variables $(p_{x}, r_{x})$. This leads us to introduce the generalized tension, denoted by $l_{x}, x \in \Z$, which is formally defined as
\begin{align}
l_{x} = \mathcal{H}_{\Z} (\check{\omega} * q)_{x},
\end{align}
where $\check{\omega} : \Z \to \R$ is the inverse Fourier transform of $\omega(k) := \sqrt{\a(k)}, k \in \T$, and $\mathcal{H}_{\Z}$ is the Hilbert transform on $\Z$ defined in Section \ref{sec:notation}. We see that $\sum_{x} l_{x}$ is formally conserved, and if $\theta > 3$ then $r_{x}, l_{x}$ are macroscopically equal up to a constant multiple, see Corollary \ref{cor:weakconv}. Hence we can think that $l_{x}$ is a natural generalization of $r_{x}$. One of our results is the weak convergence of the scaled empirical measure of $(p_{x}(t), l_{x}(t))$. The time scaling $j_{\theta}(\e)$ is given by
\begin{align}\label{scaling:phononic}
j_{\theta}(\e) := \begin{dcases} \e^{\frac{\theta - 1}{2}}, \quad 1 < \theta < 3 , \\
\e (- \log{\e})^{\frac{1}{2}}, \quad \theta = 3 , \\
\e, \quad \theta > 3. \end{dcases}
\end{align} 
In Theorem \ref{thm:superballistic1} we show the scaling limit of the scaled dynamics $\{ (p_{x}(\frac{t}{j_{\theta}(\e)}), l_{x}(\frac{t}{j_{\theta}(\e)})); x \in \Z, t \ge 0 \}$ with initial distribution which satisfies the phononic type condition \eqref{ass:phononic}. Especially, Theorem \ref{thm:superballistic1} implies the weak convergence of the scaled empirical measure of $(p_{x}(\frac{t}{j_{\theta}(\e)}), l_{x}(\frac{t}{j_{\theta}(\e)}))$, that is, 
\begin{align}
\lim_{\e \to 0} \e \sum_{x \in \Z} \E_{\e}[ \begin{pmatrix} p_{x}(\frac{t}{j_{\theta}(\e)}) \\ l_{x}(\frac{t}{j_{\theta}(\e)})  \end{pmatrix}] \cdot \begin{pmatrix} J_{1}(\e x) \\ J_{2}(\e x) \end{pmatrix} = \int_{\R} dy ~ \begin{pmatrix} \bar{p}(y,t) \\ \bar{l}(y,t)  \end{pmatrix} \cdot \begin{pmatrix} J_{1}(y) \\ J_{2}(y) \end{pmatrix} \label{lim:weakconvofpl}
\end{align}
for any $t \ge 0$ and any test functions $J_{i} \in C^{\infty}_{0}(\R), i=1,2$. The macroscopic dynamics $\{ (\bar{p}(y,t),\bar{l}(y,t)) ; y \in \R, t \ge 0 \}$ is the solution of the following \textit{superballistic} wave equation
\begin{align}\label{eq:superballistic}
\begin{dcases} \partial_{t} \bar{l}(y,t) = \sqrt{C_{1}(\theta)} \mathcal{D}_{\theta} \bar{p}(y,t) \\ \partial_{t} \bar{p}(y,t) = \sqrt{C_{1}(\theta)} \mathcal{D}_{\theta} \bar{l}(y,t), \end{dcases}
\end{align}
where $C_{1}(\theta)$ is a positive constant defined in Appendix \ref{app:asympofa}, $\mathcal{D}_{\theta}$ is an operator defined as
\begin{align}
(\mathcal{D}_{\theta}f)(y) := \begin{dcases} \mathcal{H}_{\R} (- (- \Delta)^{\frac{\theta - 1}{4}}f)(y) \quad &1 < \theta \le 3, \\ f^{'}(y) \quad &\theta > 3, \end{dcases}
\end{align}
for any $f \in \mathbb{S}(\R)$, and $\mathcal{H}_{\R}$ is the Hilbert transform on $\R$. The reason why we call \eqref{eq:superballistic} the superballistic wave equation is that $\bar{p}(y,t)$ satisfies 
\begin{align}\label{eq:superballistic1}
\partial_{t}^{2} \bar{p}(y,t) = \begin{dcases} - C_{1}(\theta) (- \Delta)^{\frac{\theta - 1}{2}} \bar{p}(y,t) \quad &1 < \theta \le 3, \\
C_{1}(\theta) \Delta \bar{p}(y,t) \quad &\theta > 3. \end{dcases}
\end{align}
In addition, in Theorem \ref{thm:superballisticenergy} we prove the following scaling limit of the scaled empirical measure of the phononic energy : 
\begin{align}
\lim_{\e \to 0} \e \sum_{x \in \Z} \E_{\e}[ e_{x}(\frac{t}{j_{\theta}(\e)})] J(\e x) = \int_{\R} dy ~ \bar{e}(y,t) J(y)
\end{align}
where $\bar{e}(y,t)$ is given by
\begin{align}\label{def:longrangeenergy}
\bar{e}(y,t) := \begin{dcases} \frac{1}{2} \bar{p}^2(y,t) + \frac{1}{4} L_{\theta}(y,t)  & 1 < \theta < 3, \\ \frac{1}{2} \bar{p}^2(y,t) + \frac{1}{2} \bar{l}^2(y,t) & \theta \ge 3, \end{dcases} 
\end{align}
and $L_{\theta}(y,t), 1 < \theta < 3$ is defined as
\begin{align}\label{def:potentialenergy}
L_{\theta}(y,t) := \Big(-(-\Delta)^{\frac{\theta - 1}{2}} \big(\mathcal{D}_{\theta}^{-1}\bar{l} \big)^2 \Big)(y,t) - 2\Big(-(-\Delta)^{\frac{\theta - 1}{2}} \mathcal{D}_{\theta}^{-1}\bar{l}\Big) \mathcal{D}_{\theta}^{-1}\bar{l}(y,t).
\end{align}

One might think that the value $\theta = 3$ is the threshold and when $\theta > 3$ the macroscopic behaviors of the conserved quantities are essentially the same as those of the exponential decay models. However, as we will explain in the next paragraph, some effects of long-range interactions appears macroscopically when $3 <\theta < 4$. 

\item[Non-monotonic superdiffusive behaviors of normal modes]

We consider the fluctuation of \eqref{eq:superballistic} by considering \textit{normal modes}, which is an analogue of the Riemann invariant for \eqref{eq:superballistic}. For exponential decay models, in \cite{KO2} the authors show the diffusive fluctuations of the microscopic normal modes corresponding to the Riemann invariants of \eqref{eq:eulereq}. For polynomial decay models, we can define the normal modes of \eqref{eq:superballistic} and corresponding microscopic normal modes $f^{\pm}_{x}(t), ~ x \in \Z, t \ge 0$. In Theorem \ref{thm:flucnormal} we show that 
\begin{align}
\lim_{\e \to 0} \e \sum_{x \in \Z} \mathbb{E}_{\e}[f^{\pm}_{x}(\frac{t}{n_{\theta}(\e)})] \Big(S^{\pm}_{\theta}(\frac{t}{m_{\theta}(\e)})J\Big)(\e x) = \int_{\R} dy ~ \bar{F}^{\pm}(y,t) J(y),
\end{align} 
for any $J \in C^{\infty}_{0}(\R)$, where $(m_{\theta}(\e), n_{\theta}(\e))$ are the macroscopic time scaling and microscopic time scaling respectively defined as 
\begin{align}\label{ass:timescalingflucnormal}
\bigl( m_{\theta}(\e), n_{\theta}(\e) \bigr) := \begin{dcases} \bigl( \e^{3 - \theta}, \e^{\frac{5-\theta}{2}} \bigr) \quad &2 < \theta < 3, \\
( (\log{\e^{-1}})^{-1}, \e ( \sqrt{\log{\e^{-1}}})^{-1}) \quad &\theta = 3, \\
\bigl( \e^{\theta - 3}, \e^{\theta - 2} \bigr) \quad &3 < \theta \le 4, \\
\bigl( \e, \e^{2} \bigr) \quad &\theta > 4. \end{dcases} 
\end{align}
Moreover, $S^{\pm}_{\theta}(t), t \in \R$ is a semigroup generated by $\pm \sqrt{C_{1}(\theta)} \mathcal{D}_{\theta}$ and by using this semigroup we recenter the dynamics. For the definition of the macroscopic scaling limit $\bar{F}^{\pm}(y,t), y \in \R, t \ge 0$ and the precise statements, see Section \ref{sec:flucnormal}.  From \eqref{ass:timescalingflucnormal} it is revealed that if $2 <  \theta < 4$ then the space-time scaling is superdiffusive due to the long-range interaction.  We would like to emphasize that in the case $3 < \theta < 4$ the conserved quantities converge at ballistic scaling but their superdiffusive fluctuations are obtained, that is, the effect of long-range interactions appears macroscopically if $\theta < 4$. Another interesting behavior is the non-monotonic dependence on the exponent $\theta > 2$: If $2 < \theta < 3$, then the time scaling gets faster when $\theta$ gets bigger. But if $\theta > 3$, then the time scaling gets faster when $\theta$ gets smaller and the fastest scaling appears when $\theta = 3$. This non-monotonic dependence is caused by the first and second order of $\hat{\a}(k), k \to 0$ and the condition $n_{\theta}(\e) m_{\theta}(\e)^{-1} = j_{\theta}(\e)$, where $j_{\theta}(\e)$ is defined in \eqref{scaling:phononic}.

\end{description}

We follow the idea presented in \cite{KO2} to prove above results.  Since our dynamics is infinite volume, we can not use relative entropy method used in \cite{BEO,OVY}. But if we assume that the total energy at $t=0$ is macroscopically finite, then we can use simple techniques based on $\mathbb{L}^{2}$ bound of the configurations $(p_{x}, l_{x})$ and the Fourier transform. In paticular, we don't need any information of canonical states for our system.

One of the important open problems is to generalize some results for anharmonic chains with long-range interaction. A main difficulty is that we do not know cannonical states even for our harmonic models. As we mentioned above, for our models the variable $r_{x}, x \in \Z$ is not proper dual of $p_{x},x \in \Z$, so we can expect that canonical states for stochastic harmonic chains may be expressed by using $(p_{x},l_{x})$ terms. But the definition of $l_{x},x \in \Z$ heavily depends on the concrete form of interaction potential $\a_{x}, x \in \Z$ and the Fourier transform, if chains are anharmonic then it is quite difficult to find the dual variables of the momenta. 

Our paper is organized as follows: In Section 2 we prepare some notations. In Section 3 we give the rigorous definition of the dynamics. In Section 4 we state our main results, Theorem \ref{thm:superballistic1}, \ref{thm:superballisticenergy}, \ref{thm:flucnormal} and \ref{thm:flucnormalLLN}. Proofs of Theorem \ref{thm:superballistic1}, \ref{thm:superballisticenergy} are given in Section 5 and 6 respectively. In Section 7 we prove Theorem \ref{thm:flucnormal} and \ref{thm:flucnormalLLN}.

\section{Notations}\label{sec:notation}

Let $\R$ be the real line, $\Z$ be the set of all integers and $\T = [- \frac{1}{2}, \frac{1}{2})$ be the one-dimensional torus. Denote by $\ell^{2}(\Z)$ the space of all complex valued sequences $(f_x)_{x \in \Z}$ equipped with the norm 
\begin{align}
\| f \|_{\ell^{2}(\Z)} := \sum_{x \in \Z} |f_x|^2 .
\end{align}
Denote by $\mathbb{L}^{2}(\T)$ the space of all complex valued functions $f(k), k \in \T$ equipped with the norm 
\begin{align}
\| f \|_{\mathbb{L}^{2}(\T)} := \Big( \int_{\T} dk ~ |f(k)|^{2} \Big)^{\frac{1}{2}}.
\end{align}
Denote by $\mathbb{S}(\R)$ the Schwarz space on $\R$.

For $f, g : \Z \to \R, h \in \ell^{2}(\Z)$ we define $f * g: \Z \to \R$ and $\hat{h} \in \mathbb{L}^{2}(\T) $ as
\begin{align}
(f*g)_{x} &:= \sum_{x^{'} \in \Z} f_{x-x^{'}}g_{x^{'}} ,\\
\hat{h}(k) &:= \sum_{x \in \Z} e^{-2 \pi \sqrt{-1} k x} h_{x}.
\end{align}

For $f \in \mathbb{L}^{2}(\T)$ we define $\check{f} \in \ell^{2}(\Z)$ as
\begin{align}
\check{f}_x := \int_{\T} dk ~ e^{2 \pi \sqrt{-1} k x} f(k).
\end{align}

For $J \in \mathbb{S}(\R)$ we define $\widetilde{J} : \R \to \mathbb{C}$ as
\begin{align}
\widetilde{J}(\xi) &:= \int_{\R} dy ~ e^{-2 \pi \sqrt{-1} \xi y} J(y).
\end{align}

Denote by $\operatorname{sgn}(y), y \in \R$ the sign function defined as
\begin{align}
\operatorname{sgn}(y) := \begin{dcases} 1 & y > 0, \\ 0 & y = 0, \\ -1 & y < 0. \end{dcases}
\end{align}
We define the Hilbert transform on $\R$ and $\Z$, denoted by $\mathcal{H}_{\R}$ and $\mathcal{H}_{\Z}$ respectively, via their Fourier transforms :   
\begin{align}
\widetilde{(\mathcal{H}_{\R} J)}(\xi) &:=  -\sqrt{-1} \operatorname{sgn}(\xi) \tilde{J}(\xi), \quad J \in \mathbb{S}(\R), \\
\widehat{(\mathcal{H}_{\Z} f)}(k) &:=  -\sqrt{-1} \operatorname{sgn}(k) \hat{f}(k), \quad f \in \ell^{2}(\Z).
\end{align}

For two functions $f,g$ defined on common domain $A$, we write $f \lesssim g$ or $g \gtrsim f$ if there exists some positive constant $C > 0$ such that $f(a) \le Cg(a)$ for any $a \in A$. {If $f \lesssim g$ and $g$ is a positive constant, then we also write $\sup_{a \in A}f(a) \lesssim 1$.}

For two functions $f,g$ defined on common open subset $A \subset \R$ and a number $y_0 \in \overline{A}$, where $\overline{A}$ is the closure of $A$, we write $f(y) \sim g(y)$ as $y \to y_{0}$ or $f(y) = O(g(y))$ as $y \to y_0$ if 
\begin{align}
0 <  \varlimsup_{y \to y_0} \Big| \frac{f(y)}{g(y)} \Big| < \infty. 
\end{align}
In addtion, we write $f(y) = o(g(y))$ as $y \to y_0$ if
\begin{align} 
\varlimsup_{y \to y_0} \Big|  \frac{f(y)}{g(y)} \Big| = 0.
\end{align}

\section{The dynamics}

In this section we define harmonic chains with noise and long-range interactions. Since we analyze the system with finite total energy, it is appropriate for us to define the dynamics through the wave functions $\{ \hat{\psi}(k,t) ; k \in \T, t \ge 0 \}$ as $\mathbb{L}^{2}(\T)$ solution of the stochastic differential equation \eqref{def:dynamicsofpsi}. Then we can reconstruct the classical variables $\{ p_{x}(t), q_{x}(t) ; x \in \Z , t \ge 0 \}$ from $\{ \hat{\psi}(k,t) ; k \in \T, t \ge 0 \}$, and then we define the energy $\{ e_{x}(t) ; x \in \Z, t \ge 0\}$ and the generalized tension $\{l_{x}(t) ; x \in \Z , t \ge 0 \}$. However, it may be difficult to understand the physical meaning of the important functions such as $\hat{a}(k)$ and $R(k)$ from \eqref{def:dynamicsofpsi}. To clarify the meaning of the feature values, we first give a formal description of the dynamics in terms of $\{ p_{x}(t), q_{x}(t) ; x \in \Z , t \ge 0 \}$ in the introduction.

\subsection{Formal description of the dynamics and corresponding wave function}

We consider the Hamiltonian dynamics perturbed by momentum-exchange noise which acts only on momenta locally, that is, the dynamics is governed by the following stochastic dynamical system:
\begin{align}\label{def:dynamicsofpqz}
\begin{dcases}
d q_{x}(t) = p_{x}(t) dt \\
d p_{x}(t) = \Big( - (\a * q)_{x}(t) - \frac{\gamma}{2} (\b * p)_{x}(t) \Big) dt + \sqrt{\gamma} \sum_{z = -1, 0, 1} \Big( Y_{x+z}p_{x}(t) \Big) dw_{x+z},
\end{dcases}
\end{align}
where $\gamma \ge 0$ is the strength of the noise and $Y_{x}, x \in \Z$ are vector fields defined as
\begin{align}
Y_{x} := (p_{x} - p_{x+1})\partial_{p_{x-1}} + (p_{x+1} - p_{x-1})\partial_{p_{x}} + (p_{x-1} - p_{x})\partial_{p_{x+1}}.
\end{align} 
In addition, $\{ w_{x}(t) ; x \in \Z, t \ge 0 \} $ are i.i.d. one-dimensional standard Brownian motions, and $\b_{x}$ is defined as
\begin{align}
\b_{x} := \begin{dcases} 6 , &x = 0, \\
-2 , &x = \pm 1, \\ 
-1 , &x = \pm 2, \\ 
0 , &\text{otherwise}. \end{dcases}
\end{align}

At this time we assume that $\gamma = 0$, that is, the system is deterministic harmonic chain. Taking the Fourier transform of both sides of \eqref{def:dynamicsofpqz}, we have
\begin{align}\label{def:dynamicsofpqt}
\frac{d}{dt} \begin{pmatrix} \hat{q}(k,t) \\ \hat{p}(k,t) \end{pmatrix} = \begin{pmatrix} 0 & 1 \\ -\hat{a}(k) & 0 \end{pmatrix} \begin{pmatrix} \hat{q}(k,t) \\ \hat{p}(k,t) \end{pmatrix}.
\end{align}
The eigenvalues of the above matrix are $\pm \sqrt{-1} \omega(k),$  $\omega(k) := \sqrt{\hat{\a}(k)}$, and corresponding eigenvectors $\{ \hat{\psi}(k,t), \hat{\psi}^{*}(k,t) ; k \in \T, t \ge 0 \}$, called wave functions, are given by 
\begin{align}\label{def:eqofpsi}
\hat{\psi}(k,t) &:= \omega(k) \hat{q}(k,t) + \sqrt{-1} \hat{p}(k,t).
\end{align}
Actually, we can check that 
\begin{align}
\frac{d}{dt} \hat{\psi}(k,t) = - \sqrt{-1} \omega(k) \hat{\psi}(k,t).
\end{align}
For $\gamma > 0$, we also define wave functions $\{ \hat{\psi}(k,t) ; k \in \T, t \ge 0 \}$ as \eqref{def:eqofpsi}. Then from \eqref{def:dynamicsofpqz}, the time evolution of $\{ \hat{\psi}(k,t) ; k \in \T, t \ge 0 \}$ are given by
\begin{align}\label{def:dynamicsofpsi}
\begin{dcases}
d \hat{\psi}(k,t) &= \Big[ - \sqrt{-1} \omega(k) \hat{\psi}(k,t) dt - \gamma R(k) \Big( \hat{\psi}(k,t) - \hat{\psi}^{*}(-k,t) \Big) \Big] dt \\
& ~ + \sqrt{-1} \sqrt{\gamma} \int r(k,k^{'}) \Big( \hat{\psi}(k-k^{'},t) - \hat{\psi}^{*}(k^{'}-k,t) \Big) B(dk^{'},dt),
\end{dcases}
\end{align}
where 
\begin{align}
r(k,k^{'}) &:= 2\sin^{2}{(\pi k)}\sin{\big(2\pi(k-k^{'})\big)} + 2 \sin{(2\pi k)}\sin^{2}{\big(\pi(k-k^{'}) \big)},\label{def:sqrtscatt} \\
R(k) &:= \frac{\hat{\b}(k)}{4} = 2 \sin^{4}{(\pi k)} + \frac{3}{2} \sin^{2}{(2 \pi k)}, \label{def:meanscat}
\end{align}
and $B(dk,dt)$ is a cylindrical Wiener process on $\mathcal{L}^2(\T)$ defined as
\begin{align}
B(dk,dt) := \sum_{x \in \Z} e^{2 \pi k x} dk ~ w_{x}(dt).
\end{align}

\subsection{Rigorous definition of the dynamics and generalized tension}

Now we give the rigorous definition of our dynamics. Let $(E, \mathcal{F}, \mathbb{P})$ be a probability space and assume that the cylindrical Wiener process $B(dk,dt)$ is defined on $(E, \mathcal{F}, \mathbb{P})$. For any $T > 0$, we introduce a Banach space $\mathcal{H}_{T}$ defined as
\begin{align}
\mathcal{H}_{T} &:= \{ f : \T \times [0,T] \times \Omega \to \mathbb{C} ; \| f \|_{\mathcal{H}} := \bigr( \sup_{0 \le t \le T} \mathbb{E}[\|f(t)\|_{\mathbb{L}^{2}(\T)}^{2}] \bigl)^{\frac{1}{2}} < \infty \}.
\end{align}
Then we define $ \{ \hat{\psi}(k,t) \in \mathcal{H}_{T}  ; k \in \T , 0 \le t \le T \}$ as the solution of \eqref{def:dynamicsofpsi} with initial distribution $\mu_{0}$, where $\mu_{0}$ is an arbitrary probability measure on $\mathbb{L}^{2}(\T)$. The existence and uniqueness of the solution is proved by using classical fixed point theorem, see \cite[Appendix A]{S2}. Note that from \eqref{def:dynamicsofpsi} and It\^{o}'s formula we have the energy conservation law
\begin{align}\label{eq:energyconservation}
\int_{\T} dk ~ \mathbb{E}_{\mu_{0}}[|\hat{\psi}(k,t)|^2] = \int_{\T} dk ~ E_{\mu_{0}}[|\hat{\psi}(k)|^2],
\end{align}
where $\mathbb{E}_{\mu_0}$ is the expectation with respect to the dynamics which starts from $\mu_{0}$ and $E_{\mu_{0}}$ is the expectation with respect to $\mu_{0}$.

From $ \{ \hat{\psi}(k,t) ; k \in \T , t \ge 0 \}$, we can reconstruct the classical variables $\{ p_{x}(t), q_{x}(t); x \in \Z , t \ge 0 \}$ as 
\begin{align}
&\begin{dcases} p_{x}(t) := \int_{\T} dk ~ e^{2 \pi \sqrt{-1} k x} \hat{p}(k,t) \\ q_{x}(t) := \int_{\T} dk ~ e^{2 \pi \sqrt{-1} k x} \hat{q}(k,t), \end{dcases} \\
&\begin{dcases} \hat{p}(k,t) := \frac{1}{2 \sqrt{-1}} (\hat{\psi}(k,t) - \hat{\psi}(-k,t)^{*}) \\ \hat{q}(k,t) = \frac{1}{2 \omega(k)} (\hat{\psi}(k,t) + \hat{\psi}^{*}(-k,t)). \end{dcases} \label{def:pqfromwavek}
\end{align}
Note that $\hat{q}(k)$ may not be well-defined as an $\mathbb{L}^2(\T)$ element, because $\omega(\cdot)^{-1} \notin \mathbb{L}^{2}(\T)$. However, we do not use $q$ variable to prove our results, so we do not have to worry about the well-definedness. 

Now we introduce the dual variable of $\{ p_{x}(t) ; x \in \Z, t \ge 0 \}$, called the generalized tension $\{ l_{x}(t) ; x \in \Z , t \ge 0 \}$ which is given by
\begin{align}
l_{x}(t) &:= \int_{\T} dk ~ e^{2 \pi \sqrt{-1} k x} \hat{l}(k,t) , \\
\hat{l}(k,t) &:= \frac{\sqrt{-1} \operatorname{sgn}(k)}{2} \Big( \hat{\psi}(k,t) + \hat{\psi}(-k,t)^{*} \Big). 
\end{align}
\begin{remark}
As $r_{x}, x \in \Z$ is a function of $q_{x}, x \in \Z$, we can formally write $l_{x}, x \in \Z$ as a non-local function of $q_{x}, x \in \Z$. Actually, from \eqref{def:pqfromwavek} we have
\begin{align}
\hat{l}(k) = \sqrt{-1} \omega(k) \sgn{(k)} \hat{q}(k)
\end{align}
and thus we obtain
\begin{align}
l_{x} = \mathcal{H}_{\Z} (\check{\omega} * q)_{x}.
\end{align}
We also note that the Fourier transform of the tension $\hat{r}(k), k \in \T$ is defined by using wave functions as 
\begin{align}
\hat{r}(k) := \frac{1 - e^{- 2 \pi \sqrt{-1} k} }{2 \omega(k)} \Big( \hat{\psi}(k,t) + \hat{\psi}^{*}(-k,t) \Big),
\end{align}
and thus we have 
\begin{align}
\hat{r}(k) = \frac{1 - e^{- 2 \pi \sqrt{-1} k}}{\sqrt{-1} \operatorname{sgn}(k)\omega(k)} \hat{l}(k). \label{eq:randl}
\end{align}
\end{remark}

Next we define the energy $\{e_{x}(t) ;  x \in \Z, t \ge 0 \}$ as a function of $ \{ \hat{\psi}(k,t) ; k \in \T , t \ge 0 \}$. For harmonic chains, $e_{x}(t)$ is usually defined as
\begin{align}
e_{x}(t) = \frac{1}{2} p_{x}^{2}(t) - \frac{1}{4} \sum_{z \in \Z} \a_{x-z}\big(q_{x}(t)- q_{z}(t) \big)^{2}.
\end{align}
As mentioned above, $\{ q_{x}(t) ; x \in \Z, t \ge 0 \}$ may not be well-defined, but we can define $\sum_{z \in \Z} \a_{x-z}(q_{x}(t)- q_{z}(t))^{2}$ directly as a function of $ \{ \hat{\psi}(k,t) ; k \in \T , t \ge 0 \}$ by using the following argument: suppose that $\{ q_{x} ; x \in \Z \}$ is an $\ell^2(\Z)$ element, then the Fourier transform of $\sum_{z \in \Z} \a_{x-z}(q_{x}- q_{z})^{2}$ is equal to
\begin{align}
&\int_{\T} dk^{'} (\hat{\a}(k) - \a(k - k^{'}) - \a(k^{'})) \hat{q}(k^{'}) \hat{q}(k -k^{'}) \\
&=\frac{1}{4} \int_{\T} dk^{'} F(k-k^{'},k^{'}) \Big(\hat{\psi}(k^{'}) + \hat{\psi}(-k^{'})^{*}\Big) \Big(\hat{\psi}(k-k^{'}) + \hat{\psi}(k^{'}-k)^{*}\Big)
\end{align}
where 
\begin{align}
F(k,k^{'}) := \frac{\hat{\a}(k + k^{'}) - \hat{\a}(k) - \hat{\a}(k^{'}) }{\omega(k)\omega(k^{'})}.
\end{align}
Therefore we can define $\displaystyle{\sum_{z \in \Z} \a_{x-z}(q_{x}- q_{z})^{2}}$ as  the Fourier coefficient of the above integration, that is,
\begin{align}
\sum_{z \in \Z} \a_{x-z}(q_{x}- q_{z})^{2} := \frac{1}{4} \int_{\T^2} & dk dk^{'} ~  e^{2 \pi \sqrt{-1} k x} F(k-k^{'},k^{'}) \\
&\times \Big(\hat{\psi}(k^{'}) + \hat{\psi}(-k^{'})^{*}\Big) \Big(\hat{\psi}(k-k^{'}) + \hat{\psi}(k^{'}-k)^{*}\Big),
\end{align}
and thus the energy of the particle $x \in \Z$ is given by
\begin{align}
e_{x}(t) := \frac{1}{2} p_{x}^{2}(t) - \frac{1}{16} & \int_{\T^2}  dk dk^{'} ~ e^{2 \pi \sqrt{-1} k x} F(k-k^{'},k^{'}) \\
&\times \Big(\hat{\psi}(k^{'},t) + \hat{\psi}(-k^{'},t)^{*}\Big) \Big(\hat{\psi}(k-k^{'},t) + \hat{\psi}(k^{'}-k,t)^{*}\Big). \label{def:energybywave}
\end{align}
Note that from Lemma \ref{lem:asympofa} and \ref{lem:asympofaco} we see that $F(k,k')$ is uniformly bounded, and thus $e_{x}(t)$ is well-defined for any $t \ge 0$.

\section{Main Results}

\subsection{Superballistic scaling limit}

Let $(\mu_{\e})_{0 < \e < 1}$ be a family of probability measures on $\mathbb{L}^{2}(\T)$. We define $ \{ \hat{\psi}(k,t) = \hat{\psi}_{\e}(k,t) ; k \in \T , t \in \R_{\ge 0} \}$ as the solution of \eqref{def:dynamicsofpsi} with initial condition $\mu_{\e}$. Denote by $\mathbb{E}_{\e}$ the expectation with respect to the dynamics which starts from $\mu_{\e}$. We assume that $\{ ( p_{x}(t), l_{x}(t) ) = ( p_{x}(\hat{\psi})(t), l_{x}(\hat{\psi})(t) ) ; x \in \Z, t \ge 0 \}$ satisfies the following initial condition, called \textit{phononic (or mechanical) type condition} (cf. \cite[Definition 2.3]{KO2}): there exists some $\bar{p}_{0} , \bar{l}_{0} \in C^{\infty}_{0}(\R)$ such that 
\begin{align}\label{ass:phononic}
\lim_{\e \to 0} \e \sum_{x \in \Z} E_{\mu_{\e}} \Big[|p_{x} - \bar{p}_{0}(\e x)|^{2} + | l_{x} - \bar{l}_{0}(\e x)|^{2} \Big] = 0.
\end{align}
Note that \eqref{ass:phononic} is equivalent to the following condition
\begin{align}\label{ass:equiassofthm3}
&\lim_{\e \to 0} \int_{\frac{\T}{\e}} dk ~ \mathbb{E}_{\e}\Big[|\e \hat{p}(\e k,0) - \tilde{\bar{p}}_{0}(k)|^{2} + |\e \hat{l}(\e k,0) - \tilde{\bar{l}}_{0}(k)|^{2} \Big] = 0.
\end{align}
We check the equivalence in Appendix \ref{app:initialequi}. In addition, from \eqref{eq:energyconservation} and \eqref{ass:phononic} we have
\begin{align}
\int_{\T} dk ~ \mathbb{E}_{\e}[|\hat{p}(k,t)|^2 + |\hat{l}(k,t)|^2] = \int_{\T} dk ~ E_{\mu_\e}[|\hat{p}(k)|^2 + |\hat{l}(k)|^2],
\end{align}
and
\begin{align}
\varlimsup_{\e \to 0} \int_{\T} dk ~ \e \mathbb{E}_{\e}[|\hat{p}(k,t)|^2 + |\hat{l}(k,t)|^2] = \int_{\R} d\xi ~ |\tilde{\bar{p}}_{0}(\xi)|^2 +  |\tilde{\bar{l}}_{0}(\xi)|^2.
\end{align}

Now we state one of our main results. Denote by $\big(\bar{p}(y,t),\bar{l}(y,t) \big), y \in \R, t \ge 0$ the solution of \eqref{eq:superballistic} with initial condition $\big(\bar{p}_{0}(y), \bar{l}_{0}(y) \big), y \in \R$. Let us recall that the time scaling $j_{\theta}(\e)$ is defined in \eqref{scaling:phononic}.

\begin{theorem}\label{thm:superballistic1}
Suppose that $\theta > 1$, $\gamma \ge 0$ and \eqref{ass:phononic}.  For any $t \ge 0$, we have
\begin{align}
\lim_{\e \to 0} \int_{\frac{\T}{\e}} dk ~ \mathbb{E}_{\e} \Big[ |\e \hat{p}(\e k,\frac{t}{j_{\theta}(\e)}) - \tilde{\bar{p}}(k,t)|^2 + |\e \hat{l}(\e k,\frac{t}{j_{\theta}(\e)}) - \tilde{\bar{l}}(k,t)|^2  \Big] = 0,
\end{align}
and consequently we have
\begin{align}
\lim_{\e \to 0} \e \sum_{x \in \Z} \mathbb{E}_{\e} \Big[| p_{x}(\frac{t}{j_{\theta}(\e)}) - \bar{p}(\e x,t) |^{2} + | l_{x}(\frac{t}{j_{\theta}(\e)}) - \bar{l}(\e x,t) |^{2} \Big] = 0.
\end{align}
\end{theorem}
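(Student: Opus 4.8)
The plan is to pass to the Fourier side and to the normal modes $\hat{f}^{\pm}(k,t):=\hat{p}(k,t)\pm\hat{l}(k,t)$. A direct computation from \eqref{def:dynamicsofpsi}, \eqref{def:pqfromwavek} and the definition of $\hat{l}$, using the identity $\hat{\psi}(m,t)-\hat{\psi}^{*}(-m,t)=2\sqrt{-1}\,\hat{p}(m,t)$, shows that the deterministic part of the evolution of $\hat{f}^{\pm}$ is the pure rotation $\partial_{t}\hat{f}^{\pm}=\pm\sqrt{-1}\operatorname{sgn}(k)\omega(k)\hat{f}^{\pm}$, that the dissipative drift equals $-2\gamma R(k)\hat{p}=-\gamma R(k)(\hat{f}^{+}+\hat{f}^{-})$ (so it acts only through the momentum), and that the stochastic forcing has, by It\^{o}'s isometry, local quadratic variation governed by $|r(k,k')|^{2}$ and by $\mathbb{E}_{\e}|\hat{p}(k-k',t)|^{2}$. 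On the macroscopic side, writing $\bar{F}^{\pm}:=\bar{p}\pm\bar{l}$, the system \eqref{eq:superballistic} diagonalises to $\partial_{t}\bar{F}^{\pm}=\pm\sqrt{C_{1}(\theta)}\,\mathcal{D}_{\theta}\bar{F}^{\pm}$, whence $\widetilde{\bar{F}}^{\pm}(k,t)=e^{\pm\sqrt{-1}\operatorname{sgn}(k)\Omega_{\theta}(k)t}\,\widetilde{\bar{F}}^{\pm}(k,0)$, where $\sqrt{-1}\operatorname{sgn}(k)\Omega_{\theta}(k)$ is the Fourier multiplier of $\sqrt{C_{1}(\theta)}\,\mathcal{D}_{\theta}$. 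Since $\hat{p}=\tfrac12(\hat{f}^{+}+\hat{f}^{-})$, $\hat{l}=\tfrac12(\hat{f}^{+}-\hat{f}^{-})$ and likewise for the limit, the parallelogram identity reduces the theorem to proving $\lim_{\e\to0}\sum_{\pm}\int_{\frac{\T}{\e}}dk\,\mathbb{E}_{\e}\big|\e\hat{f}^{\pm}(\e k,\tfrac{t}{j_{\theta}(\e)})-\widetilde{\bar{F}}^{\pm}(k,t)\big|^{2}=0$; the second displayed limit then follows from the Fourier--Plancherel equivalence established in Appendix \ref{app:initialequi}, applied at time $t$.

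The essential analytic input is the low-frequency asymptotics of Appendix \ref{app:asympofa} (Lemmas \ref{lem:asympofa}--\ref{lem:asympofaco}): $\omega(\e k)/j_{\theta}(\e)\to\Omega_{\theta}(k)$ pointwise as $\e\to0$ --- this is exactly how $C_{1}(\theta)$ and $j_{\theta}$ are matched, with the logarithmic correction entering at $\theta=3$ --- together with $R(\e k)=O(\sin^{2}(\pi\e k))=O((\e k)^{2})$ and $|r(\e k,k')|^{2}=O\big(\sin^{2}(\pi\e k)\sin^{2}(\pi(\e k-k'))\big)$. Setting $\hat{\phi}^{\pm}(k,t):=\e\hat{f}^{\pm}(\e k,\tfrac{t}{j_{\theta}(\e)})-\widetilde{\bar{F}}^{\pm}(k,t)$ and $\mathcal{E}_{\e}(t):=\sum_{\pm}\int_{\frac{\T}{\e}}\mathbb{E}_{\e}|\hat{\phi}^{\pm}|^{2}\,dk$, I would apply It\^{o}'s formula and run a Gr\"onwall argument \emph{in the macroscopic time} $t$, which stays bounded (in contrast with the microscopic time $t/j_{\theta}(\e)\to\infty$, where any naive estimate blows up). The rotation is skew and drops out; writing $\e\hat{f}^{\pm}=\hat{\phi}^{\pm}+\widetilde{\bar{F}}^{\pm}$ turns the dissipative drift into a favourable term $-\tfrac{2\gamma R(\e k)}{j_{\theta}(\e)}|\hat{\phi}^{+}+\hat{\phi}^{-}|^{2}\le0$ plus a source involving only the Schwartz functions $\widetilde{\bar{F}}^{\pm}$. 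One is thus left with $\tfrac{d}{dt}\mathcal{E}_{\e}\le C\mathcal{E}_{\e}+\mathrm{Err}_{\mathrm{ph}}+\mathrm{Err}_{\mathrm{diss}}+\mathrm{Err}_{\mathrm{noise}}$, with $C$ an absolute constant.

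The first two error terms are routine. By the dispersion asymptotics and dominated convergence, $\mathrm{Err}_{\mathrm{ph}}(t)=\sum_{\pm}\int|\omega(\e k)/j_{\theta}(\e)-\Omega_{\theta}(k)|^{2}|\widetilde{\bar{F}}^{\pm}(k,t)|^{2}\,dk\to0$, using a uniform bound on $\omega(\e k)/j_{\theta}(\e)$ on the support of the limit. Splitting the frequency domain at $|\e k|=\delta$ and using $R(\e k)^{2}\lesssim(\e k)^{4}$ there together with the rapid decay of $\widetilde{\bar{F}}^{\pm}$ beyond $\delta/\e$, one gets $\mathrm{Err}_{\mathrm{diss}}\lesssim \gamma^{2}\e^{4}j_{\theta}(\e)^{-2}\int k^{4}(|\widetilde{\bar{F}}^{+}|^{2}+|\widetilde{\bar{F}}^{-}|^{2})\,dk+o(1)\to0$, because $\e^{4}/j_{\theta}(\e)^{2}\to0$ for every $\theta>1$ in range. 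Finally $\mathcal{E}_{\e}(0)\to0$ is precisely \eqref{ass:equiassofthm3}. Hence everything reduces to $\int_{0}^{t}\mathrm{Err}_{\mathrm{noise}}\to0$.

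The control of $\mathrm{Err}_{\mathrm{noise}}$ is the main obstacle. By It\^{o}'s isometry, the identity $\hat{\psi}(m)-\hat{\psi}^{*}(-m)=2\sqrt{-1}\hat{p}(m)$, the scattering bound, and the change of variables $k''=\e k-k'$, its time integral is bounded by $C\gamma\,\e\int_{0}^{t/j_{\theta}(\e)}Q(s)\,ds$, where $Q(s):=\int_{\T}\sin^{2}(\pi k)\,\mathbb{E}_{\e}|\hat{p}(k,s)|^{2}\,dk$ is a weighted momentum energy (I use $\int_{\frac{\T}{\e}}\sin^{2}(\pi\e k)\,dk=\tfrac{1}{2\e}$). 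Thus the proof closes once one shows the concentration estimate $\sup_{0\le s\le t/j_{\theta}(\e)}Q(s)\lesssim\e$, which gives $\e\int_{0}^{t/j_{\theta}(\e)}Q\lesssim\e^{2}t/j_{\theta}(\e)\to0$, again by $\e^{2}/j_{\theta}(\e)\to0$. At $s=0$ the bound $Q(0)=O(\e)$ follows directly from the phononic concentration \eqref{ass:phononic}. Propagating it to the whole superballistic interval is the genuinely delicate step, and it is where the dissipative structure is indispensable: the noise injection into any mode is sourced by the momentum $\hat{p}$, which is \emph{directly} damped at rate $2\gamma R(k)\ge4\gamma\sin^{2}(\pi k)$, while the kernel $r(k,k')=O(\sin(\pi k))$ forbids fast transfer out of the low-frequency window; combined with the Hamiltonian rotation that feeds the tension energy back into the damped momentum, this keeps the high-frequency energy at its equilibrium level $O(\e)$ and prevents energy from escaping the macroscopic window faster than $O(\gamma\e)$. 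I expect establishing this weighted-energy bound rigorously --- rather than the skeleton Gr\"onwall, which is standard --- to require the most care. Once $Q(s)\lesssim\e$ is in hand, collecting the three error terms yields $\mathcal{E}_{\e}(t)\to0$, and Appendix \ref{app:initialequi} delivers both assertions of the theorem.
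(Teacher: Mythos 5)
Your skeleton (diagonalising into normal modes, Gr\"onwall in macroscopic time, splitting the error into dispersion, dissipation and noise contributions) is coherent, but the proof has a genuine gap exactly where you flag it: the weighted-energy propagation bound $\sup_{0\le s\le t/j_{\theta}(\e)}Q(s)\lesssim \e$ with $Q(s)=\int_{\T}\sin^{2}(\pi k)\,\mathbb{E}_{\e}|\hat{p}(k,s)|^{2}dk$. You verify $Q(0)=O(\e)$ from \eqref{ass:phononic}, but propagating this over the whole microscopic window $[0,t/j_{\theta}(\e)]$ is not a routine afterthought. The natural differential inequality $\tfrac{d}{ds}Q\lesssim \gamma Q$ gives $e^{\gamma t/j_{\theta}(\e)}$, which is useless; a bootstrap in which energy scattered out of the low-frequency window (a fraction $O(\gamma\e^{2}/j_{\theta}(\e))$ of the total $O(\e^{-1})$ energy per mode) lands at frequencies where the weight is $O(1)$ already produces $Q\sim\e^{(3-\theta)/2}\gg\e$ for $2<\theta<3$ after one iteration, and the second iteration degrades further; and your qualitative appeal to the damping is misleading because the momentum-exchange noise conserves $\sum_{x}p_{x}^{2}$, so the drift $-\gamma R(k)\hat{p}$ is exactly compensated by noise injection and does not actually remove energy from high modes. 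Controlling how the two-point function spreads in frequency over superballistic times is essentially the Wigner-function analysis of \cite{JKO,S2}, i.e.\ it is the hard part of the problem, not a technical remainder. Note also that your estimate must degenerate gracefully as $\gamma\to0$, since the theorem is claimed for all $\gamma\ge0$.

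The paper avoids this difficulty entirely by a different decomposition. It first proves convergence of the \emph{mean} dynamics $\hat{\bar{p}}_{\e}(k,t)=\e\,\mathbb{E}_{\e}[\hat{p}(\e k,t/j_{\theta}(\e))]$ (Proposition \ref{prop:meandynamics}): taking expectations in \eqref{def:dynamicsofpsi} kills the martingale term, so the mean satisfies the closed deterministic linear ODE \eqref{evoofmeanont} with matrix $A_{\e}(k)$, and the whole argument reduces to comparing $\exp(A_{\e}(k)t)$ with $\exp(A(k)t)$ via Duhamel, the uniform bounds of Lemma \ref{lem:expnorm}, and the asymptotics of Appendix \ref{app:asympofa} --- no noise estimate is needed at any point. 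The fluctuation $\hat{P}_{\e}=\e\hat{p}(\e k,\cdot)-\hat{\bar{p}}_{\e}$ is then shown to vanish in $\mathbb{L}^{2}$ not by It\^o calculus but by an energy-bookkeeping argument: the exact conservation law \eqref{eq:energyconservation} fixes $\int\mathbb{E}_{\e}[|\e\hat{p}|^{2}+|\e\hat{l}|^{2}]$ at its initial value, the cross terms drop because $\hat{P}_{\e}$ is centered, and the limiting evolution $\exp(A(k)t)$ is unitary, so in the limit the mean already carries the entire $\mathbb{L}^{2}$ mass and the variance is forced to zero. If you want to salvage your route, the cleanest fix is to adopt exactly this two-step structure: prove your Gr\"onwall statement for the means only (where $\mathrm{Err}_{\mathrm{noise}}$ is absent), and replace the weighted-energy bound by the conservation-plus-unitarity argument for the fluctuations.
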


From Theorem \ref{thm:superballistic1}, \eqref{eq:randl} and Lemma \ref{lem:asympofa} we obtain the weak convergence of the empirical measure and the scaling limit of the tension $r_x(t), x \in \Z, t \ge 0$. In addition, we see that if $\theta > 3$ then $r_x$ and $l_x$ coincide macroscopically.

\begin{corollary}\label{cor:weakconv}
Suppose that $\theta > 1, \gamma \ge 0$ and \eqref{ass:phononic}. For any $t \ge 0$ and $J_{i} \in C^{\infty}_0(\R), i =1,2$ we have \eqref{lim:weakconvofpl}. In addition, we have 
\begin{align}
&\varlimsup_{\e \to 0} \int_{\frac{\T}{\e}} dk ~ \mathbb{E}_{\e} \Big[ |\e \hat{r}(\e k,\frac{t}{j_{\theta}(\e)})|^{2}  \Big] = 0 & & 1 < \theta
\le 3, \\
&\lim_{\e \to 0} \int_{\frac{\T}{\e}} dk ~ \mathbb{E}_{\e} \Big[ |\e \hat{r}(\e k,\frac{t}{j_{\theta}(\e)}) - \frac{\e}{\sqrt{C_{1}(\theta)}}  \hat{l}(\e k,\frac{t}{j_{\theta}(\e)})|^2  \Big] = 0 & & \theta > 3.
\end{align}
\end{corollary}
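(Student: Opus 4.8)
The plan is to derive both assertions from the $\mathbb{L}^{2}$-convergence established in Theorem \ref{thm:superballistic1}, combined with the identity \eqref{eq:randl} relating $\hat r$ and $\hat l$ and the small-$k$ asymptotics of $\omega(k)=\sqrt{\hat\a(k)}$ recorded in Lemma \ref{lem:asympofa}. For the weak convergence \eqref{lim:weakconvofpl} I would first pass to Fourier variables. Using the discrete Parseval identity and the conventions of Section \ref{sec:notation}, together with the fact that $\sum_{x}e^{-2\pi\sqrt{-1}kx}J_{1}(\e x)=\e^{-1}\widetilde{J_{1}}(k/\e)+o(\e^{-1})$ (the aliasing error being negligible since $\widetilde{J_{1}}$ is Schwartz), one gets
\[
\e\sum_{x\in\Z}\mathbb{E}_{\e}\Big[p_{x}(\tfrac{t}{j_{\theta}(\e)})\Big]J_{1}(\e x)=\int_{\frac{\T}{\e}}dk~\mathbb{E}_{\e}\Big[\e\hat p(\e k,\tfrac{t}{j_{\theta}(\e)})\Big]\,\overline{\widetilde{J_{1}}(k)}+o(1).
\]
Then I split the difference of this integral and $\int_{\R}\tilde{\bar p}(k,t)\overline{\widetilde{J_{1}}(k)}\,dk$ into the contribution over $\T/\e$, which by Jensen and Cauchy--Schwarz is bounded by $\big(\int_{\T/\e}\mathbb{E}_{\e}[|\e\hat p(\e k,\tfrac{t}{j_{\theta}(\e)})-\tilde{\bar p}(k,t)|^{2}]dk\big)^{1/2}\|\widetilde{J_{1}}\|_{\mathbb{L}^{2}(\R)}\to0$ by Theorem \ref{thm:superballistic1}, and the tail over $\R\setminus(\T/\e)$, which tends to $0$ by dominated convergence because $\tilde{\bar p}(\cdot,t)\overline{\widetilde{J_{1}}}\in L^{1}(\R)$. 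Parseval on $\R$ identifies the limit with $\int_{\R}\bar p(y,t)J_{1}(y)\,dy$; running the same argument with $\hat l$ and $J_{2}$ gives \eqref{lim:weakconvofpl}.

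For the tension estimates I would use \eqref{eq:randl} to write, at time $s=t/j_{\theta}(\e)$,
\[
\e\hat r(\e k,s)=g_{\e}(k)\,\e\hat l(\e k,s),\qquad g_{\e}(k):=\frac{1-e^{-2\pi\sqrt{-1}\e k}}{\sqrt{-1}\operatorname{sgn}(\e k)\,\omega(\e k)},\quad |g_{\e}(k)|=\frac{2|\sin(\pi\e k)|}{\omega(\e k)}.
\]
Substituting $m=\e k\in\T$ shows $\sup_{k\in\T/\e}|g_{\e}(k)|=\sup_{m\in\T}\tfrac{2|\sin(\pi m)|}{\omega(m)}$, which is a finite constant independent of $\e$ (the ratio is continuous on the compact torus, with a finite limit at $m=0$, because $\hat\a(m)>0$ for $m\neq0$). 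By Lemma \ref{lem:asympofa} the small-$k$ behavior of $\omega$ is, up to a positive constant, $|k|^{(\theta-1)/2}$ for $1<\theta<3$, $|k|\sqrt{\log(1/|k|)}$ for $\theta=3$, and $|k|$ for $\theta>3$. Hence, for fixed $k$ as $\e\to0$, one has $|g_{\e}(k)|\sim|\e k|^{(3-\theta)/2}\to0$ when $1<\theta<3$, $|g_{\e}(k)|\sim(\log(1/|\e k|))^{-1/2}\to0$ when $\theta=3$, and $g_{\e}(k)\to1/\sqrt{C_{1}(\theta)}$ when $\theta>3$, the last limit being fixed by the normalization of $C_{1}(\theta)$.

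Writing $c_{\theta}:=0$ for $1<\theta\le3$ and $c_{\theta}:=1/\sqrt{C_{1}(\theta)}$ for $\theta>3$, I would then estimate the $\mathbb{L}^{2}(\T/\e)$-norm of $(g_{\e}(k)-c_{\theta})\e\hat l(\e k,s)$ through the decomposition
\[
(g_{\e}(k)-c_{\theta})\e\hat l(\e k,s)=(g_{\e}(k)-c_{\theta})\tilde{\bar l}(k,t)+(g_{\e}(k)-c_{\theta})\big(\e\hat l(\e k,s)-\tilde{\bar l}(k,t)\big).
\]
The second term has $\mathbb{L}^{2}$-norm at most $\sup_{k}|g_{\e}(k)-c_{\theta}|\cdot\big(\int_{\T/\e}\mathbb{E}_{\e}[|\e\hat l(\e k,s)-\tilde{\bar l}(k,t)|^{2}]dk\big)^{1/2}$, which vanishes by the uniform bound just obtained and Theorem \ref{thm:superballistic1}. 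The first term is deterministic and $\int_{\T/\e}|g_{\e}(k)-c_{\theta}|^{2}|\tilde{\bar l}(k,t)|^{2}dk\to0$ by dominated convergence, using the pointwise limit $g_{\e}\to c_{\theta}$, the uniform bound, and $\tilde{\bar l}(\cdot,t)\in\mathbb{L}^{2}(\R)$ (which holds because the superballistic evolution conserves $\int(|\tilde{\bar p}|^{2}+|\tilde{\bar l}|^{2})$). Taking $c_{\theta}=0$ gives $\varlimsup_{\e\to0}\int_{\T/\e}\mathbb{E}_{\e}[|\e\hat r(\e k,s)|^{2}]dk=0$ for $1<\theta\le3$ (the quantity being nonnegative), and $c_{\theta}=1/\sqrt{C_{1}(\theta)}$ gives the remaining estimate for $\theta>3$.

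The main obstacle is the multiplier analysis of the preceding paragraph: one must extract from Lemma \ref{lem:asympofa} both the $\e$-uniform bound on $g_{\e}$ and its correct pointwise limit in each of the three regimes. The delicate point is the threshold $\theta=3$, where the logarithmic correction in $\omega$ is exactly what makes $g_{\e}\to0$ (so that $\hat r$ is macroscopically negligible together with the $1<\theta<3$ case), whereas for $\theta>3$ the linear behavior of $\omega$ produces the finite limit $1/\sqrt{C_{1}(\theta)}$, matching $r$ and $l$ up to that constant. Once these asymptotics are in hand, the rest is the routine combination of the $\mathbb{L}^{2}$-convergence from Theorem \ref{thm:superballistic1} with a bounded Fourier multiplier and dominated convergence.
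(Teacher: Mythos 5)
Your proposal is correct and follows exactly the route the paper indicates (the paper gives no detailed proof, only the remark that the corollary follows from Theorem \ref{thm:superballistic1}, the identity \eqref{eq:randl}, and Lemma \ref{lem:asympofa}): you combine the $\mathbb{L}^{2}$-convergence of $\e\hat{l}(\e k,\cdot)$ with the bounded Fourier multiplier $g_{\e}$ coming from \eqref{eq:randl}, whose uniform bound and pointwise limits ($0$ for $1<\theta\le 3$, $1/\sqrt{C_{1}(\theta)}$ for $\theta>3$) are read off from the asymptotics of $\hat{\a}$. The multiplier analysis, including the logarithmic correction at $\theta=3$, and the Parseval/aliasing argument for \eqref{lim:weakconvofpl} are carried out correctly.
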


\begin{remark}
The fractional Laplacians and the Hilbert transform are Fourier multipliers:
\begin{align}
\widetilde{\Big(- (- \Delta)^{a} f \Big)}(\xi) &= - |2 \pi \xi|^{2a} \tilde{f}(\xi) \quad 0 < a \le 1, \\
\widetilde{(\mathcal{H}_{\R} f)}(\xi) &=  -\sqrt{-1} \operatorname{sgn}(\xi) \tilde{f}(\xi).
\end{align}
By using the above properties we easily see that 
\begin{align}
&\begin{dcases} \partial_t \tilde{\bar{p}}(\xi,t) =\sqrt{C_{1}(\theta)} \sqrt{-1} \operatorname{sgn}(\xi) |2 \pi \xi|^{\frac{\theta - 1}{2}} \tilde{\bar{l}}(\xi,t) \\
\partial_t \tilde{\bar{l}}(\xi,t) = \sqrt{C_{1}(\theta)} \sqrt{-1} \operatorname{sgn}(\xi) |2 \pi \xi|^{\frac{\theta - 1}{2}} \tilde{\bar{p}}(\xi,t) \end{dcases} & &1 <  \theta \le 3, \\
&\begin{dcases} \partial_t \tilde{\bar{p}}(\xi,t) = 2 \pi \sqrt{C_{1}(\theta)} \sqrt{-1} \xi \tilde{\bar{l}}(\xi,t) \\
\partial_t \tilde{\bar{l}}(\xi,t) =  2 \pi \sqrt{C_{1}(\theta)} \sqrt{-1} \xi \tilde{\bar{p}}(\xi,t) \end{dcases} & & \theta > 3,
\end{align}
and
\begin{align}\label{eq:superballistic2}
\partial_t^2 \tilde{\bar{p}}(\xi,t) = \begin{dcases} - C_{1}(\theta) |2 \pi \xi|^{\theta - 1} \tilde{\bar{p}}(\xi,t) & 1 < \theta \le 3, \\
- C_{1}(\theta) |2 \pi \xi|^{2} \tilde{\bar{p}}(\xi,t) & \theta > 3. \end{dcases}
\end{align}
Note that from \eqref{eq:superballistic2} we can derive \eqref{eq:superballistic1} by using the inverse Fourier transform.
\end{remark}

Next we consider the scaling limit of the energy. Recall that $\bar{e}(y,t)$ is defined in \eqref{def:longrangeenergy}.

\begin{theorem}\label{thm:superballisticenergy}
Suppose that $\theta > 1, \gamma \ge 0$ and \eqref{ass:phononic}. For any $t \ge 0$ and $J \in C^{\infty}_0(\R)$, we have 
\begin{align}\label{lim:superballisticenergy}
\lim_{\e \to 0} \e \sum_{x \in \Z} \mathbb{E}_{\e}\Big[e_{x}(\frac{t}{j_{\theta}(\e)}) \Big] J(\e x)  = \int_{\R} dy ~ \bar{e}(y,t) J(y).
\end{align} 
\end{theorem}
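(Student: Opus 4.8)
The plan is to split the energy \eqref{def:energybywave} into its kinetic and potential parts and treat them separately, reducing everything to the already-established convergence of Theorem \ref{thm:superballistic1}. For the kinetic part I would use that, writing $\bar{p}_{\e,x} := \bar{p}(\e x,t)$,
\begin{align}
\e \sum_{x \in \Z} \mathbb{E}_{\e}\Big[\tfrac{1}{2}p_{x}^{2}(\tfrac{t}{j_{\theta}(\e)})\Big]J(\e x) = \e \sum_{x} \tfrac{1}{2}\bar{p}_{\e,x}^{2}\,J(\e x) + \e \sum_{x} \mathbb{E}_{\e}\Big[\tfrac{1}{2}\big(p_{x}(\tfrac{t}{j_{\theta}(\e)})-\bar{p}_{\e,x}\big)^{2}\Big]J(\e x) + (\text{cross}),
\end{align}
where the first sum is a Riemann sum converging to $\int_{\R}\tfrac{1}{2}\bar{p}^{2}(y,t)J(y)\,dy$, the second tends to $0$ by Theorem \ref{thm:superballistic1}, and the cross term is $O\big((\e\sum_{x}\bar{p}_{\e,x}^{2}J^{2})^{1/2}(\e\sum_{x}\mathbb{E}_{\e}[(p_{x}-\bar{p}_{\e,x})^{2}])^{1/2}\big)\to 0$ by Cauchy--Schwarz and Jensen. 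Since the kinetic contribution to $\bar{e}(y,t)$ is $\tfrac{1}{2}\bar{p}^{2}$ for all $\theta>1$, this matches \eqref{def:longrangeenergy}.

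The potential part is the substance of the proof. Using the definition of $\hat{l}(k,t)$ one has the identity $\hat{\psi}(k,t)+\hat{\psi}(-k,t)^{*} = -2\sqrt{-1}\operatorname{sgn}(k)\hat{l}(k,t)$, so the double integral in \eqref{def:energybywave} rewrites as
\begin{align}
-\tfrac{1}{16}\int_{\T^{2}}dk\,dk'\,e^{2\pi\sqrt{-1}kx}F(k-k',k')\,(\cdots)(\cdots) = \tfrac{1}{4}\int_{\T^{2}}dk\,dk'\,e^{2\pi\sqrt{-1}kx}F(k-k',k')\operatorname{sgn}(k')\operatorname{sgn}(k-k')\hat{l}(k')\hat{l}(k-k').
\end{align}
After multiplying by $\e J(\e x)$, summing in $x$ (which produces $\e\sum_{x}J(\e x)e^{2\pi\sqrt{-1}kx}\to \widetilde{J}(-k/\cdots)$ forcing $k$ to scale like $\e$), and changing variables $k=\e(u+v)$, $k'=\e u$, the Jacobian $\e^{2}$ is exactly absorbed by the factor $\e^{-2}$ coming from $\hat{l}(\e u)\hat{l}(\e v)=\e^{-2}(\e\hat{l}(\e u))(\e\hat{l}(\e v))$. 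I would then invoke the asymptotics of $\hat{\a}$, $\omega$ and $F$ near $k=0$ from Lemma \ref{lem:asympofa} and Lemma \ref{lem:asympofaco}: for $1<\theta<3$ the factors of $\e^{\theta-1}$ cancel and
\begin{align}
F(\e v,\e u)\operatorname{sgn}(u)\operatorname{sgn}(v) \longrightarrow \frac{|u+v|^{\theta-1}-|u|^{\theta-1}-|v|^{\theta-1}}{|u|^{\frac{\theta-1}{2}}|v|^{\frac{\theta-1}{2}}}\operatorname{sgn}(u)\operatorname{sgn}(v),
\end{align}
while for $\theta\ge 3$ (including the logarithmic case $\theta=3$, where the $\log(1/\e)$ factors in numerator and denominator cancel) the same ratio converges to the constant $2$.

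The main obstacle is the rigorous passage to the limit in this bilinear expression. Here I would use that $F$ is uniformly bounded, that $\widetilde{J}\in\mathbb{L}^{1}\cap\mathbb{L}^{2}$ and $\tilde{\bar{l}}(\cdot,t)\in\mathbb{L}^{2}$, and that, by \eqref{ass:equiassofthm3} applied at the scaled time, $\delta_{\e}:=\e\hat{l}(\e\,\cdot\,,t/j_{\theta}(\e))-\tilde{\bar{l}}(\cdot,t)\to 0$ strongly in $\mathbb{L}^{2}$. Writing $\mathbb{E}_{\e}[\hat{l}_{\e}(u)\hat{l}_{\e}(v)]-\tilde{\bar{l}}(u)\tilde{\bar{l}}(v)$ as $\mathbb{E}_{\e}[\delta_{\e}(u)\delta_{\e}(v)]+\tilde{\bar{l}}(v)\mathbb{E}_{\e}[\delta_{\e}(u)]+\tilde{\bar{l}}(u)\mathbb{E}_{\e}[\delta_{\e}(v)]$ and changing to the variable $w=u+v$, the decay of $\widetilde{J}(-w)$ together with Cauchy--Schwarz in $u$ bounds the first term by $\|\widetilde{J}\|_{\mathbb{L}^{1}}\int\mathbb{E}_{\e}[|\delta_{\e}|^{2}]$ and the remaining two by $\|\widetilde{J}\ast|\tilde{\bar{l}}|\|_{\mathbb{L}^{2}}(\int\mathbb{E}_{\e}[|\delta_{\e}|^{2}])^{1/2}$, all vanishing; this is precisely where one must control both the high-frequency region (no mass escapes, by Theorem \ref{thm:superballistic1}) and the near-diagonal behavior of the kernel. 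Once $\mathbb{E}_{\e}[\hat{l}_{\e}\hat{l}_{\e}]$ is replaced by the deterministic $\tilde{\bar{l}}\tilde{\bar{l}}$, ordinary dominated convergence applies to $F(\e v,\e u)\to$ its limit.

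Finally I would identify the resulting deterministic integral with $\int_{\R}\bar{e}(y,t)J(y)\,dy$. For $\theta\ge 3$ the limiting kernel is the constant $2$, so the potential limit is $\tfrac{1}{2}\int_{\R}\widetilde{J}(-\xi)(\tilde{\bar{l}}\ast\tilde{\bar{l}})(\xi)\,d\xi=\tfrac{1}{2}\int_{\R}\bar{l}^{2}J$ by Plancherel, matching \eqref{def:longrangeenergy}. For $1<\theta<3$, setting $g:=\mathcal{D}_{\theta}^{-1}\bar{l}$ whose multiplier is $\tilde{g}(\xi)=-\sqrt{-1}\operatorname{sgn}(\xi)|2\pi\xi|^{-\frac{\theta-1}{2}}\tilde{\bar{l}}(\xi)$, a direct computation of $\widetilde{L_{\theta}}$ from \eqref{def:potentialenergy} after symmetrizing in $\eta\leftrightarrow\xi-\eta$ gives
\begin{align}
\widetilde{L_{\theta}}(\xi) = \int_{\R}d\eta\,\big[|\xi|^{\theta-1}-|\eta|^{\theta-1}-|\xi-\eta|^{\theta-1}\big]\frac{\operatorname{sgn}(\eta)\operatorname{sgn}(\xi-\eta)}{|\eta|^{\frac{\theta-1}{2}}|\xi-\eta|^{\frac{\theta-1}{2}}}\tilde{\bar{l}}(\eta)\tilde{\bar{l}}(\xi-\eta),
\end{align}
which is exactly the inner integral appearing in the limit; hence the potential limit equals $\tfrac{1}{4}\int_{\R}\widetilde{J}(-\xi)\widetilde{L_{\theta}}(\xi)\,d\xi=\tfrac{1}{4}\int_{\R}L_{\theta}(y,t)J(y)\,dy$ by Parseval, again matching \eqref{def:longrangeenergy}. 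Adding the kinetic and potential limits yields \eqref{lim:superballisticenergy}.
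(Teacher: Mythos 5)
Your proposal is correct and follows essentially the same route as the paper: reduce to the potential part (the kinetic part being immediate from Theorem \ref{thm:superballistic1}), rewrite the potential energy via $\hat{\psi}(k)+\hat{\psi}(-k)^{*} = -2\sqrt{-1}\operatorname{sgn}(k)\hat{l}(k)$ and Poisson summation, replace $\e^{2}\mathbb{E}_{\e}[\hat{l}(\e k)\hat{l}(-\e k-\e\xi)]$ by $\tilde{\bar{l}}(k,t)\tilde{\bar{l}}(-k-\xi,t)$ using the $\mathbb{L}^{2}$ convergence of Theorem \ref{thm:superballistic1} and the uniform boundedness of $F$, and pass to the limit in the kernel by dominated convergence using Lemmas \ref{lem:asympofa} and \ref{lem:asympofaco}. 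The only differences are that you spell out steps the paper leaves implicit (the $\delta_{\e}$ replacement bounds and the Fourier identification of $L_{\theta}$ with the limiting kernel), which is a matter of detail rather than of method.
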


\begin{remark}
From Theorem \ref{thm:superballisticenergy} we see that when $\theta \ge 3$ the quantities $e_x(t)$ and $|\psi_x(t)|^2$ coincide macroscopically, that is, for any $t \ge 0$ and $J \in C^{\infty}_0(\R)$ we have
\begin{align}
\varlimsup_{\e \to 0} \Big| \e \sum_{x \in \Z} \mathbb{E}_{\e} \Big[(e_x(\frac{t}{j_{\theta}(\e)}) - |\psi_x(\frac{t}{j_{\theta}(\e)})|^2 ) \Big]J(\e x) \Big| = 0. \label{lim:equienergywave}
\end{align}
However, if $1 <  \theta < 3$, then \eqref{lim:equienergywave} does not hold. On the other hand, in \cite[Proposition 4.1]{S2}, we show that under the thermal type condition, an analogue of \eqref{lim:equienergywave} holds for $\theta > 2$. This difference is caused by initial conditions and from the above we see that $|\psi_x(t)|^2$ may not coincide the phononic energy macroscopically.
\end{remark}

\subsection{Fluctuations of normal modes}\label{sec:flucnormal}

In this subsection we consider the normal modes of the system \eqref{eq:superballistic} $f^{\iota}(y,t), ~ \iota = \pm$, an analogue of the Riemann invariants, which are defined as
\begin{align}
f^{\pm}(y,t) &:= \bar{p}(y,t) \pm \bar{l}(y,t).
\end{align}    
The time evolution of the normal modes is given by
\begin{align}
\partial_{t} f(y,t) = \pm \sqrt{C_{1}(\theta)} (\mathcal{D}_{\theta} f^{\pm})(y,t),
\end{align}
that is, $f^{\pm}(y,t)$ are the eigenvectors and $\pm \sqrt{C_{1}(\theta)} \mathcal{D}_{\theta}$ are the eigenvalues of the system \eqref{eq:superballistic}. 

For the sake of clarity, at first we consider the case $\theta > 4$. Then the characteristics of \eqref{eq:superballistic} are given by straight lines $y \pm \sqrt{C_{1}(\theta)} t$ and thus we have 
\begin{align}
f^{\pm}(y,t) = f^{\pm}_{0}(y \pm \sqrt{C_{1}(\theta)} t)
\end{align}
where $f^{\pm}_{0} := f^{\pm}(y,0)$. This observation motivates us to study fluctuations of \textit{microscopic} normal modes around \textit{macroscopic} characteristics. Considering the above, we introduce the microscopic normal modes $f^{\iota}_{x}(t), ~ x \in \Z, t \ge 0, \iota = \pm$ defined as
\begin{align}
f^{\pm}_{x}(t) := p_{x}(t) \pm l_{x}(t).
\end{align}
From Theorem \ref{thm:superballistic1} we see that 
\begin{align}
\lim_{\e \to 0} \e \sum_{x \in \Z} \mathbb{E}_{\e} [f^{\pm}_{x}(\frac{t}{\e})] J(\e x) = \int_{\R} dy ~ f^{\pm}(y,t) J(y),
\end{align}
for any $\theta > 1$, $t \ge 0$ and $J \in C^{\infty}_{0}(\R)$. To obtain the fluctuations around the characteristics at the diffusive time scaling, we need to recenter the dynamics by shifting the origin to $\pm \sqrt{C_{1}(\theta)} (t/\e)$ because the space is scaled by $\e$. Then we obtain the following scaling limit, which is the special case of Theorem \ref{thm:flucnormal} stated in the end of this subsection.
\begin{corollary}
Assume that $\theta > 4$, $\gamma > 0$ and \eqref{ass:phononic}. Then for any $t \ge 0$ and $J \in C^{\infty}_{0}$ we have
\begin{align}
&\lim_{\e \to 0} \e \sum_{x \in \Z} \mathbb{E}_{\e} [f^{\pm}_{x}(\frac{t}{\e^2})] J(\e x \mp \sqrt{C_{1}(\theta)} \frac{t}{\e}) \\ 
&= \int_{\R} dy \int_{\R} d\xi ~ e^{2 \pi \sqrt{-1} \xi y} e^{- \frac{3\gamma}{2} \xi^2 t} \Big( \tilde{f}^{+}_{0} + \tilde{f}^{-}_{0} \Big)(\xi) J(y). \label{lim:normalwhen4}
\end{align}
\end{corollary}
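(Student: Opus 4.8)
The plan is to reduce the statement to a first--moment computation in Fourier space, where the momentum--exchange noise enters only through the deterministic drift of $\mathbb{E}_{\e}[\hat{\psi}]$, and then to read off the diffusive limit from the explicit $2\times2$ propagator of that drift.

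\emph{Step 1 (reduction to Fourier space).} Since the observable is linear in the field, I would first rewrite everything through $\hat{f}^{\pm}:=\hat{p}\pm\hat{l}$. Expanding $J$ by its inverse Fourier transform and using $\sum_{x}g_{x}e^{2\pi\sqrt{-1}\e\xi x}=\hat{g}(-\e\xi)$ gives the exact identity
\[
\e\sum_{x\in\Z}\mathbb{E}_{\e}\big[f^{\pm}_{x}(\tfrac{t}{\e^{2}})\big]\,J\big(\e x\mp\sqrt{C_{1}(\theta)}\tfrac{t}{\e}\big)=\e\int_{\R}d\xi\;\widetilde{J}(\xi)\,e^{\mp2\pi\sqrt{-1}\sqrt{C_{1}(\theta)}\xi\frac{t}{\e}}\,\mathbb{E}_{\e}\big[\hat{f}^{\pm}(-\e\xi,\tfrac{t}{\e^{2}})\big],
\]
the exponential being the recentering phase produced by the shift in $J$. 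From the definitions of $\hat{p}(k,t)$ and $\hat{l}(k,t)$ one has $\hat{f}^{+}(k)=\sqrt{-1}\hat{\psi}(-k)^{*}$ for $k>0$ and $\hat{f}^{+}(k)=-\sqrt{-1}\hat{\psi}(k)$ for $k<0$ (with the two roles exchanged for $\hat{f}^{-}$), so the whole quantity is controlled by the single first moment $\mathbb{E}_{\e}[\hat{\psi}(k,t)]$.

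\emph{Step 2 (closed linear ODE and its propagator).} Taking the expectation in \eqref{def:dynamicsofpsi} annihilates the It\^o integral, and because the drift couples $\hat{\psi}(k,\cdot)$ only to $\hat{\psi}^{*}(-k,\cdot)$, the pair $u(k,t):=\mathbb{E}_{\e}[\hat{\psi}(k,t)]$, $v(k,t):=\mathbb{E}_{\e}[\hat{\psi}^{*}(-k,t)]$ solves the closed system $\partial_{t}(u,v)^{\top}=M(k)(u,v)^{\top}$ with (using that $\omega$ and $R$ are even)
\[
M(k)=\begin{pmatrix}-\sqrt{-1}\omega(k)-\gamma R(k) & \gamma R(k)\\ \gamma R(k) & \sqrt{-1}\omega(k)-\gamma R(k)\end{pmatrix}.
\]
Since $N(k):=M(k)+\gamma R(k)I$ is traceless with $N(k)^{2}=-\Omega(k)^{2}I$, where $\Omega(k):=\sqrt{\omega(k)^{2}-\gamma^{2}R(k)^{2}}$, the propagator is $e^{M(k)s}=e^{-\gamma R(k)s}\big(\cos(\Omega(k)s)\,I+\Omega(k)^{-1}\sin(\Omega(k)s)\,N(k)\big)$. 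This isolates the two relevant mechanisms: a damping factor $e^{-\gamma R(k)s}$ and two oscillations $e^{\pm\sqrt{-1}\Omega(k)s}$.

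\emph{Step 3 (diffusive scaling and identification of the limit).} Inserting $k=-\e\xi$, $s=t/\e^{2}$ and the small--$k$ asymptotics of Appendix \ref{app:asympofa} and Lemma \ref{lem:asympofa} (for $\theta>3$, $\omega(\e\xi)=2\pi\sqrt{C_{1}(\theta)}\e|\xi|+O(\e^{3})$ and $R(\e\xi)\sim\tfrac{3}{2}(2\pi\e\xi)^{2}$), the oscillatory factor $e^{\pm\sqrt{-1}\Omega(\e\xi)t/\e^{2}}$ carries an $O(\e^{-1})$ ballistic phase $\sim\pm2\pi\sqrt{C_{1}(\theta)}|\xi|t/\e$. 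For the eigen-component that propagates along the tracked characteristic this phase is exactly cancelled by the recentering exponential, while the complementary component retains an uncancelled phase $e^{\pm4\pi\sqrt{-1}\sqrt{C_{1}(\theta)}\xi t/\e}$ that disappears, after integration against the Schwartz function $\widetilde{J}$, by the Riemann--Lebesgue lemma. Simultaneously the damping $e^{-\gamma R(\e\xi)t/\e^{2}}$ converges to the Gaussian weight of \eqref{lim:normalwhen4} (its constant read off from the quadratic vanishing of $R$ at $0$), and the surviving prefactor $\e\,\mathbb{E}_{\e}[\hat{f}^{\pm}(-\e\xi,0)]$ converges, by the equivalent phononic condition \eqref{ass:equiassofthm3}, to the initial profile on the right--hand side; undoing the Fourier transform yields \eqref{lim:normalwhen4}. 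Here $\gamma>0$ is essential, since for $\gamma=0$ there is no damping and hence no diffusive fluctuation.

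\emph{Step 4 (justification and the main obstacle).} The delicate point, which I regard as the core of the argument, is to upgrade these pointwise-in-$\xi$ heuristics to a genuine limit of the full integral. I would work in $\mathbb{L}^{2}$: energy conservation \eqref{eq:energyconservation} together with \eqref{ass:phononic}--\eqref{ass:equiassofthm3} yields a uniform bound on $\e\,\mathbb{E}_{\e}[|\hat{p}(\e\cdot)|^{2}+|\hat{l}(\e\cdot)|^{2}]$, while the propagator norm is dominated by $e^{-\gamma R(\e\xi)t/\e^{2}}\le1$, giving uniform integrability. Because $R(k)>0$ on $\T\setminus\{0\}$ with a single quadratic zero at $k=0$, the damping makes every region $\{|\e\xi|\ge\delta\}$ super-exponentially small, confining the limit to a neighbourhood of $k=0$ where the expansions are valid; combined with the $\mathbb{L}^{2}$-convergence of the initial data and the pointwise-a.e.\ convergence of the uniformly bounded multiplier, this lets me pass to the limit. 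The hardest part is to make the phase-cancellation uniform in the presence of the non-uniform small--$k$ asymptotics: the accumulated dispersive correction to the ballistic phase is $\big(\omega(\e\xi)-2\pi\sqrt{C_{1}(\theta)}\e|\xi|\big)t/\e^{2}=O(\e^{3})\cdot t/\e^{2}=O(\e)$ precisely because $\theta>4$, so it vanishes and the limit is diffusive; for $\theta\le4$ this same correction would survive and force the different, superdiffusive time scale of \eqref{ass:timescalingflucnormal}, which is exactly why the clean Gaussian limit holds only for $\theta>4$. Controlling this correction uniformly requires the precise remainder estimates of Appendix \ref{app:asympofa}.
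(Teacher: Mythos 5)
Your route is genuinely different from the paper's: the paper obtains this corollary as the $\theta>4$ instance of Theorem \ref{thm:flucnormal} (for $\theta>3$ the semigroup $S^{\pm}_{\theta}(t/\e)$ is translation by $\mp\sqrt{C_{1}(\theta)}\,t/\e$ and $(m_{\theta},n_{\theta})=(\e,\e^{2})$), and the proof of that theorem, via Proposition \ref{prop:toshowfluc}, writes the \emph{recentered} Fourier means as solutions of a linear ODE with matrix $M_{\e,\theta}(k)$, splits $M_{\e,\theta}=M_{\theta}+\operatorname{Rem}_{\e,\theta}$, and controls the error by Duhamel on compact $k$-sets plus energy conservation \eqref{eq:energyconservation} for the tails. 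You instead exponentiate the un-recentered $2\times 2$ drift exactly and read the limit off from phases and damping. Your Steps 1, 2 and 4 are sound and essentially reproduce the paper's reductions; one small slip is that for $4<\theta<5$ the dispersive phase correction is $O(\e^{\theta-2})\,t/\e^{2}=O(\e^{\theta-4})$, not $O(\e)$, by Lemma \ref{lem:asympofa} — the conclusion that it vanishes iff $\theta>4$ survives.

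The genuine gap is in Step 3, and it sits exactly where your computation and the stated right-hand side part ways. Your exact propagator $e^{M(k)s}=e^{-\gamma R(k)s}\bigl(\cos(\Omega s)I+\Omega^{-1}\sin(\Omega s)N(k)\bigr)$ has off-diagonal entries of modulus at most $\gamma R(k)/|\Omega(k)|$, which at $k=\e\xi$ is $O(\e)$; likewise the counter-rotating part of the diagonal entry has amplitude $\tfrac12|1-\omega/\Omega|=O(\e^{2})$. Hence the limit your argument actually produces is the \emph{decoupled} expression $e^{-\frac{3\gamma}{2}(2\pi\xi)^{2}t}\,\tilde{f}^{\pm}_{0}(\xi)$, with no contribution from $\tilde{f}^{\mp}_{0}$ — whereas \eqref{lim:normalwhen4} asserts the coupled answer $e^{-\frac{3\gamma}{2}\xi^{2}t}(\tilde{f}^{+}_{0}+\tilde{f}^{-}_{0})(\xi)$, and in the paper that coupling comes precisely from the surviving off-diagonal $-\frac{3\gamma}{2}(2\pi\xi)^{2}$ of the limiting generator $M_{\theta}$ for $\theta\ge 4$: the defining feature of this regime is that $\gamma R(\e k)/n_{\theta}(\e)$ is exactly of order one, so the noise-induced coupling of the two normal modes is \emph{not} negligible at the level of the generator. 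You never reconcile the two pictures. If you push your recentering through the generator instead of the propagator, the off-diagonal acquires the rapidly oscillating factor $e^{\mp 2\sqrt{-1}\sqrt{C_{1}(\theta)}(2\pi k)t/\e}$, and deciding whether its $O(1)$ amplitude (the paper's $M_{\theta}$) or its fast time-oscillation (your $O(\e)$ propagator entry) prevails is precisely the mathematical content of the corollary; as written, Step 3 silently identifies your decoupled answer with the stated coupled one. You must either exhibit the mechanism that injects $\tilde{f}^{\mp}_{0}$ at leading order, or explicitly confront the discrepancy between your (exact, and in my reading correct) propagator computation and \eqref{lim:normalwhen4} together with the form of $M_{\theta}$ before Theorem \ref{thm:flucnormal}. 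Until then the proposal does not establish the statement as printed.
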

\begin{remark}
The above result is essentially the same as that of \cite[Theorem 3.4]{KO2}. Note that the ballistic transport of the phononic terms with diffusive fluctuations and $3/4$-superdiffusion of thermal energy agree with the theoretical prediction of \cite{S}. We also note that for polynomial decay models $3/4$-superdiffusion of thermal energy is proved in \cite{S2} when $\theta > 3$. 
\end{remark}
Now we generalize the above argument to the case $2 < \theta \le 4$. Let $\{ S^{\iota}_{\theta}(t) ; t \in \R, \iota = \pm, \theta > 1\}$ be the semigroups corresponding to $\pm \sqrt{C_{1}(\theta)} \mathcal{D}_{\theta}$ defined via its Fourier transform:
\begin{align}
\widetilde{(S^{\pm}_{\theta}(t) g)}(\xi) := \begin{dcases} e^{\pm \sqrt{C_{1}(\theta)} \sqrt{-1} \sgn(\xi) |2\pi \xi|^{\frac{\theta - 1}{2}} t} \tilde{g}(\xi) & 1 < \theta \le 3, \\
e^{\pm \sqrt{C_{1}(\theta)} \sqrt{-1} (2 \pi \xi) t} \tilde{g}(\xi) & \theta > 3 \end{dcases}
\end{align}
for any $g \in \mathbb{S}(\R)$. Note that the right-hand side of \eqref{lim:normalwhen4} can be rewritten as
\begin{align}
\e \sum_{x \in \Z} \mathbb{E}_{\e} [f^{\pm}_{x}(\frac{t}{\e^2})] \Big(S^{\pm}_{\theta}(\frac{t}{\e})J \Big)(\e x). 
\end{align}
Therefore by using these semigroups we can recenter the dynamics for any $\theta > 1$, but the correct time scaling is not diffusive if $\theta \le 4$. Before we state the scaling limits of the normal modes when $2 < \theta \le 4$, we introduce some functions. Define $\bar{F}^{\pm}(y,t), (y,t) \in \R \times \R_{\ge 0}$ as 
\begin{align}
\begin{pmatrix}\bar{F}^{+}\\ \bar{F}^{-}\end{pmatrix}(y,t) &:= \int_{\R} d\xi ~ e^{2 \pi \xi y} \exp{\{M_{\theta}(\xi)t\}} \begin{pmatrix}\tilde{\bar{F}}^{+}\\ \tilde{\bar{F}}^{-}\end{pmatrix}(\xi,0), \\
\bar{F}^{\pm}(y,0) &:= \bar{p}_{0}(y) \pm \bar{l}_{0}(y), \quad M_{\theta}(\xi) := (M^{(i,j)}_{\theta}(\xi))_{i,j=1,2}, \\
M^{(1,1)}_{\theta}(\xi) &:= \begin{dcases} \frac{\sqrt{-1}\sgn(\xi) C_{2}(\theta)}{\sqrt{C_{1}(\theta)}} |2\pi \xi|^{\frac{5 - \theta}{2}} \quad &2 < \theta < 3, \\
\sqrt{-1}\sgn(\xi) \sqrt{C_{1}(3)} |2\pi \xi| \log{|2\pi \xi|^{-1}} + \frac{\sqrt{-1}\sgn(\xi) C_{2}(3)}{\sqrt{C_{1}(3)}} |\pi \xi| \quad &\theta = 3, \\
\frac{\sqrt{-1}\sgn(\xi) C_{2}(\theta)}{\sqrt{C_{1}(\theta)}} |2\pi \xi|^{\theta - 2} \quad &3 < \theta < 4, \\ 
\frac{\sqrt{-1}\sgn(\xi) C_{2}(\theta)}{\sqrt{C_{1}(\theta)}} (2\pi \xi)^{2} - \frac{3 \gamma}{2} (2\pi \xi)^{2} \quad &\theta = 4, \\
- \frac{3 \gamma}{2} (2\pi \xi)^{2} \quad &\theta > 4, \end{dcases} \\
M^{(1,2)}_{\theta}(\xi) &= M^{(2,1)}_{\theta}(\xi) := \begin{dcases} 0 \quad &2 < \theta < 4, \\ 
- \frac{3 \gamma}{2} (2 \pi \xi)^{2} \quad &\theta \ge 4, \end{dcases} \quad M^{(2,2)}_{\theta}(\xi) := M^{(1,1)}_{\theta}(\xi)^{*}.
\end{align}
In other words, $(\bar{F}^{+}, \bar{F}^{-})$ is the solution of the following system of linear differential equations: 
\begin{align}\label{eq:flucnormal}
\partial_{t} \bar{F}^{\iota}(y,t) &= \begin{dcases} \sgn(\iota) \frac{C_{2}(\theta)}{\sqrt{C_{1}(\theta)}} \mathcal{D}_{6-\theta} \bar{F}^{\iota} & 2 < \theta < 3, \\
\sgn(\iota) \Big( \sqrt{C_{1}(3)} \mathcal{D}_{3,l} \bar{F}^{\iota} + \frac{C_{2}(3)}{\sqrt{C_{1}(3)}} \partial_{y} \bar{F}^{\iota} \Big) & \theta = 3, \\
\sgn(\iota) \frac{C_{2}(\theta)}{\sqrt{C_{1}(\theta)}} \mathcal{D}_{2\theta - 3} \bar{F}^{\iota} & 3 < \theta < 4 \\
\sgn(\iota) \frac{C_{2}(4)}{\sqrt{C_{1}(4)}} \mathcal{H}_{\R}(\Delta \bar{F}^{\iota}) + \frac{3 \gamma}{2}\Delta (  \bar{F}^{+} + \bar{F}^{-} ) & \theta = 4, \\
\frac{3 \gamma}{2} \Delta (  \bar{F}^{+} + \bar{F}^{-} ) & \theta > 4,   \end{dcases} \\
\bar{F}^{\pm}(y,0) &= \bar{p}_{0}(y) \pm \bar{l}_{0}(y) 
\end{align}
for $\iota = \pm$ where $\mathcal{D}_{3,l}$ is defined via its Fourier transform
\begin{align}
\widetilde{(\mathcal{D}_{3,l}f)}(\xi) := \sqrt{-1}\sgn(\xi) |2\pi \xi| \log{|2\pi \xi|^{-1}} \tilde{f}(\xi).
\end{align}
Recall that $m_{\theta}(\e), n_{\theta}(\e) $ is defined in \eqref{ass:timescalingflucnormal}.

\begin{theorem}\label{thm:flucnormal}
Assume that $\theta > 2$, $\gamma > 0$ and \eqref{ass:phononic}. Then for any $t \ge 0$ and $J \in C^{\infty}_{0}(\R)$ we have
\begin{align}
\lim_{\e \to 0} \e \sum_{x \in \Z} \mathbb{E}_{\e}[f^{\pm}_{x}(\frac{t}{n_{\theta}(\e)})] \Big(S^{\pm}_{\theta}(\frac{t}{m_{\theta}(\e)})J\Big)(\e x) = \int_{\R} dy ~ \bar{F}^{\pm}(y,t) J(y).
\end{align} 
\end{theorem}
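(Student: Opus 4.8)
The plan is to reduce the statement to the evolution of the \emph{first moment} of the wave function, which solves an explicit $2\times 2$ linear ODE, and then to carry out a recentered asymptotic analysis of the associated matrix exponential in Fourier space. The key algebraic observation is that the microscopic normal modes are, up to sign and conjugation, the wave functions: from \eqref{def:pqfromwavek} and the definition of $\hat l$ one computes, for $k>0$,
\begin{align}
\hat f^{+}(k) = \sqrt{-1}\,\hat\psi(-k)^{*}, \qquad \hat f^{-}(k) = -\sqrt{-1}\,\hat\psi(k),
\end{align}
with the roles interchanged for $k<0$. Taking $\mathbb{E}_\e$ in \eqref{def:dynamicsofpsi} kills the martingale term, and, using $\omega(-k)=\omega(k)$ and $R(-k)=R(k)$, the pair $\big(\mathbb{E}_\e[\hat f^{+}(k,t)],\mathbb{E}_\e[\hat f^{-}(k,t)]\big)$ solves a closed linear system $\tfrac{d}{dt}\mathbf v = A(k)\mathbf v$, where for $k>0$
\begin{align}
A(k) = \begin{pmatrix} \sqrt{-1}\,\omega(k) - \gamma R(k) & -\gamma R(k) \\ -\gamma R(k) & -\sqrt{-1}\,\omega(k) - \gamma R(k) \end{pmatrix}
\end{align}
(the diagonal signs exchanged for $k<0$). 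Hence $\mathbb{E}_\e[\hat f^{\pm}(\e k, t/n_\theta(\e))]$ is given explicitly by $\exp\!\big(A(\e k)\,t/n_\theta(\e)\big)$ applied to the initial first moments, which by \eqref{ass:equiassofthm3} satisfy $\e\,\mathbb{E}_\e[\hat f^{\pm}(\e k,0)]\to \tilde f^{\pm}_0(k):=\tilde{\bar p}_0(k)\pm\tilde{\bar l}_0(k)$ in $\mathbb{L}^{2}(\tfrac{\T}{\e})$.

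Next I would convert the left-hand side into a Fourier integral. A Poisson-summation/Parseval computation gives, for any $h\in\mathbb{S}(\R)$,
\begin{align}
\e\sum_{x\in\Z}\mathbb{E}_\e[f^{\pm}_x(t/n_\theta(\e))]\,h(\e x) = \int_{\frac{\T}{\e}} dk ~ \e\,\mathbb{E}_\e[\hat f^{\pm}(\e k,t/n_\theta(\e))]\,\tilde h(-k) + o(1),
\end{align}
the error arising from the non-principal images and vanishing since $\tilde h$ is Schwartz while the domain $\tfrac{\T}{\e}$ expands. Taking $h=S^{\pm}_\theta(t/m_\theta(\e))J$, the factor $\tilde h(-k)$ equals $\tilde J(-k)$ times the semigroup multiplier, which is a pure phase of modulus one. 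Thus it remains to analyse the pointwise limit of the recentered exponential $\exp\!\big(A(\e k)\,t/n_\theta(\e)\big)$, conjugated by the diagonal phase coming from $S^{\pm}_\theta$, applied to $\tilde f^{\pm}_0$.

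This asymptotic analysis is the core of the proof, and I would carry it out by inserting the small-$k$ expansions of $\omega$ and $R$. Lemma \ref{lem:asympofa} provides $\omega(\e k)=\sqrt{C_1(\theta)}\,|2\pi\e k|^{\min((\theta-1)/2,\,1)}$ to leading order, with first correction governed by $C_2(\theta)$ (and an extra logarithm at $\theta=3$), while elementarily $R(\e k)\sim 6\pi^2(\e k)^2$. The recentering phase is precisely engineered to cancel the leading oscillation $e^{\pm\sqrt{-1}\omega(\e k)t/n_\theta(\e)}$: this is the content of the identity $n_\theta(\e)m_\theta(\e)^{-1}=j_\theta(\e)$, which equates the leading term of $\omega(\e k)/n_\theta(\e)$ with the semigroup exponent $\sqrt{C_1(\theta)}\,|2\pi k|^{\min((\theta-1)/2,\,1)}/m_\theta(\e)$. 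What survives this cancellation is, in each regime, the relevant subleading mechanism: the dispersive correction carried by $C_2(\theta)$ when $2<\theta<4$, the dissipation $\gamma R(\e k)/n_\theta(\e)$ together with the off-diagonal coupling of $A$ when $\theta\ge 4$, both simultaneously at the threshold $\theta=4$, and the $|2\pi k|\log|2\pi k|^{-1}$ term at $\theta=3$. Matching these entry by entry against $M^{(i,j)}_\theta$ identifies the limit of the recentered exponential with $\exp\{M_\theta(k)t\}$, so that the integrand converges pointwise to $\widetilde{\bar F^{\pm}}(k,t)$ defined through \eqref{eq:flucnormal}.

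Finally I would pass to the limit under the integral sign. Since the eigenvalues of $A(k)$ are $-\gamma R(k)\pm\sqrt{(\gamma R(k))^2-\omega(k)^2}$, hence of nonpositive real part, and $A(\e k)$ is diagonalisable with uniformly bounded eigenvector matrices near $k=0$, one has $\|\exp(A(\e k)t/n_\theta(\e))\|\lesssim 1$ uniformly; combined with the unit modulus of the semigroup multiplier and with $\tilde J\in\mathbb{L}^2$, Cauchy--Schwarz against the $\mathbb{L}^2$-bounded initial first moments furnishes an $\e$-independent $\mathbb{L}^1(\tfrac{\T}{\e})$ dominating function, so dominated convergence yields $\int_\R dk~\widetilde{\bar F^{\pm}}(k,t)\,\tilde J(-k)=\int_\R dy~\bar F^{\pm}(y,t)J(y)$. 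I expect the genuinely delicate step to be the uniform control of the recentered exponential across the expanding frequency range: one must show that the phase error $\big(\omega(\e k)-\sqrt{C_1(\theta)}|2\pi\e k|^{\min((\theta-1)/2,\,1)}\big)t/n_\theta(\e)$ converges to its claimed limit uniformly on compact $k$-sets while remaining globally dominated, and that the two threshold exponents $\theta=3$ and $\theta=4$ are handled precisely, since only there do two mechanisms (a logarithm, respectively dispersion and dissipative coupling) coexist, whereas elsewhere a single surviving term makes the analysis robust.
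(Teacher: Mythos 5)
Your proposal follows essentially the same route as the paper: reduce to the mean dynamics of the Fourier-transformed normal modes, observe that they solve a closed $2\times 2$ linear system driven by $\pm\sqrt{-1}\sgn(k)\omega(\e k)-\gamma R(\e k)$ with off-diagonal coupling $-\gamma R(\e k)$, recenter by the semigroup phase (using $n_{\theta}(\e)m_{\theta}(\e)^{-1}=j_{\theta}(\e)$ to cancel the leading oscillation), extract the surviving subleading term from Lemma \ref{lem:asympofa} to identify $M_{\theta}$, and pass to the limit using uniform bounds on the matrix exponentials together with the $\mathbb{L}^{2}$ convergence of the initial data. The one step stated too loosely is the final limit passage: dominated convergence cannot be applied directly, since $\e\,\mathbb{E}_{\e}[\hat f^{\pm}(\e k,0)]$ converges only in $\mathbb{L}^{2}(\frac{\T}{\e})$ rather than pointwise and the bound you describe is $\e$-dependent; the paper's repair (Proposition \ref{prop:toshowfluc}) is to split off the $\mathbb{L}^{2}$ error of the initial data using the uniform operator bounds $C^{'}_{*},D^{'}_{*}$, treat $|k|\le K$ by a Duhamel estimate with explicit remainder rate $r_{\theta}(\e)$ (Lemmas \ref{lem:boundofM} and \ref{lem:convtomean}), and kill the tail $K<|k|<\frac{1}{2\e}$ by the decay of $\tilde{\bar{p}}_{0},\tilde{\bar{l}}_{0}$ --- exactly the ingredients you list, just assembled as an $\mathbb{L}^{2}$ estimate rather than a pointwise one.
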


\begin{remark}
We see that if $2 < \theta < 4$ then the right-hand side of \eqref{eq:flucnormal} does not depend on the strength of the noise $\gamma$, and thus the fluctuations of normal modes are determined by the interaction potential $\a$. On the other hand, if $\theta > 4$ then the fluctuations do not depend on $\a$. The threshold is $\theta = 4$ and in this case the fluctuations depend on both $\a$ and $\gamma$.
\end{remark}

In addition, when $2 < \theta < 4$ we have the following stronger result, which means \textit{the law of large numbers} for the empirical measure of the recentered dynamics. 
\begin{theorem}\label{thm:flucnormalLLN}
Assume that $2 <\theta < 4$, $\gamma \ge 0$ and \eqref{ass:phononic}. Then for any $t \ge 0$ and $J \in C^{\infty}_{0}(\R)$ we have
\begin{align}
\lim_{\e \to 0} \mathbb{E}_{\e}\Bigg[ \Big| \e \sum_{x \in \Z} f^{\pm}_{x}(\frac{t}{n_{\theta}(\e)})\Big(S^{\pm}_{\theta}(\frac{t}{m_{\theta}(\e)})J\Big)(\e x) - \int_{\R} dy ~ \bar{F}^{\pm}(y,t) J(y) \Big|  \Bigg] = 0.
\end{align}
\end{theorem}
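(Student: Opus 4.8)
The plan is to deduce the $\mathbb{L}^{1}$ statement from an $\mathbb{L}^{2}$ estimate and then split the $\mathbb{L}^{2}$ error into a mean part and a variance part. Writing $\tau := t/n_{\theta}(\e)$ and
\[
Y_{\e}(t) := \e \sum_{x \in \Z} f^{\pm}_{x}(\tau) \Big(S^{\pm}_{\theta}\big(\tfrac{t}{m_{\theta}(\e)}\big) J\Big)(\e x), \qquad \bar{X}(t) := \int_{\R} dy ~ \bar{F}^{\pm}(y,t) J(y),
\]
I first note that $Y_{\e}$ is real, because $S^{\pm}_{\theta}$ has a Fourier multiplier satisfying $m(-\xi)=\overline{m(\xi)}$ and hence maps real functions to real functions, while $f^{\pm}_{x}$ is real. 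By Jensen's inequality $\mathbb{E}_{\e}[|Y_{\e} - \bar{X}|] \le (\mathbb{E}_{\e}[|Y_{\e} - \bar{X}|^{2}])^{1/2}$, so it suffices to prove $\mathbb{E}_{\e}[|Y_{\e} - \bar{X}|^{2}] \to 0$. Theorem~\ref{thm:flucnormal} gives $\mathbb{E}_{\e}[Y_{\e}] \to \bar{X}$ on $2 < \theta < 4$; since the limiting system \eqref{eq:flucnormal} is independent of $\gamma$ on this range, this convergence of the mean persists down to $\gamma = 0$. Consequently $\mathbb{E}_{\e}[|Y_{\e} - \bar{X}|^{2}] = \operatorname{Var}_{\e}(Y_{\e}) + o(1)$, and the whole problem reduces to showing $\operatorname{Var}_{\e}(Y_{\e}) \to 0$.

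To organize the variance I would exploit that the driving noise $B$ is independent of the initial law $\mu_{\e}$. Because \eqref{def:dynamicsofpsi} is $\R$-linear in the pair $(\hat{\psi},\hat{\psi}^{*})$, there is a random $\mathbb{L}^{2}(\T)$-flow $U(\cdot)$, independent of $\mu_{\e}$, with $\hat{\psi}(\cdot,\tau) = U(\tau)\hat{\psi}(\cdot,0)$; the pathwise conservation of the total energy $\int_{\T}|\hat{\psi}(k,t)|^{2}\,dk$ by the momentum-exchange dynamics (whose expectation is \eqref{eq:energyconservation}) makes $U(\tau)$ an isometry of $\mathbb{L}^{2}(\T)$ almost surely. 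Representing $Y_{\e} = \langle \hat{\psi}(\cdot,\tau), \Phi_{\e} \rangle = \langle \hat{\psi}(\cdot,0), U(\tau)^{*}\Phi_{\e} \rangle$ through the deterministic test element $\Phi_{\e}$ that encodes $x \mapsto (S^{\pm}_{\theta}(t/m_{\theta}(\e))J)(\e x)$, the conditional-variance decomposition
\[
\operatorname{Var}_{\e}(Y_{\e}) = \mathbb{E}_{\e}\big[ \operatorname{Var}_{\e}(Y_{\e} \mid \sigma(B)) \big] + \operatorname{Var}_{\e}\big( \mathbb{E}_{\e}[Y_{\e} \mid \sigma(B)] \big) =: A_{\e} + B_{\e}
\]
separates the fluctuations into an initial-data part $A_{\e}$ and a pure-noise part $B_{\e}$. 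The first is the averaged quadratic form $A_{\e} = \mathbb{E}_{\e}[\langle U(\tau)^{*}\Phi_{\e}, \operatorname{Cov}_{\mu_{\e}} U(\tau)^{*}\Phi_{\e}\rangle]$, which I bound via the operator-norm–$\le$–trace inequality and the isometry by $A_{\e} \le \operatorname{tr}(\operatorname{Cov}_{\mu_{\e}})\,\mathbb{E}_{\e}\|U(\tau)^{*}\Phi_{\e}\|_{\mathbb{L}^{2}(\T)}^{2} = \operatorname{tr}(\operatorname{Cov}_{\mu_{\e}})\,\|\Phi_{\e}\|_{\mathbb{L}^{2}(\T)}^{2}$. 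A Parseval computation gives $\|\Phi_{\e}\|_{\mathbb{L}^{2}(\T)}^{2} = O(\e)$, while \eqref{ass:phononic} in the form \eqref{ass:equiassofthm3} forces $\e\operatorname{tr}(\operatorname{Cov}_{\mu_{\e}}) \to 0$; hence $A_{\e} = o(\e^{-1}) \cdot O(\e) = o(1)$. In the deterministic case $\gamma = 0$ the flow $U(\tau)$ is deterministic, so $B_{\e} = 0$ and this already proves the theorem.

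The remaining term $B_{\e}$, present only for $\gamma > 0$, is the variance over the noise of $\langle U(\tau)\bar{\psi}_{0}^{\e}, \Phi_{\e}\rangle$ with $\bar{\psi}_{0}^{\e} := E_{\mu_{\e}}[\hat{\psi}(\cdot,0)]$ deterministic. The crude bound $B_{\e} \le \|\bar{\psi}_{0}^{\e}\|^{2}\|\Phi_{\e}\|^{2} = O(\e^{-1}) \cdot O(\e) = O(1)$ is useless, so one must genuinely use that $B_{\e}$ is a variance. I would write $B_{\e}$ as a double integral of the \emph{connected} two-point functions $\mathbb{E}_{\e}[\hat{\psi}(k,\tau)\hat{\psi}(k',\tau)] - \mathbb{E}_{\e}[\hat{\psi}(k,\tau)]\mathbb{E}_{\e}[\hat{\psi}(k',\tau)]$ (and the $\hat{\psi}^{*}$ analogues) against $\Phi_{\e}(k)\Phi_{\e}(k')$. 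By It\^{o}'s formula these connected kernels vanish at $t=0$ and solve a closed linear system whose only source is the quadratic variation of the noise, i.e. integrals of $r(k,\cdot)\,r(k',\cdot)$ weighted by the evolving means. The decisive structural input is that $r(k,k')$ in \eqref{def:sqrtscatt} vanishes at $k = 0$, so on the macroscopic modes $k = \e\xi$ supporting $\Phi_{\e}$ one has $r(\e\xi,\cdot) = O(\e)$: the momentum-exchange noise couples only weakly to long wavelengths and contributes an extra $O(\e^{2})$ to the source of the connected kernel.

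The main obstacle is to confirm that this $O(\e^{2})$ smallness survives integration over the long superdiffusive horizon $\tau = t/n_{\theta}(\e)$ dictated by \eqref{ass:timescalingflucnormal}: the competition is between the largeness of $1/n_{\theta}(\e)$ and the $O(\e)$ coupling of the noise to low modes, and the proof must verify by a careful $\e$-power accounting through the second-moment system that the small coupling wins (which, across the scalings, it does). The fast free phase $e^{\mp \sqrt{-1}\omega(k)\tau}$ retained on the off-diagonal $k \neq k'$ after recentering by $S^{\pm}_{\theta}$ provides additional oscillatory cancellation, but the essential point is the quantitative propagation of the $O(\e)$-smallness through the closed connected-kernel equations and its pairing against the band-limited $\Phi_{\e}$. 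Establishing the resulting $o(1)$ bound on this connected contribution — rather than on the mean, which is already handled by Theorem~\ref{thm:flucnormal} — is the technical heart of the argument and the step I expect to require the most care.
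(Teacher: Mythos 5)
Your reduction to $\operatorname{Var}_{\e}(Y_{\e}) \to 0$ and the conditional-variance split $A_{\e} + B_{\e}$ are sensible, and the initial-data part $A_{\e}$ is handled correctly (modulo the fact that you need the \emph{pathwise} isometry of the flow, whereas the paper only records conservation of $\int_{\T}|\hat{\psi}|^{2}dk$ in expectation, \eqref{eq:energyconservation}; pathwise conservation does hold for this momentum-exchange noise but should be justified). The genuine gap is the noise-induced term $B_{\e}$: you correctly identify that the crude bound is $O(1)$ and that one must exploit $r(\e\xi,\cdot)=O(\e)$ against the long horizon $\tau = t/n_{\theta}(\e)$, but you then declare this ``the technical heart'' and do not carry it out. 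As written, the theorem is not proved for $\gamma>0$. (For the record, the accounting does close: the quadratic variation of the martingale part of $\hat{\psi}(\e\xi,\cdot)$ is bounded by $\gamma\sup_{k'}|r(\e\xi,k')|^{2}\cdot 4\|\hat{\psi}\|_{\mathbb{L}^{2}(\T)}^{2} = O(\gamma\e^{2}\cdot\e^{-1})$ per unit microscopic time, so $\operatorname{Var}(\e\hat{\psi}(\e\xi,\tau)) = O(\gamma\e^{3}/n_{\theta}(\e))$, which tends to $0$ for all $2<\theta<4$ and survives the pairing with the band-limited test function. But this computation, not its announcement, is the proof.)

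The paper avoids the entire second-moment analysis of the SDE by a soft conservation argument, and you should compare. Since the recentering semigroup is a unimodular multiplier, $|\hat{F}^{\pm}_{\e}(k,t)|^{2}=|\hat{f}^{\pm}_{\e}(k,t)|^{2}$, and energy conservation gives
\begin{align}
\int_{\frac{\T}{\e}}dk~\mathbb{E}_{\e}\Big[|\hat{F}^{+}_{\e}(k,t)|^{2}+|\hat{F}^{-}_{\e}(k,t)|^{2}\Big]
= 2\int_{\frac{\T}{\e}}dk~\mathbb{E}_{\e}\Big[|\e\hat{p}(\e k,0)|^{2}+|\e\hat{l}(\e k,0)|^{2}\Big]
\end{align}
for every $t$. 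Subtracting the squared mean (Pythagoras) and invoking Proposition \ref{prop:toshowfluc}, the integrated variance converges to $2\int(|\tilde{\bar{p}}_{0}|^{2}+|\tilde{\bar{l}}_{0}|^{2})d\xi-\int(|\tilde{\bar{F}}^{+}(\xi,t)|^{2}+|\tilde{\bar{F}}^{-}(\xi,t)|^{2})d\xi$. The decisive point --- which plays the role your $B_{\e}$ estimate is meant to play --- is that for $2<\theta<4$ the limiting matrix $M_{\theta}(\xi)$ is skew-Hermitian diagonal, so $|\tilde{\bar{F}}^{\pm}(\xi,t)|=|\tilde{\bar{F}}^{\pm}(\xi,0)|$ and the two terms cancel exactly: the mean exhausts the conserved $\mathbb{L}^{2}$ mass, hence the variance must vanish, with no analysis of the noise correlations whatsoever. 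Cauchy--Schwarz against $\tilde{J}$ then yields the $\mathbb{L}^{1}$ statement. This also makes transparent why the theorem is restricted to $2<\theta<4$: for $\theta\ge 4$ the dissipative $\gamma$-terms enter $M_{\theta}$, the limiting evolution is no longer norm-preserving, and the law of large numbers in this form fails.
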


\section{proof of theorem \ref{thm:superballistic1}}\label{sec:proofofsuperballistic}

In this section we prove Theorem \ref{thm:superballistic1} by following the strategy presented in \cite[Section 4]{KO2}. Since if $1 < \theta \le 3$ then the asymptotic behavior of $\hat{\a}(k), k \to 0$ is different from that of exponential decay models, we need some modification. The asymptotic behavior of $\hat{\a}(k), k \to 0$ is computed in Appendix \ref{app:asympofa}.

\subsection{Mean dynamics}

First we define the scaled mean dynamics $\{ \bar{p}_{\e,x}(t) , \bar{l}_{\e,x}(t) ; x \in \Z , t \ge 0 \}$ as
\begin{align}
\bar{p}_{\e,x}(t) := \mathbb{E}_{\e} \Big[p_{x}(\frac{t}{j_{\theta}(\e)}) \Big],  \quad \quad \bar{l}_{\e,x}(t) := \mathbb{E}_{\e} \Big[l_{x}(\frac{t}{j_{\theta}(\e)}) \Big].
\end{align}
We also introduce the Fourier transform of the mean dynamics $\{ \hat{\bar{p}}_{\e}(k,t) , \hat{\bar{l}}_{\e}(k,t) ; k \in \frac{\T}{\e} , t \ge 0 \}$, that is,
\begin{align}
\hat{\bar{p}}_{\e}(k,t) := \e \mathbb{E}_{\e} \Big[\hat{p}(\e k,\frac{t}{j_{\theta}(\e)}) \Big] \quad \quad \hat{\bar{l}}_{\e}(k,t) := \e \mathbb{E}_{\e} \Big[\hat{l}(\e k,\frac{t}{j_{\theta}(\e)}) \Big].
\end{align}

The following Proposition states the scaling limit for the mean dynamics.
\begin{proposition}\label{prop:meandynamics} 
Suppose that $\theta > 1$, $\gamma \ge 0$ and \eqref{ass:phononic}. For any $t \ge 0$, we have
\begin{align}\label{lim:meanconvonT}
\lim_{\e \to 0} \int_{\frac{\T}{\e}} dk ~ \Big| \hat{\bar{p}}_{\e}(k,t) - \tilde{\bar{p}}(k,t) \Big|^2 + \Big| \hat{\bar{l}}_{\e}(k,t) -  \tilde{\bar{l}}(k,t) \Big|^{2} = 0, 
\end{align}
and consequently we have
\begin{align}\label{meanconvonZ}
\lim_{\e \to 0} \e \sum_{x \in \Z} \Big( \Big| \bar{p}_{\e,x}(t) - \bar{p}(\e x,t) \Big|^2 + \Big| \bar{l}_{\e,x}(t) - \bar{l}(\e x,t) \Big|^{2} \Big) = 0.
\end{align}
\end{proposition}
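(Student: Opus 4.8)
The plan is to reduce the statement to a deterministic analysis of a two-dimensional linear ODE in the Fourier variable, using that taking the expectation annihilates the stochastic integral in \eqref{def:dynamicsofpsi}. First I would take $\mathbb{E}_{\e}$ in \eqref{def:dynamicsofpsi}: the Itô integral against $B(dk',dt)$ is a mean-zero martingale, so $\phi(k,t):=\mathbb{E}_{\e}[\hat\psi(k,t)]$ solves the closed deterministic system obtained by dropping the noise term. Pairing this equation with the conjugate equation for $\phi^{*}(-k,t)$ and using that $\omega(-k)=\omega(k)$ and $R(-k)=R(k)$, I would pass to the variables $\mathbb{E}_{\e}[\hat p(k,t)]$, $\mathbb{E}_{\e}[\hat l(k,t)]$ through \eqref{def:pqfromwavek} and the definition of $\hat l$. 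A short computation yields the closed system
\[ \partial_{t}\mathbb{E}_{\e}[\hat p(k,t)] = \sqrt{-1}\,\sgn(k)\,\omega(k)\,\mathbb{E}_{\e}[\hat l(k,t)] - 2\gamma R(k)\,\mathbb{E}_{\e}[\hat p(k,t)], \qquad \partial_{t}\mathbb{E}_{\e}[\hat l(k,t)] = \sqrt{-1}\,\sgn(k)\,\omega(k)\,\mathbb{E}_{\e}[\hat p(k,t)], \]
in which the dissipation acts only on the momentum. Rescaling $k\mapsto\e k$, $t\mapsto t/j_{\theta}(\e)$ turns this into $\partial_{t}u_{\e}=A_{\e}(k)u_{\e}$ for $u_{\e}:=(\hat{\bar p}_{\e},\hat{\bar l}_{\e})$, where $A_{\e}(k)$ has off-diagonal entries $\sqrt{-1}\,\sgn(k)\,\Omega_{\e}(k)$ with $\Omega_{\e}(k):=\omega(\e k)/j_{\theta}(\e)$ and $(1,1)$-entry $-\Gamma_{\e}(k)$ with $\Gamma_{\e}(k):=2\gamma R(\e k)/j_{\theta}(\e)$.

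Next, using the asymptotics of $\hat\a$ from Appendix \ref{app:asympofa} (Lemma \ref{lem:asympofa}), I would record the pointwise, locally uniform limits $\Omega_{\e}(k)\to\Omega(k)$, where $\Omega(k)=\sqrt{C_{1}(\theta)}|2\pi k|^{\frac{\theta-1}{2}}$ for $1<\theta\le3$ and $\Omega(k)=\sqrt{C_{1}(\theta)}|2\pi k|$ for $\theta>3$, together with $\Gamma_{\e}(k)\to0$, the latter because $R(\e k)=O(\e^{2})$ while $j_{\theta}(\e)\gg\e^{2}$ in every regime. The limiting matrix $A(k)$, with off-diagonal $\sqrt{-1}\,\sgn(k)\,\Omega(k)$ and no damping, is exactly the Fourier symbol of \eqref{eq:superballistic}, so $u:=(\tilde{\bar p},\tilde{\bar l})$ solves $\partial_{t}u=A(k)u$; since $\bar p_{0},\bar l_{0}\in C^{\infty}_{0}(\R)$ have Schwartz Fourier transforms and the symbol is a pure phase, the components $\tilde{\bar p}(\cdot,t),\tilde{\bar l}(\cdot,t)$ lie in $\mathbb{L}^{2}(\R)$ with norm conserved in $t$.

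The crucial structural input is a pointwise energy estimate: differentiating $|\hat{\bar p}_{\e}(k,t)|^{2}+|\hat{\bar l}_{\e}(k,t)|^{2}$ and observing that the skew-Hermitian coupling cancels, one obtains $\partial_{t}(|\hat{\bar p}_{\e}|^{2}+|\hat{\bar l}_{\e}|^{2})=-2\Gamma_{\e}(k)|\hat{\bar p}_{\e}|^{2}\le0$. Hence $|u_{\e}(k,t)|\le|u_{\e}(k,0)|$ for every $k$ and $t$; equivalently $e^{A_{\e}(k)(t-s)}$ is a contraction on $\mathbb{C}^{2}$. The same computation with $\Gamma\equiv0$ gives $|u(k,t)|=|u(k,0)|$.

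Then I would prove \eqref{lim:meanconvonT} by splitting $\T/\e$ into the bulk $\{|k|\le M\}$ and its complement. On the bulk I would use the Duhamel identity $\partial_{t}(u_{\e}-u)=A_{\e}(u_{\e}-u)+(A_{\e}-A)u$ and the contraction property of $e^{A_{\e}(t-s)}$ to get, with no exponential blow-up, $|u_{\e}(k,t)-u(k,t)|\le|u_{\e}(k,0)-u(k,0)|+\int_{0}^{t}\|A_{\e}(k)-A(k)\|\,|u(k,s)|\,ds$; integrating the square over $|k|\le M$, the first term vanishes because \eqref{ass:equiassofthm3} together with Jensen's inequality yields $\mathbb{L}^{2}(\T/\e)$-convergence of the initial data, and the second vanishes because $\|A_{\e}-A\|\to0$ uniformly on $|k|\le M$ while $|u(k,s)|=|u(k,0)|$ is bounded. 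For the tail I would use the contraction bound to dominate $\int_{M<|k|<1/(2\e)}|u_{\e}(k,t)|^{2}\,dk$ by its value at $t=0$, which is uniformly small for large $M$ by uniform integrability of $(u_{\e}(\cdot,0))_{\e}$ (a consequence of the $\mathbb{L}^{2}$-convergence of the initial data and $\tilde{\bar p}_{0},\tilde{\bar l}_{0}\in\mathbb{L}^{2}(\R)$); the tail of $u(\cdot,t)$ is small for the same reason. Sending first $\e\to0$ and then $M\to\infty$ gives \eqref{lim:meanconvonT}, and \eqref{meanconvonZ} follows by transferring the $\mathbb{L}^{2}(\T/\e)$ statement to $\ell^{2}(\Z)$ exactly as in Appendix \ref{app:initialequi}. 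The main obstacle is precisely this tail control: since the domain $\T/\e$ expands to $\R$, bulk convergence alone is insufficient and one must transport the smallness of the high-frequency mass from $t=0$ to time $t$; the pointwise contraction $|u_{\e}(k,t)|\le|u_{\e}(k,0)|$, available exactly because the dissipation $\Gamma_{\e}$ is nonnegative, is what makes this uniform in $\e$. A secondary subtlety is the locally uniform convergence $\Omega_{\e}\to\Omega$ near $k=0$, where the exponent $\frac{\theta-1}{2}<1$ is non-smooth and the case $\theta=3$ requires the logarithmic scaling $j_{3}(\e)=\e(-\log\e)^{1/2}$; both are delivered by the asymptotics in Appendix \ref{app:asympofa}.
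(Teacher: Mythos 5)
Your proposal is correct, and it follows the same overall strategy as the paper's Section \ref{sec:proofofsuperballistic}: pass to the Fourier-side ODE $\partial_t u_\e = A_\e(k)u_\e$ for the means (the noise drops out under expectation), identify the limiting generator $A(k)$ as the symbol of \eqref{eq:superballistic} via Lemma \ref{lem:asympofa}, run Duhamel on a bounded frequency window, and control the tail $\{K<|k|<\frac{1}{2\e}\}$ by uniform propagator bounds combined with the smallness of the initial high-frequency mass guaranteed by \eqref{ass:equiassofthm3}. Within this common strategy you make two technical choices that differ from the paper's and are worth recording. First, you prove the uniform bound on $\exp(A_\e(k)t)$, $t\ge 0$, by the dissipativity identity $\partial_t\bigl(|\hat{\bar{p}}_\e|^2+|\hat{\bar{l}}_\e|^2\bigr)=-2\Gamma_\e(k)|\hat{\bar{p}}_\e|^2\le 0$, i.e. $A_\e+A_\e^*\le 0$, which gives the contraction $\|\exp(A_\e(k)t)\|\le 1$; the paper's Lemma \ref{lem:expnorm} argues from the negativity of the real parts of the eigenvalues, which for non-normal matrices does not by itself yield a bound uniform in $(\e,k,t)$, so your energy argument is the cleaner and fully rigorous justification of the same fact. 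Second, your Duhamel identity puts the perturbation on the macroscopic solution, $(A_\e-A)u$, so the source term is controlled by the pointwise conservation $|u(k,s)|=|u(k,0)|$ of the limiting skew-Hermitian flow; the paper instead introduces the intermediate solution $u^{(0)}_\e$ with prelimit initial data (Lemma \ref{lem:finiteinterval}), puts the perturbation $B_{\e,\theta}u_\e$ on the prelimit dynamics, and therefore needs the microscopic energy conservation \eqref{eq:energyconservation} to bound the Duhamel term. Your variant merges the paper's two-step comparison into one and needs only the locally uniform convergence $\|A_\e-A\|\to 0$, not the explicit rates $b_\theta(\e)$ of \eqref{scale:matrem}; what the paper's version buys is those quantitative rates, which prefigure the finer analysis of the same kind carried out for the normal modes in Section 7, but they are not needed for Proposition \ref{prop:meandynamics} itself.
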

In the rest of this subsection we show that Proposition \ref{prop:meandynamics} implies Theorem \ref{thm:superballistic1}. In Section \ref{sec:proofofmeandynamics} we prove Proposition \ref{prop:meandynamics}. 

\subsubsection{Proof of Theorem \ref{thm:superballistic1}}
First we define $ \hat{P}_{\e}(k,t) , \hat{L}_{\e}(k,t) , k \in \T, t \ge 0$ as
\begin{align}
\hat{P}_{\e}(k,t) &:= \e \hat{p}(\e k,\frac{t}{j_{\theta}(\e)}) - \hat{\bar{p}}_{\e}(k,t), \\
\hat{L}_{\e}(k,t) &:= \e \hat{l}(\e k,\frac{t}{j_{\theta}(\e)}) - \hat{\bar{l}}_{\e}(k,t).
\end{align}
From \eqref{eq:energyconservation}, for any $t \ge 0$ we have 
\begin{align}
&\int_{\frac{\T}{\e}} dk ~ \mathbb{E}_{\e}\Big[ |\e \hat{p}(\e k,0)|^{2} + |\e \hat{l}(\e k,0)|^{2} \Big] \\ 
&= \int_{\frac{\T}{\e}} dk ~ \mathbb{E}_{\e}\Big[ |\e \hat{p}(\e k,\frac{t}{j_{\theta}(\e)})|^{2} + |\e \hat{l}(\e k,\frac{t}{j_{\theta}(\e)})|^{2}\Big] \\
&= \int_{\frac{\T}{\e}} dk ~ \mathbb{E}_{\e}\Big[ |\hat{P}_{\e}(k,t)|^{2} + |\hat{L}_{\e}(k,t)|^{2} \Big] + \int_{\frac{\T}{\e}} dk ~ |\hat{\bar{p}}_{\e}(k,t)|^{2} + |\hat{\bar{l}}_{\e}(k,t)|^{2} .
\end{align}
By using \eqref{ass:phononic} and Proposition \ref{prop:meandynamics}, we get
\begin{align}
&\int_{\R} dk ~ |\tilde{\bar{p}}_{0}(k)|^{2} + |\tilde{\bar{l}}_{0}(k)|^{2} = \varlimsup_{\e \to 0} \int_{\frac{\T}{\e}} dk ~ \mathbb{E}_{\e}\Big[ |\hat{P}_{\e}(k,t)|^{2} + |\hat{L}_{\e}(k,t)|^{2} \Big] + \int_{\R} dk ~ |\tilde{\bar{p}}(k,t)|^{2} + |\tilde{\bar{l}}(k,t)|^{2}.
\end{align}
Hence from \eqref{eq:energyconservation} we obtain
\begin{align}\label{lim:convoffrac}
\varlimsup_{\e \to 0} \int_{\frac{\T}{\e}} dk ~ \mathbb{E}_{\e}\Big[|\hat{P}_{\e}(k,t)|^{2} + |\hat{L}_{\e}(k,t)|^{2} \Big] = 0.
\end{align}
Combining \eqref{lim:meanconvonT} with \eqref{lim:convoffrac}, we have 
\begin{align}
\lim_{\e \to 0} \int_{\frac{\T}{\e}} dk ~ \mathbb{E}_{\e}\Big[ \Big| \e \hat{p}(\e k,\frac{t}{j_{\theta}(\e)}) - \tilde{\bar{p}}(k,t) \Big|^2 + \Big| \e \hat{l}(\e k,\frac{t}{j_{\theta}(\e)}) -  \tilde{\bar{l}}(k,t) \Big|^{2} \Big] = 0, 
\end{align}
and thus we established Theorem \ref{thm:superballistic1}.

\subsection{Proof of Proposition \ref{prop:meandynamics}}\label{sec:proofofmeandynamics}

From \eqref{def:dynamicsofpsi}, we have the time evolution law of $\{ \hat{\bar{p}}_{\e}(k,t) , \hat{\bar{l}}_{\e}(k,t) ; k \in \frac{\T}{\e} , t \ge 0 \}$:
\begin{align}\label{evoofmeanont}
\frac{d}{dt} \begin{pmatrix} \hat{\bar{p}}_{\e}(k,t) \\ \hat{\bar{l}}_{\e}(k,t) \end{pmatrix} = A_{\e}(k) \begin{pmatrix} \hat{\bar{p}}_{\e}(k,t) \\ \hat{\bar{l}}_{\e}(k,t) \end{pmatrix} ,
\end{align}
where
\begin{align}
A_{\e}(k) := \frac{1}{j_{\theta}(\e)} \begin{pmatrix} -2 \gamma R(\e k) & \sqrt{-1} \operatorname{sgn}(k) \omega(\e k) \\ \sqrt{-1} \operatorname{sgn}(k) \omega(\e k) & 0 \end{pmatrix} .
\end{align}
Then we decompose $A_{\e}$ into two parts as follows:
\begin{align}
A_{\e}(k) &= A_{\theta}(k) + B_{\e,\theta}(k), \quad A_{\theta} = (A^{(i,j)}_{\theta})_{i,j=1,2}, ~ B_{\e,\theta} = (B^{(i,j)}_{\e,\theta})_{i,j=1,2}
\end{align}
where
\begin{align}
A^{(1,1)}_{\theta}(k) & = A^{(2,2)}_{\theta}(k) \equiv 0, \\
A^{(1,2)}_{\theta}(k) &= A^{(2,1)}_{\theta}(k) := \begin{dcases} \sqrt{C_{1}(\theta)} \sqrt{-1} \sgn(k) |2\pi k|^{\frac{\theta - 1}{2}} \quad &1 < \theta \le 3, \\ \sqrt{C_{1}(\theta)} \sqrt{-1} \sgn(k) |2\pi k| \quad &\theta > 3,\end{dcases} \\
B^{(i,j)}_{\e,\theta}(k) &:= A^{(i,j)}_{\e}(k) - A^{(i,j)}_{\theta}(k) \quad i,j=1,2.
\end{align}

To prove Proposition \ref{prop:meandynamics}, we need the following Lemma \ref{lem:expnorm} and Lemma \ref{lem:finiteinterval}. 
\begin{lemma}\label{lem:expnorm}
\begin{align}
C_{*} &:= \sup_{0 < \e < 1} \sup_{(k,t) \in \R \times \R} \| \exp{\Big(A_{\e}(k) t \Big)} \| < \infty , \\
D_{*} &:= \sup_{(k,t) \in \R \times \R} \| \exp{\Big(A(k) t \Big)} \| < \infty
\end{align}
where $||\cdot||$ is the matrix norm defined as
\begin{align}
\| A \| := \sup_{x \in \R^{2}} \frac{| Ax |}{| x |} , \quad A \in M_{2}(\R).
\end{align}
In addition, for any $K > 0$ we have
\begin{align}
\sup_{|k| \le K} \| B_{\e,\theta} (k) \| &\le c_{K,\theta} b_{\theta}(\e),
\end{align}
where 
\begin{align}\label{scale:matrem}
b_{\theta}(\e) := \begin{dcases} \e \quad & 1 < \theta < 2, \\
\e \log{\e^{-1}} \quad & \theta = 2, \\
\e^{3 - \theta} \quad & 2 < \theta < 3 \\
(\log{\e^{-1}})^{-1} \quad & \theta = 3, \\
\e^{\theta - 3} \quad & 3 < \theta < 4, \\
\e \quad & \theta = 4, \\
\e^{\theta - 3} \quad & 4 < \theta < 5, \\
\e^{2} \log{\e^{-1}} \quad & \theta = 5, \\
\e^{2}  \quad &  \theta > 5. 
\end{dcases} 
\end{align}
and $c_{K,\theta}$ is a positive constant which depends on $K > 0, ~ \theta > 1$.
\end{lemma}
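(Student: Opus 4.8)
The plan is to treat the three assertions separately: the two uniform exponential bounds $C_*,D_*$ follow from the algebraic structure of the matrices $A_\e(k)$ and $A_\theta(k)$, whereas the estimate on $B_{\e,\theta}(k)$ reduces to the small-$k$ asymptotics of $\hat\a$ together with the explicit form of $R$.

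For the exponential bounds I would first record that each matrix splits into a Hermitian and a skew-Hermitian part. Writing $A_\e(k) = H_\e(k) + S_\e(k)$ with
\begin{align}
H_\e(k) = \frac{1}{j_\theta(\e)}\begin{pmatrix} -2\gamma R(\e k) & 0 \\ 0 & 0 \end{pmatrix}, \qquad S_\e(k) = \frac{\sqrt{-1}\,\sgn(k)\,\omega(\e k)}{j_\theta(\e)}\begin{pmatrix} 0 & 1 \\ 1 & 0 \end{pmatrix},
\end{align}
one checks that $S_\e(k)^* = -S_\e(k)$ and that $H_\e(k)$ is Hermitian and negative semidefinite ($H_\e(k)\le 0$), since $R\ge 0$ and $\gamma\ge 0$. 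Consequently, for any $v$ and any $t\ge 0$,
\begin{align}
\frac{d}{dt}\big| e^{A_\e(k)t}v\big|^2 = \big\langle (A_\e(k)+A_\e(k)^*)\,e^{A_\e(k)t}v,\; e^{A_\e(k)t}v \big\rangle = 2\big\langle H_\e(k)\,e^{A_\e(k)t}v,\; e^{A_\e(k)t}v\big\rangle \le 0,
\end{align}
so $\| e^{A_\e(k)t}\|\le 1$ uniformly in $0<\e<1$, $k\in\R$ and $t\ge 0$, which gives $C_*\le 1$; the relevant time direction here is the forward one along which the mean dynamics \eqref{evoofmeanont} is solved. For $A_\theta(k)$ the Hermitian part vanishes, so $A_\theta(k)$ is skew-Hermitian, $e^{A_\theta(k)t}$ is unitary for every $t$, and $D_*=1$.

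For the bound on $B_{\e,\theta}(k)$, since $A_\theta(k)$ has vanishing diagonal the matrix $B_{\e,\theta}(k)=A_\e(k)-A_\theta(k)$ has the nonzero entries
\begin{align}
B^{(1,1)}_{\e,\theta}(k) = -\frac{2\gamma R(\e k)}{j_\theta(\e)}, \qquad B^{(1,2)}_{\e,\theta}(k) = B^{(2,1)}_{\e,\theta}(k) = \sqrt{-1}\,\sgn(k)\Big(\frac{\omega(\e k)}{j_\theta(\e)} - A^{(1,2)}_\theta(k)\Big),
\end{align}
with $B^{(2,2)}_{\e,\theta}(k)=0$. Because for a $2\times 2$ matrix the operator norm is comparable to the largest modulus among its entries, it suffices to bound these two quantities for $|k|\le K$ by $c_{K,\theta}b_\theta(\e)$. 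The diagonal entry I would control by Taylor expanding $R$ at the origin: from \eqref{def:meanscat}, $R(\e k)\lesssim (\e k)^2\le K^2\e^2$ for $|k|\le K$, so that $|B^{(1,1)}_{\e,\theta}(k)|\lesssim \e^2/j_\theta(\e)$ with constant depending on $K$. The off-diagonal entry is the dispersion error: inserting the asymptotic expansion of $\hat\a(k)$ near $k=0$ from Appendix \ref{app:asympofa} (Lemmas \ref{lem:asympofa} and \ref{lem:asympofaco}), taking the square root to get $\omega=\sqrt{\hat\a}$, dividing by $j_\theta(\e)$ and subtracting the leading term $A^{(1,2)}_\theta(k)$, the residual is governed by the first subleading term of $\hat\a$. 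Collecting the resulting powers of $\e$ uniformly over $|k|\le K$ and taking the larger of the diagonal and off-diagonal contributions should produce $b_\theta(\e)$ as in \eqref{scale:matrem}.

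The main obstacle is precisely this last step. The case structure of $b_\theta(\e)$ mirrors the case structure of the small-$k$ expansion of $\hat\a=\omega^2$, in which the singular term $|2\pi k|^{\theta-1}$ and the analytic terms $(2\pi k)^2,(2\pi k)^4,\dots$ exchange the roles of leading and subleading contributions (with crossovers near $\theta=3$ and $\theta=5$), and at the integer values $\theta=2,3,5$ the expansion coefficients degenerate and produce the logarithmic factors in \eqref{scale:matrem}. The delicate bookkeeping is to carry the subleading term correctly through the square root and the rescaling $1/j_\theta(\e)$ in each of the nine regimes, and to determine in each regime whether the diagonal rate $\e^2/j_\theta(\e)$ or the off-diagonal dispersion rate sets $b_\theta(\e)$; reconciling these two contributions in the large-$\theta$ range, where they are of genuinely different orders, is where the argument must be done most carefully. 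By contrast, the exponential bounds are immediate from the Hermitian/skew-Hermitian splitting.
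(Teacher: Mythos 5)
Your treatment of $C_*$ and $D_*$ is correct, and it is actually a more robust route than the paper's: the paper only remarks that the eigenvalues of $A_{\e}(k)$ have nonpositive real parts, which for a non-normal family does not by itself yield a bound on $\sup_{t}\|\exp(A_{\e}(k)t)\|$ that is uniform in $(\e,k)$ (a Jordan-type transient can blow up as the spectral gap closes), whereas your dissipativity identity $A_{\e}(k)+A_{\e}(k)^{*}=2H_{\e}(k)\le 0$ gives $\|\exp(A_{\e}(k)t)\|\le 1$ for all $t\ge 0$ at once. Both you and the paper tacitly read the supremum as ranging over $t\ge 0$; for $\gamma>0$ and $R(\e k)>0$ the statement with $t\in\R$ is false as written, and only $t\ge 0$ is used in Lemma \ref{lem:finiteinterval}, so this is a harmless reading of a typo.

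The genuine gap is in the estimate of $B_{\e,\theta}$, which is the substantive half of the lemma. You correctly reduce to the two nonzero entries and identify the right inputs ($R(\e k)=O(\e^{2}k^{2})$ and the expansions of Lemma \ref{lem:asympofa}), but you stop at ``collecting the resulting powers of $\e$ \dots should produce $b_{\theta}(\e)$'': the nine-regime verification is exactly what has to be proved, and it does not follow formally from the setup. The paper's mechanism is to write the off-diagonal entry as a difference of square roots over their sum, i.e.\ as $\bigl(\hat{\a}(\e k)j_{\theta}(\e)^{-2}-C_{1}(\theta)|2\pi k|^{(\theta-1)\wedge 2}\bigr)\big/\bigl(\omega(\e k)j_{\theta}(\e)^{-1}+\sqrt{C_{1}(\theta)}\,|2\pi k|^{\frac{\theta-1}{2}\wedge 1}\bigr)$, insert the subleading term of $\hat{\a}$ in the numerator, and check in each regime that the numerator vanishes fast enough at $k=0$ to absorb the degenerating denominator; none of this bookkeeping appears in your write-up. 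Worse, your proposed completion --- take the larger of the diagonal rate $\e^{2}/j_{\theta}(\e)$ and the dispersion error --- does not reproduce \eqref{scale:matrem} when $\theta>4$: there $\e^{2}/j_{\theta}(\e)=\e$, which strictly dominates the stated $b_{\theta}(\e)$ (namely $\e^{\theta-3}$, $\e^{2}\log\e^{-1}$, $\e^{2}$), so the ``tension'' you flag is real and cannot be reconciled in the way you suggest. (The paper sidesteps this by asserting $|B^{(1,1)}_{\e}(k)|\lesssim\e^{2}$, apparently dropping the factor $j_{\theta}(\e)^{-1}$; the corrected bound $\max\bigl(b_{\theta}(\e),\e^{2}j_{\theta}(\e)^{-1}\bigr)$ still tends to $0$ and is all that Lemma \ref{lem:finiteinterval} needs, but your proposal neither carries out the computation nor resolves this discrepancy, so as it stands the second assertion of the lemma is not proved.)
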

\begin{lemma}\label{lem:finiteinterval}
Suppose that $(\hat{\bar{p}}_{\e}^{(0)}(k,t), \hat{\bar{l}}_{\e}^{(0)}(k,t) )$ is the solution of the following equation:
\begin{align}
\frac{d}{dt} \begin{pmatrix} \hat{\bar{p}}_{\e}^{(0)}(k,t) \\ \hat{\bar{l}}_{\e}^{(0)}(k,t) \end{pmatrix} &= A(k) \begin{pmatrix} \hat{\bar{p}}_{\e}^{(0)}(k,t) \\ \hat{\bar{l}}_{\e}^{(0)}(k,t) \end{pmatrix}, \\
\begin{pmatrix} \hat{\bar{p}}_{\e}^{(0)}(k,0) \\ \hat{\bar{l}}_{\e}^{(0)}(k,0) \end{pmatrix}  &= \begin{pmatrix} \hat{\bar{p}}_{\e}(k,0) \\ \hat{\bar{l}}_{\e}(k,0) \end{pmatrix}.
\end{align}
Then for any $K > 0$, we have
\begin{align}
\lim_{\e \to 0} \int_{|k| \le K} dk ~ \Big| \hat{\bar{p}}_{\e}(k,t) - \hat{\bar{p}}_{\e}^{(0)}(k,t) \Big|^2 + \Big| \hat{\bar{l}}_{\e}(k,t) -  \hat{\bar{l}}_{\e}^{(0)}(k,t) \Big|^{2} = 0.
\end{align}
\end{lemma}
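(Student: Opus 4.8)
The plan is to exploit that, for each fixed $k$, both $(\hat{\bar{p}}_{\e},\hat{\bar{l}}_{\e})$ and $(\hat{\bar{p}}^{(0)}_{\e},\hat{\bar{l}}^{(0)}_{\e})$ solve \emph{linear, deterministic} ODEs in $t$ — driven by the generators $A_{\e}(k)$ and $A(k)=A_{\theta}(k)$ respectively — started from the \emph{same} initial vector. Hence each is a matrix exponential acting on that common datum, and the difference of the two solutions is governed entirely by the perturbation $B_{\e,\theta}(k)=A_{\e}(k)-A_{\theta}(k)$, which Lemma \ref{lem:expnorm} controls by $b_{\theta}(\e)\to 0$ on compact $k$-intervals. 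So I would compare the two semigroups by variation of parameters and estimate crudely, using the uniform operator-norm bounds $C_{*}$ and $D_{*}$ of Lemma \ref{lem:expnorm}.

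Concretely, write $\mathbf{u}_{\e}(k):=(\hat{\bar{p}}_{\e}(k,0),\hat{\bar{l}}_{\e}(k,0))^{\mathsf{T}}$ for the common initial vector, so that the two solutions at time $t$ are $\exp(A_{\e}(k)t)\mathbf{u}_{\e}(k)$ and $\exp(A(k)t)\mathbf{u}_{\e}(k)$. Their difference $\mathbf{D}_{\e}(k,t)$ satisfies $\tfrac{d}{dt}\mathbf{D}_{\e}=A_{\e}(k)\mathbf{D}_{\e}+B_{\e,\theta}(k)\exp(A(k)t)\mathbf{u}_{\e}(k)$ with $\mathbf{D}_{\e}(k,0)=0$, whence
\begin{align}
\mathbf{D}_{\e}(k,t)=\int_{0}^{t}\exp\big(A_{\e}(k)(t-s)\big)\,B_{\e,\theta}(k)\,\exp\big(A(k)s\big)\,\mathbf{u}_{\e}(k)\,ds.
\end{align}
Taking norms and inserting $\|\exp(A_{\e}(k)(t-s))\|\le C_{*}$, $\|\exp(A(k)s)\|\le D_{*}$, and, for $|k|\le K$, $\|B_{\e,\theta}(k)\|\le c_{K,\theta}\,b_{\theta}(\e)$ yields the pointwise bound
\begin{align}
|\mathbf{D}_{\e}(k,t)|\le C_{*}D_{*}c_{K,\theta}\,b_{\theta}(\e)\,t\,|\mathbf{u}_{\e}(k)|,\qquad |k|\le K.
\end{align}

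It then remains only to square, integrate, and control the initial data. Squaring and integrating over $|k|\le K$,
\begin{align}
\int_{|k|\le K}|\mathbf{D}_{\e}(k,t)|^{2}\,dk\le \big(C_{*}D_{*}c_{K,\theta}\,t\big)^{2}\,b_{\theta}(\e)^{2}\int_{|k|\le K}|\mathbf{u}_{\e}(k)|^{2}\,dk.
\end{align}
The initial integral stays bounded uniformly in $\e$: by Jensen's inequality the mean initial data inherit the $\mathbb{L}^{2}$ convergence \eqref{ass:equiassofthm3}, so $\int_{\T/\e}|\mathbf{u}_{\e}(k)|^{2}\,dk$ converges to $\int_{\R}\big(|\tilde{\bar{p}}_{0}|^{2}+|\tilde{\bar{l}}_{0}|^{2}\big)\,d\xi<\infty$, and a fortiori $\int_{|k|\le K}|\mathbf{u}_{\e}|^{2}\,dk$ remains bounded. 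Since every branch of $b_{\theta}(\e)$ in \eqref{scale:matrem} tends to $0$ as $\e\to 0$, the right-hand side vanishes in the limit, which is the claim.

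The only genuine subtlety — and the reason the statement is localized to $|k|\le K$ rather than all of $\T/\e$ — is that the smallness bound $\|B_{\e,\theta}(k)\|\le c_{K,\theta}b_{\theta}(\e)$ holds only on compact $k$-sets: near the band edge $|k|\sim\e^{-1}$ the matrices $A_{\e}(k)$ and $A_{\theta}(k)$ are genuinely far apart, so $B_{\e,\theta}$ is \emph{not} small there and this argument breaks down. Thus Lemma \ref{lem:finiteinterval} handles only the low-frequency (macroscopic) part; the complementary high-frequency contribution is controlled separately in the proof of Proposition \ref{prop:meandynamics} by the uniform energy/$\mathbb{L}^{2}$ bound, which shows the tail $\int_{|k|\ge K}$ of the initial mass is small uniformly in $\e$. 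Everything else is a routine Duhamel estimate once Lemma \ref{lem:expnorm} is in hand.
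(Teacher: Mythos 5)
Your proof is correct and follows essentially the same route as the paper: a Duhamel/variation-of-parameters representation of the difference, the operator-norm bounds $C_{*}$, $D_{*}$ and $\|B_{\e,\theta}(k)\|\lesssim b_{\theta}(\e)$ from Lemma \ref{lem:expnorm}, and a uniform $\mathbb{L}^{2}$ bound on the data. The only (harmless) difference is bookkeeping: you propagate the homogeneous part with $\exp(A_{\e}(k)(t-s))$ and force with the limiting flow $\exp(A(k)s)\mathbf{u}_{\e}(k)$, so you only need the initial-data bound, whereas the paper propagates with $\exp(A(k)(t-s))$, forces with $(\hat{\bar{p}}_{\e},\hat{\bar{l}}_{\e})(k,s)$, and invokes the energy conservation \eqref{eq:energyconservation} to control the forcing at positive times.
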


From now on we prove Lemma \ref{lem:expnorm} and Lemma \ref{lem:finiteinterval}, and then we verify Proposition  \ref{prop:meandynamics} by using these lemmas.

\subsubsection{Proof of Lemma \ref{lem:expnorm}}

\begin{proof}

Since the eigenvalues of $A(k)$ are imaginary for every $k \in \T$, we have $D_{*} < \infty$. The eigenvalues of $A_{\e}(k)$ are 
\begin{align}
- \frac{1}{j_{\theta}(\e)} \Big( \gamma R(\e k) \pm \sqrt{\gamma^{2}R^{2}(\e k) - \hat{\a}(\e k)} \Big).
\end{align} 
Since $R(\cdot)$ and $\hat{\a}(\cdot)$ are positive, the real parts of the eigenvalues of $A_{\e}(k)$ are negative. Therefore we have $C_{*} < \infty$.

Next we consider the order of $\sup_{|k| \le K} \| B_{\e,\theta}(k) \|, ~ K > 0$. From Lemma \ref{lem:asympofa} we obtain
\begin{align}
| B^{(1,2)}_{\e}(k) | &= \begin{dcases} \Big| \frac{C_{1}(\theta) |2\pi k|^{\theta - 1} - \hat{\a}(\e k) j_{\theta}(\e)^{-2}}{ \sqrt{-1} \operatorname{sgn}(k) \omega(\e k) j_{\theta}(\e)^{-1} + \sqrt{C_{1}(\theta)} \sqrt{-1} \sgn(k) |2\pi k|^{\frac{\theta - 1}{2}} } \Big| & 1 < \theta \le 3, \\ \Big| \frac{C_{1}(\theta) |2\pi k|^{2} - \hat{\a}(\e k) \e^{-2} }{ \sqrt{-1} \operatorname{sgn}(k) \omega(\e k) \e^{-1} + \sqrt{C_{1}(\theta)} \sqrt{-1} (2\pi k) } \Big| & \theta > 3, \end{dcases} \\
&\lesssim \begin{dcases} \e \quad & 1 < \theta < 2, \\
\e \log{\e^{-1}} \quad & \theta = 2, \\
\e^{3 - \theta} \quad & 2 < \theta < 3 \\
(\log{\e^{-1}})^{-1} \quad & \theta = 3, \\
\e^{\theta - 3} \quad & 3 < \theta < 4, \\
\e \quad & \theta = 4, \\
\e^{\theta - 3} \quad & 4 < \theta < 5, \\
\e^{2} \log{\e^{-1}} \quad & \theta = 5, \\
\e^{2}  \quad &  \theta > 5, 
\end{dcases}
\end{align}
and
\begin{align} 
|B^{(1,1)}_{\e}(k)| &\lesssim \e^{2}
\end{align}
on $|k| \le K$. Hence we complete the proof of this lemma.

\end{proof}

\subsubsection{Proof of Lemma \ref{lem:finiteinterval}}
\begin{proof}

Fix a positive constant $K > 0$ and $0 < \e < 1$ such that $K < \frac{1}{2 \e}$. From \eqref{evoofmeanont} and the decomposition of $A_{\e}(k)$, we have
\begin{align}
\frac{d}{dt} \begin{pmatrix} \hat{\bar{p}}_{\e}(k,t)  - \hat{\bar{p}}_{\e}^{(0)}(k,t)  \\ \hat{\bar{l}}_{\e}(k,t)  - \hat{\bar{l}}_{\e}^{(0)}(k,t)  \end{pmatrix} = A(k) \begin{pmatrix} \hat{\bar{p}}_{\e}(k,t) - \hat{\bar{p}}_{\e}^{(0)}(k,t) \\ \hat{\bar{l}}_{\e}(k,t) - \hat{\bar{l}}_{\e}^{(0)}(k,t) \end{pmatrix}  + B_{\e,\theta}(k) \begin{pmatrix} \hat{\bar{p}}_{\e}(k,t) \\ \hat{\bar{l}}_{\e}(k,t) \end{pmatrix}.
\end{align}
By using Duhamel's formula, we have
\begin{align}
\begin{pmatrix} \hat{\bar{p}}_{\e}(k,t) - \hat{\bar{p}}_{\e}^{(0)}(k,t) \\ \hat{\bar{l}}_{\e}(k,t) - \hat{\bar{l}}_{\e}^{(0)}(k,t) \end{pmatrix}  = \int_{0}^{t} ds ~ \exp{ \Big( (t-s)A(k) \Big) } B_{\e,\theta}(k) \begin{pmatrix} \hat{\bar{p}}_{\e}(k,s) \\ \hat{\bar{l}}_{\e}(k,s) \end{pmatrix}.
\end{align}
From \eqref{eq:energyconservation} and Lemma \ref{lem:expnorm}, we have
\begin{align}
&\int_{|k| \le K} dk ~ \Big| \hat{\bar{p}}_{\e}(k,t) - \hat{\bar{p}}_{\e}^{(0)}(k,t) \Big|^2 + \Big| \hat{\bar{l}}_{\e}(k,t) -  \hat{\bar{l}}_{\e}^{(0)}(k,t) \Big|^{2}  \\
&\le t \int_{|k| \le K} dk \int_{0}^{t} ds ~ \Big| \exp{ \Big( (t-s)A(k) \Big) } B_{\e,\theta}(k) \begin{pmatrix} \hat{\bar{p}}_{\e}(k,s) \\ \hat{\bar{l}}_{\e}(k,s) \end{pmatrix} \Big|^{2} \\
&\lesssim t (b_{\theta}(\e))^{2} \int_{0}^{t} ds \int_{\T} dk  ~ \e \mathbb{E}_{\e} \Big[ |\hat{\psi}(k,\frac{s}{j_{\theta}(\e)})|^{2} \Big] \\
&\lesssim t^{2} (b_{\theta}(\e))^{2}.
\end{align}
We established this lemma.
\end{proof}

\subsubsection{Proof of Proposition \ref{prop:meandynamics}}
\begin{proof}

By using Schwarz's inequality we have
\begin{align}
&\int_{\frac{\T}{\e}} dk ~ \Big| \hat{\bar{p}}_{\e}(k,t) - \tilde{\bar{p}}(k,t) \Big|^2 + \Big| \hat{\bar{l}}_{\e}(k,t) -  \tilde{\bar{l}}(k,t) \Big|^{2} \\
&\le 2 \int_{\frac{\T}{\e}} dk ~ \Big| \hat{\bar{p}}_{\e}(k,t) - \hat{\bar{p}}_{\e}^{(0)}(k,t) \Big|^2 + \Big| \hat{\bar{l}}_{\e}(k,t) -  \hat{\bar{l}}_{\e}^{(0)}(k,t) \Big|^{2} \\
& \quad + 2 \int_{\frac{\T}{\e}} dk ~ \Big| \hat{\bar{p}}_{\e}^{(0)}(k,t) - \tilde{\bar{p}}(k,t) \Big|^2 + \Big| \hat{\bar{l}}_{\e}^{(0)}(k,t) -  \tilde{\bar{l}}(k,t) \Big|^{2} \\
&\le 2 \int_{\frac{\T}{\e}} dk ~ \Big| \hat{\bar{p}}_{\e}(k,t) - \hat{\bar{p}}_{\e}^{(0)}(k,t) \Big|^2 + \Big| \hat{\bar{l}}_{\e}(k,t) -  \hat{\bar{l}}_{\e}^{(0)}(k,t) \Big|^{2}  \\
& \quad + 2 \int_{\frac{\T}{\e}} dk ~ \Big\| \exp{\Big( A(k) t \Big)} \Big\|^{2} \Big| \begin{pmatrix} \hat{\bar{p}}_{\e}(k,0) \\ \hat{\bar{l}}_{\e}(k,0) \end{pmatrix} - \begin{pmatrix} \tilde{\bar{p}}_{0}(k) \\ \tilde{\bar{l}}_{0}(k) \end{pmatrix} \Big|^{2} \label{ineq:meandynamics}.
\end{align}
From \eqref{ass:phononic} and Lemma \ref{lem:expnorm}, we see that the second term of \eqref{ineq:meandynamics} vanishes. Now we estimate the first term. From Lemma \ref{lem:finiteinterval}, it suffices to show that
\begin{align}\label{sufflem51}
\varlimsup_{K \to \infty} \varlimsup_{\e \to 0} \int_{K < |k| < \frac{1}{2 \e}} dk ~ \Big| \hat{\bar{p}}_{\e}(k,t) - \hat{\bar{p}}_{\e}^{(0)}(k,t) \Big|^2 + \Big| \hat{\bar{l}}_{\e}(k,t) -  \hat{\bar{l}}_{\e}^{(0)}(k,t) \Big|^{2} = 0. 
\end{align} 
From \eqref{evoofmeanont}, we can write 
\begin{align}
\begin{pmatrix} \hat{\bar{p}}_{\e}(k,t) \\ \hat{\bar{l}}_{\e}(k,t) \end{pmatrix} = \exp{\Big( A_{\e}(k) t \Big)} \begin{pmatrix} \hat{\bar{p}}_{\e}(k,0) \\ \hat{\bar{l}}_{\e}(k,0) \end{pmatrix}.
\end{align} 
In addition, $(\tilde{\bar{p}}(\xi,t), \tilde{\bar{l}}(\xi,t)), \xi \in \R, t \ge 0$ satisfies
\begin{align}
\partial_{t} \begin{pmatrix} \tilde{\bar{p}}(\xi,t) \\ \tilde{\bar{p}}(\xi,t) \end{pmatrix} = A(\xi) \begin{pmatrix} \tilde{\bar{p}}(\xi,t) \\ \tilde{\bar{l}}(\xi,t)\end{pmatrix},
\end{align}
and hence we obtain
\begin{align}
\begin{pmatrix} \tilde{\bar{p}}(\xi,t) \\ \tilde{\bar{p}}(\xi,t) \end{pmatrix} = \exp{\Big( A(\xi) t \Big)} \begin{pmatrix} \tilde{\bar{p}}_0(\xi) \\ \tilde{\bar{l}}_0(\xi)\end{pmatrix}.
\end{align}
By using \eqref{ass:phononic}, Lemma \ref{lem:expnorm} and Schwarz's inequality we obtain
\begin{align}
& \varlimsup_{\e \to 0}  \int_{K < |k| < \frac{1}{2 \e}} dk ~ \Big| \hat{\bar{p}}_{\e}(k,t) - \hat{\bar{p}}_{\e}^{(0)}(k,t) \Big|^2 + \Big| \hat{\bar{l}}_{\e}(k,t) -  \hat{\bar{l}}_{\e}^{(0)}(k,t) \Big|^{2} \\
& =  \varlimsup_{\e \to 0}  \int_{K < |k| < \frac{1}{2 \e}} dk ~ \Big|  \exp{\Big( A_{\e}(k) t \Big)}  \begin{pmatrix} \hat{\bar{p}}_{\e}(k,0) \\ \hat{\bar{l}}_{\e}(k,0) \end{pmatrix} - \exp{\Big( A(k) t \Big)} \begin{pmatrix} \hat{\bar{p}}_{\e}(k,0) \\ \hat{\bar{l}}_{\e}(k,0) \end{pmatrix}  \Big|^{2} \\
&\le 2 \varlimsup_{\e \to 0}  \int_{K < |k| < \frac{1}{2 \e}} dk ~ \Big| \exp{\Big( A_{\e}(k) t \Big)} \begin{pmatrix} \hat{\bar{p}}_{\e}(k,0) \\ \hat{\bar{l}}_{\e}(k,0) \end{pmatrix} \Big|^{2} +  \Big| \exp{\Big( A(k) t \Big)} \begin{pmatrix} \hat{\bar{p}}_{\e}(k,0) \\ \hat{\bar{l}}_{\e}(k,0) \end{pmatrix} \Big|^{2} \\
&\le 2(C_{*} + D_{*})  \varlimsup_{\e \to 0} \int_{K < |k| < \frac{1}{2 \e}} dk ~ |\hat{\bar{p}}_{\e}(k,0)|^{2} + |\hat{\bar{l}}_{\e}(k,0)|^{2} \\
&\le 2(C_{*} + D_{*}) \int_{|\xi| > K} d\xi ~ |\tilde{\bar{p}}_{0}(\xi)|^{2} + |\tilde{\bar{l}}_{0}(\xi)|^{2},
\end{align}
and thus we have \eqref{sufflem51}.
\end{proof}

\section{Proof of Theorem \ref{thm:superballisticenergy}}

First we observe that from Theorem \ref{thm:superballistic1} it is sufficient to show that
\begin{align}
&\lim_{\e \to 0} \e \sum_{x \in \Z} \mathbb{E}_{\e}\Big[e_{x}(\frac{t}{j_{\theta}(\e)}) - \frac{1}{2} p_{x}^2(\frac{t}{j_{\theta}(\e)}) \Big] J(\e x) \\
&= \begin{dcases} \frac{1}{4} \int_{\R} dy ~ L(y,t) J(y) & 1 < \theta < 3, \\
\frac{1}{2} \int_{\R} dy ~ \bar{l}^2(y,t) J(y) & \theta \ge 3, \end{dcases} \\
&= \begin{dcases} \int_{\R^2} d\xi dk ~ \sgn(k)\sgn(-k-\xi) \frac{|\xi|^{\theta - 1} - |k|^{\theta - 1} - |k + \xi|^{\theta - 1}}{4 |k|^{\frac{\theta - 1}{2}}|k+\xi|^{\frac{\theta - 1}{2}}}  \tilde{\bar{l}}(k,t) \tilde{\bar{l}}(-k - \xi,t) \tilde{J}(\xi) & 1 < \theta < 3, \\ \frac{1}{2} \int_{\R^2} d\xi dk ~ \tilde{\bar{l}}(k,t) \tilde{\bar{l}}(-k - \xi,t) \tilde{J}(\xi) & \theta \ge 3, \end{dcases} 
\end{align}
where $L(y,t)$ is defined in \eqref{def:potentialenergy}. To simplify the notation, we may omit the variable $t \ge 0$. By using the Poisson summation formula,  we obtain
\begin{align}
&\e \sum_{x} \mathbb{E}_{\e}\Big[e_{x}(\frac{t}{j_{\theta}(\e)}) - \frac{1}{2} p_{x}^2(\frac{t}{j_{\theta}(\e)}) \Big] J(\e x)  \\
&= \frac{\e}{4} \sum_{x \in \Z} \int_{\T^2}  dk dk^{'} ~ e^{2 \pi \sqrt{-1} (k+k^{'}) x} \sgn(k)\sgn(k^{'}) F(k,k^{'}) \mathbb{E}_{\e}\Big[ \hat{l}(k) \hat{l}(k^{'}) \Big]  \int_{\R} d\xi ~ e^{2 \pi \sqrt{-1} \xi \e x} \tilde{J}(\xi) \\
&= \frac{1}{4} \int_{\R \times \T} d\xi dk ~ \sgn(k)\sgn(-k - \e \xi) F(k,-k - \e \xi) \e \mathbb{E}_{\e}\Big[ \hat{l}(k) \hat{l}(-k - \e \xi) \Big] \tilde{J}(\xi) \\
&= \frac{1}{4} \int_{\R \times \frac{\T}{\e}} d\xi dk ~ \sgn(k)\sgn(-k - \xi) F(\e k,- \e k - \e \xi) \e^2 \mathbb{E}_{\e}\Big[ \hat{l}(\e k) \hat{l}(-\e k - \e \xi) \Big] \tilde{J}(\xi).
\end{align}
From Theorem \ref{thm:superballistic1} and the boundedness of $F(k,k^{'})$, we have
\begin{align}
&\lim_{\e \to 0} \int_{\R \times \frac{\T}{\e}} d\xi dk ~ \sgn(k)\sgn(-k - \xi) F(\e k,- \e k - \e \xi) \e^2 \mathbb{E}_{\e}\Big[ \hat{l}(\e k,\frac{t}{j_{\theta}(\e)}) \hat{l}(-\e k - \e \xi,\frac{t}{j_{\theta}(\e)}) \Big] \tilde{J}(\xi)  \\
&= \lim_{\e \to 0} \int_{\R \times \frac{\T}{\e}} d\xi dk ~ \sgn(k)\sgn(-k - \xi) F(\e k,- \e k - \e \xi) \tilde{\bar{l}}(k,t) \tilde{\bar{l}}(-k - \xi,t)  \tilde{J}(\xi).
\end{align}
Since $1_{\{ k  \in \frac{\T}{\e} \}} F(\e k,- \e k - \e \xi)$ is uniformly bounded and 
\begin{align}
&\lim_{\e \to 0} 1_{\{ k  \in \frac{\T}{\e} \}} \sgn(k)\sgn(-k - \xi) F(\e k,- \e k - \e \xi) \\ 
&= \begin{dcases} \sgn(k)\sgn(-k-\xi) \frac{|\xi|^{\theta - 1} - |k|^{\theta - 1} - |k + \xi|^{\theta - 1}}{|k|^{\frac{\theta - 1}{2}}|k+\xi|^{\frac{\theta - 1}{2}}} & 1 < \theta < 3, \\
2 & \theta \ge 3,  \end{dcases}
\end{align}
almost every $(\xi,k) \in \R^2$, we obtain 
\begin{align}
&\lim_{\e \to 0} \int_{\R \times \frac{\T}{\e}} d\xi dk ~ \sgn(k)\sgn(-k - \xi) F(\e k,- \e k - \e \xi) \tilde{\bar{l}}(k,t) \tilde{\bar{l}}(-k - \xi,t)  \tilde{J}(\xi) \\
&= \begin{dcases} \int_{\R^2} d\xi dk ~ \sgn(k)\sgn(-k-\xi) \frac{|\xi|^{\theta - 1} - |k|^{\theta - 1} - |k + \xi|^{\theta - 1}}{|k|^{\frac{\theta - 1}{2}}|k+\xi|^{\frac{\theta - 1}{2}}}  \tilde{\bar{l}}(k,t) \tilde{\bar{l}}(-k - \xi,t) \tilde{J}(\xi) & 1 < \theta < 3, \\ 2 \int_{\R^2} d\xi dk ~ \tilde{\bar{l}}(k,t) \tilde{\bar{l}}(-k - \xi,t) \tilde{J}(\xi) & \theta \ge 3 \end{dcases}
\end{align}
and thus we get \eqref{lim:superballisticenergy}.

\section{Proof of Theorem \ref{thm:flucnormal} and \ref{thm:flucnormalLLN}}

In this section we show Theorem \ref{thm:flucnormal} and \ref{thm:flucnormalLLN} by using the strategy which is similar to that used in Section \ref{sec:proofofsuperballistic}. First we observe that 
\begin{align}
&\lim_{\e \to 0} \e \sum_{x \in \Z} \mathbb{E}_{\e}[f^{\pm}_{x}(\frac{t}{n_{\theta}(\e)})] \Big(S^{\pm}_{\theta}(\frac{t}{m_{\theta}(\e)})J\Big)(\e x) = \lim_{\e \to 0} \int_{\frac{\T}{\e}} dk ~ \hat{\bar{F}}^{\pm}_{\e}(k,t) \tilde{J}(-k),
\end{align}
where $\hat{\bar{F}}^{\pm}_{\e}(k,t), k \in \frac{\T}{\e}, t \ge 0$ is defined as
\begin{align}
\hat{\bar{F}}^{\pm}_{\e}(k,t) &:= \e \mathbb{E}_{\e}\Big[ \hat{F}^{\pm}_{\e}(k,t) \Big] , \\
\hat{F}^{\pm}_{\e}(k,t) &:= \begin{dcases} \exp\Big({\frac{\mp \sqrt{C_{1}(\theta)} \sqrt{-1} \sgn(k) |2\pi k|^{\frac{\theta - 1}{2}}t}{m_{\theta}(\e)}}\Big) \hat{f}^{\pm}_{\e}(k,t) & 2 < \theta \le 3, \\ \exp\Big({\frac{\mp \sqrt{C_{1}(\theta)} \sqrt{-1} (2\pi k) t}{m_{\theta}(\e)}}\Big) \hat{f}^{\pm}_{\e}(k,t) & \theta > 3, \end{dcases}
\end{align}
and 
\begin{align}
\hat{f}^{\pm}_{\e}(k,t) := \e \hat{f}^{\pm}(\e k, \frac{t}{n_{\theta}(\e)}).
\end{align}
The main subject of this section is to show the following Proposition, which implies Theorem \ref{thm:flucnormal}.

\begin{proposition}\label{prop:toshowfluc}
Suppose that $\theta > 2$, $\gamma > 0$ and \eqref{ass:phononic}. For any $t \ge 0$, we have
\begin{align}
\varlimsup_{\e \to 0} \int_{\frac{\T}{\e}} dk ~ \Big| \begin{pmatrix}\hat{\bar{F}}^{+}_{\e}(k,t) \\ \hat{\bar{F}}^{-}_{\e}(k,t) \end{pmatrix} - \begin{pmatrix}\tilde{\bar{F}}^{+}(k,t) \\ \tilde{\bar{F}}^{-}(k,t) \end{pmatrix} \Big|^{2} = 0.
\end{align}

\end{proposition}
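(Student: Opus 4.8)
The plan is to prove Proposition~\ref{prop:toshowfluc} by reproducing, almost line by line, the argument that takes Proposition~\ref{prop:meandynamics} to Theorem~\ref{thm:superballistic1}; once the $\mathbb{L}^2(\tfrac{\T}{\e})$ convergence of the recentered mean is in hand, Theorem~\ref{thm:flucnormal} follows immediately, since by the Cauchy--Schwarz inequality and Parseval's identity $\int_{\frac{\T}{\e}}dk~\hat{\bar{F}}^{\pm}_{\e}(k,t)\tilde{J}(-k)\to\int_{\R}dy~\bar{F}^{\pm}(y,t)J(y)$. First I would derive the linear equation obeyed by $(\hat{\bar{F}}^{+}_{\e},\hat{\bar{F}}^{-}_{\e})$. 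Passing to the normal-mode coordinates $\hat{f}^{\pm}=\hat{p}\pm\hat{l}$ conjugates the generator of \eqref{evoofmeanont} by $P=\begin{pmatrix}1&1\\1&-1\end{pmatrix}$ (with $j_{\theta}$ replaced by the microscopic scaling $n_{\theta}(\e)$), and the further conjugation by the recentering phase $\exp(\mp\sqrt{-1}\phi_{\e}(k)t)$, where $\phi_{\e}(k):=\sqrt{C_{1}(\theta)}\,\sgn(k)|2\pi k|^{\frac{\theta-1}{2}}/m_{\theta}(\e)$ for $2<\theta\le3$ and with $|2\pi k|^{\frac{\theta-1}{2}}$ replaced by $2\pi k$ for $\theta>3$, produces the time-dependent generator
\begin{align}
\widetilde{M}_{\e}(k,t)=\begin{pmatrix} -\dfrac{\gamma R(\e k)}{n_{\theta}(\e)}+\beta_{\e}(k) & -\dfrac{\gamma R(\e k)}{n_{\theta}(\e)}e^{-2\sqrt{-1}\phi_{\e}(k)t} \\ -\dfrac{\gamma R(\e k)}{n_{\theta}(\e)}e^{2\sqrt{-1}\phi_{\e}(k)t} & -\dfrac{\gamma R(\e k)}{n_{\theta}(\e)}-\beta_{\e}(k) \end{pmatrix},
\end{align}
where $\beta_{\e}(k):=\tfrac{\sqrt{-1}\sgn(k)\omega(\e k)}{n_{\theta}(\e)}-\sqrt{-1}\phi_{\e}(k)$ is the residual diagonal oscillation left after recentering.

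Second I would identify the limit $M_{\theta}(k)$ of \eqref{eq:flucnormal} and prove the analogue of Lemma~\ref{lem:expnorm}. The defining relation $n_{\theta}(\e)m_{\theta}(\e)^{-1}=j_{\theta}(\e)$ is exactly what makes the leading term of $\sqrt{-1}\sgn(k)\omega(\e k)/n_{\theta}(\e)$ cancel $\sqrt{-1}\phi_{\e}(k)$, so that $\beta_{\e}(k)$ is governed by the \emph{sub-leading} asymptotics of $\omega(\e k)=\sqrt{\hat{\a}(\e k)}$; feeding in the expansions of Appendix~\ref{app:asympofa} (the constants $C_{1}(\theta),C_{2}(\theta)$, and the logarithmic correction at $\theta=3$) gives $\beta_{\e}(k)\to M^{(1,1)}_{\theta}(k)$ on compact $k$-sets. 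Meanwhile the damping coefficient $\gamma R(\e k)/n_{\theta}(\e)$ tends to $0$ when $2<\theta<4$ but to $\tfrac{3\gamma}{2}(2\pi k)^{2}$ when $\theta\ge4$, which is precisely the dichotomy in the definition of $M_{\theta}$. Uniform boundedness of the propagators is cheap here: since $\widetilde{M}_{\e}(k,t)=D(t)N_{\e}(k)D(t)^{-1}+\dot{D}(t)D(t)^{-1}$ with $D(t)=\mathrm{diag}(e^{-\sqrt{-1}\phi_{\e}t},e^{\sqrt{-1}\phi_{\e}t})$ unitary, its propagator factors as $U_{\e}(t,s)=D(t)e^{N_{\e}(k)(t-s)}D(s)^{-1}$, so $\|U_{\e}(t,s)\|\le\|P\|\|P^{-1}\|\,\|e^{N_{\e}(k)(t-s)}\|$ reduces to the bound of Lemma~\ref{lem:expnorm}, and $\exp(M_{\theta}(k)t)$ is bounded because its spectrum is purely imaginary for $2<\theta<4$ and dissipative (inherited from $R\ge0$) for $\theta\ge4$. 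This furnishes a remainder estimate $\sup_{|k|\le K}\|\widetilde{M}_{\e}(k,t)-M_{\theta}(k)\|\lesssim\tilde{b}_{\theta}(\e)$ for a suitable rate $\tilde{b}_{\theta}(\e)\to0$, away from the oscillatory off-diagonal.

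Third, on a fixed window $|k|\le K$ I would run the Duhamel identity of Lemma~\ref{lem:finiteinterval}, writing the difference between $(\hat{\bar{F}}^{\pm}_{\e})$ and the solution of $\partial_{t}(\tilde{\bar{F}}^{+},\tilde{\bar{F}}^{-})^{\top}=M_{\theta}(k)(\tilde{\bar{F}}^{+},\tilde{\bar{F}}^{-})^{\top}$ as a time integral of the propagators against $(\widetilde{M}_{\e}(k,s)-M_{\theta}(k))$ applied to the solution, the latter being uniformly $\mathbb{L}^2$-bounded by the energy conservation \eqref{eq:energyconservation}. The genuinely new feature relative to Section~\ref{sec:proofofsuperballistic} is the oscillatory off-diagonal entry $-\tfrac{\gamma R(\e k)}{n_{\theta}(\e)}e^{\mp2\sqrt{-1}\phi_{\e}(k)t}$. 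When $2<\theta<4$ its amplitude $\gamma R(\e k)/n_{\theta}(\e)$ already vanishes, so $M^{(1,2)}_{\theta}=0$ and it contributes nothing; but when $\theta\ge4$ the amplitude is of order one while $\phi_{\e}(k)\to\infty$, and extracting the \emph{effective} contribution of this rapidly rotating term to the limiting generator — via integration by parts / a stationary-phase estimate in the Duhamel integral, or, after pairing with $\tilde{J}(-k)$, via the rapid oscillation in $k$ — is the step I expect to be the main obstacle. This is also where the regimes $2<\theta<3$, $\theta=3$ (with its logarithm), $3<\theta<4$, and $\theta\ge4$ each require their own asymptotic bookkeeping and where the non-monotone dependence of the scalings on $\theta$ enters.

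Finally I would dispose of the tail $K<|k|<\tfrac{1}{2\e}$ exactly as at the end of the proof of Proposition~\ref{prop:meandynamics}: by the uniform boundedness of the two evolution operators this piece is bounded by a constant times $\int_{|\xi|>K}d\xi~(|\tilde{\bar{F}}^{+}(\xi,0)|^{2}+|\tilde{\bar{F}}^{-}(\xi,0)|^{2})$, which vanishes as $K\to\infty$ since $\bar{F}^{\pm}(\cdot,0)=\bar{p}_{0}\pm\bar{l}_{0}\in C^{\infty}_{0}(\R)$ by \eqref{ass:phononic}. Combining the compact-window convergence with this tail bound and letting $K\to\infty$ yields $\varlimsup_{\e\to0}\int_{\frac{\T}{\e}}dk~|(\hat{\bar{F}}^{+}_{\e},\hat{\bar{F}}^{-}_{\e})-(\tilde{\bar{F}}^{+},\tilde{\bar{F}}^{-})|^{2}=0$, which is Proposition~\ref{prop:toshowfluc}.
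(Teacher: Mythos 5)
Your overall architecture coincides with the paper's: write the recentered mean dynamics as a linear ODE in $k$-space, split the generator into a limiting part $M_{\theta}(k)$ plus a remainder, prove uniform bounds on both propagators together with a rate $r_{\theta}(\e)$ for the remainder on compact $k$-sets (this is exactly the paper's Lemma \ref{lem:boundofM}, obtained from the sub-leading asymptotics of $\omega(\e k)$ in Appendix \ref{app:asympofa} and the relation $n_{\theta}(\e)m_{\theta}(\e)^{-1}=j_{\theta}(\e)$), run Duhamel on $|k|\le K$ using the energy identity \eqref{eq:energyconservation} (Lemma \ref{lem:convtomean}), and dispose of the tail $K<|k|<\tfrac{1}{2\e}$ by the uniform propagator bounds and the decay of $\tilde{\bar{p}}_{0},\tilde{\bar{l}}_{0}$, exactly as at the end of the proof of Proposition \ref{prop:meandynamics}. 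Your factorization $U_{\e}(t,s)=D(t)e^{N_{\e}(k)(t-s)}D(s)^{-1}$ with $D$ unitary is a clean route to the uniform propagator bound, and for $2<\theta<4$ your argument is complete to the same standard as the paper's.

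The one substantive divergence is the off-diagonal entry, and there you are more careful than the source but you stop short. You are right that conjugating by the recentering phase produces $-\tfrac{\gamma R(\e k)}{n_{\theta}(\e)}e^{\mp2\sqrt{-1}\phi_{\e}(k)t}$ off the diagonal; the paper instead declares the generator $M_{\e,\theta}(k)$ to be autonomous with off-diagonal entry $-\gamma R(\e k)/n_{\theta}(\e)$ and no phase, which does not follow from the stated definition of $\hat{F}^{\pm}_{\e}$. For $2<\theta<4$ the amplitude $\gamma R(\e k)/n_{\theta}(\e)$ vanishes and the discrepancy is harmless, so your proof and the paper's agree. For $\theta\ge4$ you explicitly leave the key step open (``the main obstacle''), so your proposal has a genuine gap there; worse, the two resolutions disagree. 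If you carry out the averaging you propose (integrate by parts in $s$ in the Duhamel integral against the phase $e^{\mp2\sqrt{-1}\phi_{\e}(k)s}$, using that $\phi_{\e}(k)\sim k/\e\to\infty$ for $k\ne0$ and that $\hat{\bar{F}}^{\mp}_{\e}$ has uniformly bounded $s$-derivative on $|k|\le K$), the rapidly rotating term contributes $O(\phi_{\e}(k)^{-1})$ pointwise and the limiting generator is \emph{diagonal}, giving decoupled heat equations for $\bar{F}^{\pm}$; the paper's $M^{(1,2)}_{\theta}(\xi)=-\tfrac{3\gamma}{2}(2\pi\xi)^{2}$ for $\theta\ge4$ keeps the coupling. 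So to finish your proof for $\theta\ge4$ you must either justify dropping the phase (which I do not see how to do from the definitions) or prove the averaging lemma and accept that the limit you obtain is not the one stated before Theorem \ref{thm:flucnormal}. I recommend you carry out the $\theta\ge4$ averaging explicitly and compare the resulting semigroup $\exp(M_{\theta}(\xi)t)$ with \cite[Theorem 3.4]{KO2} to settle which limiting generator is correct.
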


We prove Proposition \ref{prop:toshowfluc} in Section \ref{subsec:proofofflucmean}. In Section \ref{subsec:proofofflucnormalLLN} we show Theorem \ref{thm:flucnormalLLN} by using Proposition \ref{prop:toshowfluc}. In the rest of this subsection, we consider the time evolution law of $\{ \hat{\bar{F}}^{\pm}_{\e}(k,t) ; k \in \frac{\T}{\e}, t \ge 0 \}$ and prepare two lemmas to show Proposition \ref{prop:toshowfluc}.
 
From \eqref{def:dynamicsofpsi} we have
\begin{align}
\frac{d}{dt} \begin{pmatrix}\hat{\bar{f}}^{+}_{\e} \\ \hat{\bar{f}}^{-}_{\e} \end{pmatrix} = \frac{1}{n_{\theta}(\e)} \begin{pmatrix} \sqrt{-1}\sgn(k)\omega(\e k) - \gamma R(\e k) & - \gamma R(\e k) \\ - \gamma R(\e k) & - \sqrt{-1}\sgn(k)\omega(k) - \gamma R(\e k) \end{pmatrix} \begin{pmatrix}\hat{\bar{f}}^{+}_{\e} \\ \hat{\bar{f}}^{-}_{\e} \end{pmatrix}.
\end{align}
Thus the dynamics $\{ \hat{\bar{F}}^{\pm}_{\e}(k,t); k \in \frac{\T}{\e}, t \ge 0 \}$ is given by
\begin{align}
\frac{d}{dt} \begin{pmatrix}\hat{\bar{F}}^{+}_{\e} \\ \hat{\bar{F}}^{-}_{\e} \end{pmatrix}(k,t) &= M_{\e,\theta}(k) \begin{pmatrix}\hat{\bar{F}}^{+}_{\e} \\ \hat{\bar{F}}^{-}_{\e} \end{pmatrix}(k,t), \quad \quad M_{\e,\theta}(k) = (M^{(i,j)}_{\e,\theta}(k))_{i,j=1,2}, \\
M^{(1,1)}_{\e,\theta}(k) &:= \frac{\sqrt{-1}\sgn(k)\omega(\e k)}{n_{\theta}(\e)} - \frac{\sqrt{C_{1}(\theta)} \sqrt{-1} \operatorname{sgn}(k) |2 \pi k|^{\frac{\theta - 1}{2} \wedge 1}}{m_{\theta}(\e)}  - \frac{\gamma R(\e k)}{n_{\theta}(\e)} , \\
M^{(1,2)}_{\e,\theta}(k) &= M^{(2,1)}_{\e,\theta}(k) := - \frac{\gamma R(\e k)}{n_{\theta}(\e)}, \quad M^{(2,2)}_{\e,\theta}(k) := (M^{(1,1)}_{\e,\theta})^{*}(k).
\end{align}
Then we have the following decomposition of $M_{\e,\theta}(k)$:
\begin{align}
M_{\e,\theta}(k) &= M_{\theta}(k) + \operatorname{Rem}_{\e,\theta}(k), \quad M_{\theta} = (M^{(i,j)}_{\theta})_{i,j=1,2}, ~ \operatorname{Rem}_{\e,\theta} = (\operatorname{Rem}^{(i,j)}_{\e,\theta})_{i,j=1,2},\\
\operatorname{Rem}^{(i,j)}_{\e,\theta}(k) &:= M^{(i,j)}_{\e,\theta}(k) - M^{(i,j)}_{\theta}(k) \quad i,j = 1,2,
\end{align}
where the matrix $M_{\theta}$ is defined right before Theorem \ref{thm:flucnormal}. We can obtain the following estimates of the matrix norm of $M_{\theta}, M_{\e,\theta}$ and $\operatorname{Rem}_{\e,\theta}$.

\begin{lemma}\label{lem:boundofM}
\begin{align}
C^{'}_{*} &:= \sup_{0 < \e < 1} \sup_{(k,t) \in \R \times \R_{\ge 0}} \| \exp{\Big( M_{\e,\theta}(k)t \Big)} \| < \infty, \\
D^{'}_{*} &:= \sup_{(k,t) \in \R \times \R_{\ge 0}} \| \exp{\Big( M_{\theta}(k)t \Big)} \| < \infty.
\end{align}
In addition, for any $K > 0$, we have
\begin{align}\label{lem:orderofrem}
\sup_{|k| \le K} \| \operatorname{Rem}_{\e,\theta}(k) \| \le C_{K,\theta} r_{\theta}(\e),
\end{align}
where
\begin{align}
r_{\theta}(\e) := \begin{dcases} \e^{\theta - 2} & 2 < \theta \le \frac{5}{2} \\
\e^{3 - \theta} & \frac{5}{2} < \theta < 3, \\
\big(\log{(\e^{-1})} \big)^{-1} & \theta = 3, \\
\e^{\theta - 3} & 3 < \theta \le \frac{7}{2}, \\
\e^{4 - \theta} & \frac{7}{2} < \theta < 4, \\
\e \log{(\e^{-1})} & \theta = 4, \\
\e^{\theta - 4} & 4 < \theta < 5, \\
 \e \log{(\e^{-1})}  & \theta = 5, \\
\e & \theta > 5. \end{dcases}
\end{align}
\end{lemma}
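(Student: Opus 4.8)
I would handle the three assertions by completely different means: the two uniform semigroup bounds $C'_{*}, D'_{*}$ are soft and follow from a contraction estimate, whereas the order bound on $\operatorname{Rem}_{\e,\theta}$ is the real content and is driven by the asymptotics of $\hat{\a}$ gathered in Appendix \ref{app:asympofa}.

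\emph{The bounds $C'_{*}, D'_{*}$.} Rather than diagonalizing, I would exploit the common algebraic shape of the matrices. Both $M_{\e,\theta}(k)$ and $M_{\theta}(k)$ are of the form
\begin{align}
\begin{pmatrix} \sqrt{-1}\,a - b & -b \\ -b & -\sqrt{-1}\,a - b \end{pmatrix}, \qquad a \in \R, \quad b \ge 0,
\end{align}
where $b = \gamma R(\e k)/n_{\theta}(\e) \ge 0$ for $M_{\e,\theta}$ (recall $R \ge 0$ and $n_{\theta}(\e) > 0$), and where $b$ is the nonnegative coefficient $\frac{3\gamma}{2}(2\pi \xi)^{2}$ for $M_{\theta}$ when $\theta \ge 4$ and $b = 0$ when $2 < \theta < 4$, as one reads off from the definition of $M_{\theta}$ given right before Theorem \ref{thm:flucnormal}. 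For such a matrix the Hermitian part is
\begin{align}
\frac{M + M^{*}}{2} = -b \begin{pmatrix} 1 & 1 \\ 1 & 1 \end{pmatrix} \preceq 0,
\end{align}
because $b \ge 0$ and the all-ones $2\times 2$ matrix is positive semidefinite. Hence $\frac{d}{dt}|\exp(Mt)v|^{2} = \langle (M + M^{*})\exp(Mt)v, \exp(Mt)v \rangle \le 0$ for $t \ge 0$, so $\|\exp(Mt)\| \le 1$ uniformly in $k$, in $t \ge 0$, and in $\e$. This gives $C'_{*} \le 1$ and $D'_{*} \le 1$ at once (and, incidentally, re-proves Lemma \ref{lem:expnorm} the same way).

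\emph{The order of $\operatorname{Rem}_{\e,\theta}$.} I would first use the conjugation symmetry $M^{(2,2)} = (M^{(1,1)})^{*}$ and $M^{(2,1)} = M^{(1,2)}$, valid for both matrices, to reduce the matrix-norm bound to estimating the two entries $\operatorname{Rem}^{(1,1)}_{\e,\theta}(k)$ and $\operatorname{Rem}^{(1,2)}_{\e,\theta}(k)$ on $|k| \le K$. The off-diagonal entry is easy: $\operatorname{Rem}^{(1,2)}_{\e,\theta}(k) = -\gamma R(\e k)/n_{\theta}(\e) - M^{(1,2)}_{\theta}(k)$, and the elementary expansion $R(\e k) = \frac{3}{2}(2\pi \e k)^{2} + O((\e k)^{4})$ from \eqref{def:meanscat} gives the claim after dividing by $n_{\theta}(\e)$ (for $\theta \ge 4$ the quadratic term is cancelled by $M^{(1,2)}_{\theta}$ and one keeps the quartic correction). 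For the diagonal entry I would write its dispersion part, using the identity $n_{\theta}(\e) m_{\theta}(\e)^{-1} = j_{\theta}(\e)$, as
\begin{align}
\frac{\sqrt{-1}\sgn(k)}{m_{\theta}(\e)}\Big( \frac{\omega(\e k)}{j_{\theta}(\e)} - \sqrt{C_{1}(\theta)}\,|2\pi k|^{\frac{\theta-1}{2} \wedge 1} \Big),
\end{align}
and insert the expansion of $\omega = \sqrt{\hat{\a}}$ furnished by Lemma \ref{lem:asympofa} and \ref{lem:asympofaco}. The point is that the leading term of $\omega(\e k)/j_{\theta}(\e)$ cancels $\sqrt{C_{1}(\theta)}|2\pi k|^{\frac{\theta-1}{2} \wedge 1}$ exactly, the subleading term reproduces $M^{(1,1)}_{\theta}(k)$ in the limit $\e \to 0$, and what remains must be bounded by $r_{\theta}(\e)$.

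\emph{Where the breakpoints come from, and the main obstacle.} The intricate form of $r_{\theta}(\e)$ reflects that, once the leading term is cancelled, three sources compete for the dominant remainder in $\operatorname{Rem}^{(1,1)}_{\e,\theta}$: (i) the error term in the expansion of $\hat{\a}$ from Appendix \ref{app:asympofa}, carried through the square root and divided by $n_{\theta}(\e)$; (ii) the genuine next-order term generated by the Taylor expansion of $\sqrt{1 + \cdot}$ applied to the two leading terms of $\hat{\a}$, again divided by $n_{\theta}(\e)$; and (iii) the real part $-\gamma R(\e k)/n_{\theta}(\e)$, which is entirely remainder precisely when $M^{(1,1)}_{\theta}$ has no real part, i.e. for $2 < \theta < 4$. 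Computing the $\e$-order of each source and taking the largest in every subinterval is what produces $r_{\theta}(\e)$: the transition at $\theta = 5/2$ is where (i) and (ii) change roles, the transition at $\theta = 7/2$ is where (ii) and (iii) change roles, and the logarithmic factors at $\theta = 3, 4, 5$ arise from the poles of $\Gamma(1-\theta)$ — equivalently the borderline summability of the analytic part of $\hat{\a}$ — which force $|2\pi k|^{m}\log|2\pi k|^{-1}$ corrections at integer $\theta$. The main obstacle is exactly this case-by-case asymptotic bookkeeping: one must weigh the non-analytic power $|2\pi k|^{\theta-1}$ against the even analytic powers through the nonlinearity $\sqrt{\cdot}$, keep the appendix error terms honest in each range, and treat the integer exponents $\theta = 3, 4, 5$ separately by hand; by contrast, the contraction estimate and the reduction to two matrix entries are routine.
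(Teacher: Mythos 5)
Your proposal is correct and, for the remainder estimate \eqref{lem:orderofrem}, follows essentially the same route as the paper: reduce to the two entries via $M^{(2,2)}=(M^{(1,1)})^{*}$ and $M^{(2,1)}=M^{(1,2)}$, dispose of the off-diagonal entry using $R(\e k)=\tfrac{3}{2}(2\pi \e k)^{2}+O(\e^{4}k^{4})$ (the paper phrases this as $|\operatorname{Rem}^{(1,2)}_{\e,\theta}|\lesssim|\operatorname{Rem}^{(1,1)}_{\e,\theta}|$), and then expand $\omega(\e k)n_{\theta}(\e)^{-1}-\sqrt{C_{1}(\theta)}\,|2\pi k|^{\frac{\theta-1}{2}\wedge 1}m_{\theta}(\e)^{-1}$ by rationalizing the square root and inserting the asymptotics of Lemma \ref{lem:asympofa}. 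Your identification of the three competing error sources (appendix error in $\hat{\a}$, second-order term of the square-root expansion, and the noise term $\gamma R(\e k)/n_{\theta}(\e)$ when it is not cancelled by $M_{\theta}$) and of the origins of the breakpoints at $\theta=\tfrac{5}{2},\tfrac{7}{2}$ and of the logarithms at integer $\theta$ matches the explicit table the paper computes; what you leave implicit is only the mechanical verification of the exponents in each of the nine ranges, which is precisely the content of the paper's displayed case analysis and is routine given Lemma \ref{lem:asympofa}. Where you genuinely diverge --- and in fact improve on the paper --- is the bound on $C^{'}_{*},D^{'}_{*}$: the paper only observes that the eigenvalues $b\pm\sqrt{b^{2}-a^{2}}$ have non-positive real parts, which by itself does not yield a bound on $\|\exp(Mt)\|$ uniform in $(\e,k,t)$, since at the degenerate locus $a^{2}=b^{2}$ the matrix is a nontrivial Jordan block and one must additionally control the transient growth; the paper does not address this. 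Your dissipativity argument --- the Hermitian part of both matrices is $-b$ times the positive semidefinite all-ones matrix with $b\ge 0$, hence $\frac{d}{dt}|e^{Mt}v|^{2}\le 0$ and $\|\exp(Mt)\|\le 1$ for $t\ge 0$ --- closes this small gap cleanly, gives the sharper constant $C^{'}_{*}=D^{'}_{*}=1$, and, as you note, applies verbatim to Lemma \ref{lem:expnorm} as well.
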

\begin{proof}
$M_{\e}$ and $M$ have the form
\begin{align}
\begin{pmatrix} \sqrt{-1}a + b & b \\ b & - \sqrt{-1} a + b \end{pmatrix} \quad a, b \in \R,
\end{align}
and the eigenvalues of the matrix are $b \pm \sqrt{b^{2} - a^{2}}$. Since $R(k)$ is non-negative, we see that the eigenvalues of $M_{\e}$ and $M$ are non-positive and thus we obtain $C^{'}_{*}, D^{'}_{*} < \infty$. 

Next we consider the order of $\sup_{|k| \le K} \| \operatorname{Rem}_{\e,\theta}(k) \|$, $K > 0$. Since
\begin{align}
|\operatorname{Rem}^{(1,2)}_{\e,\theta}(k)| \lesssim |\operatorname{Rem}^{(1,1)}_{\e,\theta}(k)|,
\end{align} 
it is sufficient to estimate $|\operatorname{Rem}^{(1,1)}_{\e,\theta}(k)|$. From Lem \ref{lem:asympofa}, we have
\begin{align}
&\frac{\omega(\e k)}{n_{\theta}(\e)} - \frac{\sqrt{C_{1}(\theta)} |2 \pi k|^{\frac{\theta - 1}{2} \wedge 1}}{m_{\theta}(\e)}  =  \frac{\hat{\a}(\e k) n_{\theta}(\e)^{-2} - C_{1}(\theta)|2 \pi k|^{(\theta - 1) \wedge 2} m_{\theta}(\e)^{-2} }{\omega(\e k)n_{\theta}(\e)^{-1} + \sqrt{C_{1}(\theta)} |2 \pi k|^{\frac{\theta - 1}{2} \wedge 1}m_{\theta}(\e)^{-1} } \\
&=  \begin{dcases} \frac{C_{2}(\theta) |2 \pi k|^{2} }{\omega(\e k) j_{\theta}(\e)^{-1}  + \sqrt{C_{1}(\theta)} |2 \pi k|^{\frac{\theta - 1}{2}}} + O(\e^{\theta - 2} |k|^{\frac{\theta + 1}{2}}) & 2 < \theta < 3, \\ 
\frac{- C_{1}(3) |2 \pi k|^2 \log{(2\pi k)}  + C_{2}(3) |2 \pi k|^{2} }{\omega(\e k)  j_{\theta}(\e)^{-1}  + \sqrt{C_{1}(3)} |2 \pi k| } + O(\e |k|^2) & \theta = 3, \\  
\frac{C_{2}(\theta) |2 \pi k|^{\theta - 1} }{\omega(\e k)  j_{\theta}(\e)^{-1}  + \sqrt{C_{1}(\theta)} |2 \pi k| } + O(\e |k|^{\theta - 1})   & 3 < \theta < 4, \\
 \frac{C_{2}(4) |2 \pi k|^{3} }{\omega(\e k) j_{\theta}(\e)^{-1}  + \sqrt{C_{1}(4)} |2 \pi k|  } + O( \e \log{(\e^{-1})} |k|^{3}) & \theta = 4, \end{dcases}
\end{align}
and 
\begin{align}
\frac{\omega(\e k)}{n_{\theta}(\e)} - \frac{\sqrt{C_{1}(\theta)} |2 \pi k| }{m_{\theta}(\e)} = \begin{dcases} O(\e^{\theta - 4} |k|^{\theta - 2} + \e |k|^{3})  & 4 < \theta < 5, \\
O(\e \log{(\e^{-1})} |k|^{3}) & \theta = 5, \\
O(\e |k|^{3} + \e^{\theta - 4} |k|^{\theta - 2}) & 5 < \theta < 7, \\ 
O(\e |k|^{3} + \e^{3} \log{(\e^{-1})} |k|^{5}) & \theta = 7, \\
O(\e |k|^{3} + \e^{3} |k|^{5}) & \theta > 7. \end{dcases}
\end{align}
In addition, we have
\begin{align}
\frac{R(\e k)}{n_{\theta}(\e)} = \begin{dcases} O( \e^{2}n_{\theta}(\e)^{-1} |k|^2 + \e^{4} n_{\theta}(\e)^{-1} |k|^4) & 2 < \theta < 4, \\
\frac{3}{2}(2 \pi k)^{2} + O( \e^{2}  |k|^4 ) & \theta \ge 4. \end{dcases}
\end{align}
Thus we obtain
\begin{align}
\operatorname{Rem}^{(1,1)}_{\e,\theta}(k) =  \begin{dcases} O\Big(\e^{3 - \theta} |k|^{\frac{11 - 3 \theta}{2}} + \e^{\theta - 2} |k|^{\frac{\theta + 1}{2}} +  \e^{\frac{\theta - 1}{2}} k^2 \Big) & 2 < \theta < 3, \\
O\Big( \big(\log{(\e^{-1})} \big)^{-1} |k| + \e |k|^{2} +  \e(\sqrt{\log{(\e^{-1})}}) k^2 \Big) & \theta = 3, \\
O\Big(\e^{\theta - 3} |k|^{2\theta - 5} + \e |k|^{\theta - 1} +  \e^{4 - \theta} k^2 \Big) & 3 < \theta < 4, \\
O\Big( \e \log{(\e^{-1})} |k|^{3} + \e^{2}  |k|^4 \Big) & \theta = 4, \\
O\Big( \e^{\theta - 4} |k|^{\theta - 2} + \e |k|^{3} + \e^{2}  |k|^4 \Big) & 4 < \theta < 5, \\
O\Big( \e \log{(\e^{-1})} |k|^{3} + \e^{2}  |k|^4 \Big) & \theta = 5, \\
O\Big( \e |k|^{3} + \e^{\theta - 4} |k|^{\theta - 2} + \e^{2}  |k|^4 \Big) & 5 < \theta < 7, \\
O\Big( \e |k|^{3} + \e^{3} \log{(\e^{-1})} |k|^{5} + \e^{2}  |k|^4 \Big) & \theta = 7, \\
O\Big( \e |k|^{3} + \e^{3} |k|^{5} + \e^{2}  |k|^4 \Big) & \theta > 7. 
\end{dcases}
\end{align}
Hence on $|k| \le K$ we have
\begin{align}
|\operatorname{Rem}^{(1,1)}_{\e,\theta}(k)| \lesssim \begin{dcases} \e^{\theta - 2} & 2 < \theta \le \frac{5}{2} \\
\e^{3 - \theta} & \frac{5}{2} < \theta < 3, \\
\big(\log{(\e^{-1})} \big)^{-1} & \theta = 3, \\
\e^{\theta - 3} & 3 < \theta \le \frac{7}{2}, \\
\e^{4 - \theta} & \frac{7}{2} < \theta < 4, \\
\e \log{(\e^{-1})} & \theta = 4, \\
\e^{\theta - 4} & 4 < \theta < 5, \\
 \e \log{(\e^{-1})}  & \theta = 5, \\
\e & \theta > 5. \end{dcases}
\end{align}

\end{proof}

We introduce the dynamics $\{ \hat{\bar{F}}^{\pm,(0)}_{\e}(k,t) ; k \in \frac{\T}{\e}, t \ge 0 \}$, which is generated by $M(k)$ with the same initial condition as $\{ \hat{\bar{F}}^{\pm}_{\e}(k,t) ; k \in \frac{\T}{\e}, t \ge 0 \}$, that is, 
\begin{align}
\frac{d}{dt} \begin{pmatrix}\hat{\bar{F}}^{+,(0)}_{\e}(k,t) \\ \hat{\bar{F}}^{-,(0)}_{\e}(k,t) \end{pmatrix}(k,t) &= M_{\theta}(k) \begin{pmatrix}\hat{\bar{F}}^{+,(0)}_{\e}(k,t) \\ \hat{\bar{F}}^{-,(0)}_{\e}(k,t) \end{pmatrix}, \\
\begin{pmatrix}\hat{\bar{F}}^{+,(0)}_{\e}(k,0) \\ \hat{\bar{F}}^{-,(0)}_{\e}(k,0) \end{pmatrix} &= \begin{pmatrix}\hat{\bar{F}}^{+}_{\e}(k,0) \\ \hat{\bar{F}}^{-}_{\e}(k,0) \end{pmatrix}.
\end{align}

\begin{lemma}\label{lem:convtomean}
For any $K > 0$ and $T > 0$, we have
\begin{align}
\lim_{\e \to 0} \int_{|k| \le K} dk ~ \| \begin{pmatrix}\hat{\bar{F}}^{+,(0)}_{\e}(k,t) \\ \hat{\bar{F}}^{-,(0)}_{\e}(k,t) \end{pmatrix} - \begin{pmatrix}\hat{\bar{F}}^{+}_{\e}(k,t) \\ \hat{\bar{F}}^{-}_{\e}(k,t) \end{pmatrix} \|^{2} = 0.
\end{align}
\end{lemma}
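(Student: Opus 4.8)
The plan is to transcribe the argument of Lemma~\ref{lem:finiteinterval}, replacing the quadruple $(A,A_\e,B_{\e,\theta},b_\theta)$ used there by $(M_\theta,M_{\e,\theta},\operatorname{Rem}_{\e,\theta},r_\theta)$ here. Writing $\Phi_\e(k,t) := (\hat{\bar{F}}^{+}_{\e}(k,t),\hat{\bar{F}}^{-}_{\e}(k,t))$ and $\Phi_\e^{(0)}(k,t) := (\hat{\bar{F}}^{+,(0)}_{\e}(k,t),\hat{\bar{F}}^{-,(0)}_{\e}(k,t))$, I would subtract the two evolution equations and use the decomposition $M_{\e,\theta}=M_\theta+\operatorname{Rem}_{\e,\theta}$ to see that the difference $\Phi_\e-\Phi_\e^{(0)}$, which vanishes at $t=0$, solves
\begin{align}
\frac{d}{dt}\bigl(\Phi_\e-\Phi_\e^{(0)}\bigr)(k,t) = M_\theta(k)\bigl(\Phi_\e-\Phi_\e^{(0)}\bigr)(k,t) + \operatorname{Rem}_{\e,\theta}(k)\Phi_\e(k,t).
\end{align}
Duhamel's formula then yields
\begin{align}
\bigl(\Phi_\e-\Phi_\e^{(0)}\bigr)(k,t) = \int_0^t ds~\exp\bigl((t-s)M_\theta(k)\bigr)\operatorname{Rem}_{\e,\theta}(k)\Phi_\e(k,s).
\end{align}

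Next I would estimate the right-hand side exactly as in the earlier proof: applying Schwarz's inequality in the $s$-integral and bounding $\|\exp((t-s)M_\theta(k))\|\le D^{'}_{*}$ together with $\sup_{|k|\le K}\|\operatorname{Rem}_{\e,\theta}(k)\|\le C_{K,\theta}r_\theta(\e)$ via Lemma~\ref{lem:boundofM} gives
\begin{align}
\int_{|k|\le K} dk~\|(\Phi_\e-\Phi_\e^{(0)})(k,t)\|^2 \lesssim t\,(r_\theta(\e))^2 \int_0^t ds \int_{|k|\le K} dk~\|\Phi_\e(k,s)\|^2.
\end{align}
The surviving space-time integral is controlled by energy conservation. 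The recentering factor defining $\hat{F}^{\pm}_{\e}$ has unit modulus, hence it does not affect $\mathbb{L}^{2}$ norms; combining this with $\hat{f}^{\pm}=\hat{p}\pm\hat{l}$, the identity $\int_{\T}dk\,(|\hat{p}|^2+|\hat{l}|^2)=\int_{\T}dk\,|\hat{\psi}|^2$, the change of variables $\kappa=\e k$, and \eqref{eq:energyconservation}, I would bound $\int_{|k|\le K}dk~\|\Phi_\e(k,s)\|^2$ by $\int_{\T}d\kappa~\e\,\mathbb{E}_\e[|\hat{\psi}(\kappa,\cdot)|^2]$, which is uniformly bounded in $\e$ and $s$ by \eqref{ass:phononic}.

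Putting these together gives $\int_{|k|\le K}dk~\|(\Phi_\e-\Phi_\e^{(0)})(k,t)\|^2\lesssim t^2(r_\theta(\e))^2$, and since $r_\theta(\e)\to0$ as $\e\to0$ for every $\theta>2$ (as one checks case by case from the definition of $r_\theta$), the claim follows. The one genuinely substantive point, and the analogue of the main step in Lemma~\ref{lem:finiteinterval}, is the uniform $\mathbb{L}^{2}$ control of the recentered mean dynamics $\Phi_\e$: it is precisely the unit modulus of the recentering semigroup $S^{\pm}_\theta$ that lets energy conservation survive the fast oscillatory shift, so that the bound reduces to the conserved quantity. Everything else is a direct transcription of the finite-interval argument of Section~\ref{sec:proofofsuperballistic}.
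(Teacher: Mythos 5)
Your proposal is correct and follows essentially the same route as the paper: the paper's proof of Lemma \ref{lem:convtomean} is exactly the Duhamel argument with the decomposition $M_{\e,\theta}=M_{\theta}+\operatorname{Rem}_{\e,\theta}$, the bounds $D^{'}_{*}$ and $C_{K,\theta}r_{\theta}(\e)$ from Lemma \ref{lem:boundofM}, and the reduction $\|(\hat{\bar{F}}^{+}_{\e},\hat{\bar{F}}^{-}_{\e})\|^{2}=\|(\hat{\bar{f}}^{+}_{\e},\hat{\bar{f}}^{-}_{\e})\|^{2}=2\|(\hat{\bar{p}}_{\e},\hat{\bar{l}}_{\e})\|^{2}\lesssim\int_{\T}dk~\e\,\mathbb{E}_{\e}[|\hat{\psi}|^{2}]$ via the unimodularity of the recentering phase and \eqref{eq:energyconservation}, yielding the same $t^{2}r_{\theta}^{2}(\e)$ bound. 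You correctly identified the one substantive point (the phase factor dropping out of the $\mathbb{L}^{2}$ control), and the rest is, as you say, a transcription of Lemma \ref{lem:finiteinterval}.
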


\begin{proof}

Since 
\begin{align}
\frac{d}{dt} \begin{pmatrix}\hat{\bar{F}}^{+}_{\e}(k,t) - \hat{\bar{F}}^{+,(0)}_{\e}(k,t) \\ \hat{\bar{F}}^{-}_{\e}(k,t) - \hat{\bar{F}}^{-,(0)}_{\e}(k,t) \end{pmatrix} = M_{\theta}(k) \begin{pmatrix}\hat{\bar{F}}^{+}_{\e}(k,t) - \hat{\bar{F}}^{+,(0)}_{\e}(k,t) \\ \hat{\bar{F}}^{-}_{\e}(k,t) - \hat{\bar{F}}^{-,(0)}_{\e}(k,t) \end{pmatrix} + \operatorname{Rem}_{\e,\theta}(k)  \begin{pmatrix}\hat{\bar{F}}^{+}_{\e}(k,t) \\ \hat{\bar{F}}^{-}_{\e}(k,t) \end{pmatrix},
\end{align}
by using Duhamel's formula we obtain
\begin{align}
\begin{pmatrix}\hat{\bar{F}}^{+}_{\e}(k,t) - \hat{\bar{F}}^{+,(0)}_{\e}(k,t) \\ \hat{\bar{F}}^{-}_{\e}(k,t) - \hat{\bar{F}}^{-,(0)}_{\e}(k,t) \end{pmatrix} = \int_{0}^{t} ds ~ \exp{\Big( M(k)(t-s)\Big)} \operatorname{Rem}_{\e,\theta}(k) \begin{pmatrix}\hat{\bar{F}}^{+}_{\e}(k,t) \\ \hat{\bar{F}}^{-}_{\e}(k,t) \end{pmatrix}.
\end{align}
Thanks to \eqref{eq:energyconservation}, \eqref{ass:phononic} and Lemma \ref{lem:boundofM}, we have
\begin{align}
& \int_{|k| \le K} dk ~ \| \begin{pmatrix}\hat{\bar{F}}^{+,(0)}_{\e}(k,t) \\ \hat{\bar{F}}^{-,(0)}_{\e}(k,t) \end{pmatrix} - \begin{pmatrix}\hat{\bar{F}}^{+}_{\e}(k,s) \\ \hat{\bar{F}}^{-}_{\e}(k,s) \end{pmatrix} \|^{2} \\
&\le t \int_{|k| \le K} dk ~ \int_{0}^{t} ds ~ \| \exp{\Big( M_{\theta}(k)(t-s) \Big)} \operatorname{Rem}_{\e,\theta}(k) \begin{pmatrix}\hat{\bar{F}}^{+}_{\e}(k,s) \\ \hat{\bar{F}}^{-}_{\e}(k,s) \end{pmatrix} \|^{2} \\
&\lesssim r^{2}_{\theta}(\e) \int_{|k| \le K} dk ~ \int_{0}^{t} ds ~  \| \begin{pmatrix}\hat{\bar{F}}^{+}_{\e}(k,s) \\ \hat{\bar{F}}^{-}_{\e}(k,s) \end{pmatrix} \|^{2} \\
&= r^{2}_{\theta}(\e) \int_{|k| \le K} dk ~ \int_{0}^{t} ds ~  \| \begin{pmatrix}\hat{\bar{f}}^{+}_{\e}(k,s) \\ \hat{\bar{f}}^{-}_{\e}(k,s)\end{pmatrix} \|^{2} \\
&= 2 r^{2}_{\theta}(\e) \int_{|k| \le K} dk ~ \int_{0}^{t} ds ~ \| \begin{pmatrix}\hat{\bar{p}}_{\e}(k,s) \\ \hat{\bar{l}}_{\e}(k,s) \end{pmatrix} \|^{2} \\
&\lesssim r^{2}_{\theta}(\e) \int_{\T} dk ~ \int_{0}^{t} ds ~ \e \E_{\e} \Big[|\hat{\psi}(k,\frac{s}{n_{\theta}(\e)})|^{2} \Big] \lesssim r^{2}_{\theta}(\e).
\end{align}
\end{proof}

\subsection{Proof of Proposition \ref{prop:toshowfluc}}\label{subsec:proofofflucmean}

\begin{proof}

Thanks to Schwarz's inequality, we get
\begin{align}
&\int_{\frac{\T}{\e}} dk ~ \Big| \begin{pmatrix}\hat{\bar{F}}^{+}_{\e}(k,t) \\ \hat{\bar{F}}^{-}_{\e}(k,t) \end{pmatrix} - \begin{pmatrix}\tilde{\bar{F}}^{+}(k,t) \\ \tilde{\bar{F}}^{-}(k,t) \end{pmatrix} \Big|^{2} \\
&\le 2 \int_{\frac{\T}{\e}} dk ~ \Big| \begin{pmatrix}\hat{\bar{F}}^{+}_{\e}(k,t) \\ \hat{\bar{F}}^{-}_{\e}(k,t) \end{pmatrix} - \begin{pmatrix}\hat{\bar{F}}^{+,(0)}_{\e}(k,t) \\ \hat{\bar{F}}^{-,(0)}_{\e}(k,t) \end{pmatrix} \Big|^{2} + 2 \int_{\frac{\T}{\e}} dk ~ \Big| \begin{pmatrix}\hat{\bar{F}}^{+,(0)}_{\e}(k,t) \\ \hat{\bar{F}}^{-,(0)}_{\e}(k,t) \end{pmatrix} - \begin{pmatrix}\tilde{\bar{F}}^{+}(k,t) \\ \tilde{\bar{F}}^{-}(k,t) \end{pmatrix} \Big|^{2} \\
&=  2 \int_{\frac{\T}{\e}} dk ~ \Big| \Big[ \exp{\Big( M_{\e,\theta}(k) t \Big)} - \exp{\Big( M_{\theta}(k)t \Big)} \Big] \begin{pmatrix}\hat{\bar{F}}^{+}(k,0) \\ \hat{\bar{F}}^{-}(k,0) \end{pmatrix} \Big|^{2} \\
& \quad + 2 \int_{\frac{\T}{\e}} dk ~ \Big| \exp{\Big( M_{\theta}(k)t \Big)} \begin{pmatrix}\hat{\bar{F}}^{+}_{\e}(k,0) - \tilde{F}^{+}(k,0) \\ \hat{\bar{F}}^{-}_{\e}(k,0) - \tilde{F}^{-}(k,0) \end{pmatrix} \Big|^{2}. \label{ineq:toshowfluc1}
\end{align}
From \eqref{ass:phononic} and Lemma \ref{lem:boundofM}, we see that the second term of \eqref{ineq:toshowfluc1} vanishes as $\e \to 0$. Now we estimate the first term of \eqref{ineq:toshowfluc1}. From Lemma \ref{lem:convtomean}, it is sufficient to show that 
\begin{align}
\varlimsup_{K \to \infty} \varlimsup_{\e \to 0} \int_{K < |k| < \frac{1}{2 \e}} dk ~ \Big| \Big[ \exp{\Big( M_{\e,\theta}(k) t \Big)} - \exp{\Big( M_{\theta}(k)t \Big)} \Big] \begin{pmatrix}\hat{\bar{F}}^{+}(k,0) \\ \hat{\bar{F}}^{-}(k,0) \end{pmatrix} \Big|^{2} = 0.
\end{align}
By using \eqref{ass:phononic} and Lemma \ref{lem:boundofM} we have
\begin{align}
&\varlimsup_{\e \to 0} \int_{K < |k| < \frac{1}{2 \e}} dk ~ \Big| \Big[ \exp{\Big( M_{\e,\theta}(k) t \Big)} - \exp{\Big( M_{\theta}(k)t \Big)} \Big] \begin{pmatrix}\hat{\bar{F}}^{+}(k,0) \\ \hat{\bar{F}}^{-}(k,0) \end{pmatrix} \Big|^{2} \\
&\lesssim \varlimsup_{\e \to 0} \int_{K < |k| < \frac{1}{2 \e}} dk \Big| \begin{pmatrix}\hat{\bar{F}}^{+}(k,0) \\ \hat{\bar{F}}^{-}(k,0) \end{pmatrix} \Big|^{2} \\
&= \varlimsup_{\e \to 0} \int_{K < |k| < \frac{1}{2 \e}} dk \Big| \begin{pmatrix}\hat{\bar{f}}^{+}_{\e}(k,0) \\ \hat{\bar{f}}^{-}_{\e}(k,0)\end{pmatrix} \Big|^{2} \\
&= 2 \varlimsup_{\e \to 0} \int_{K < |k| < \frac{1}{2 \e}} dk ~ |\hat{\bar{p}}_{\e}(k,0)|^2 +  |\hat{\bar{l}}_{\e}(k,0)|^2 \\
&= 2 \int_{|\xi| > K} d\xi ~ |\tilde{\bar{p}}_{0}(\xi)|^2 + |\tilde{\bar{l}}_{0}(\xi)|^2.
\end{align}
Hence we complete the proof of this lemma. 

\end{proof}

\subsection{Proof of Theorem \ref{thm:flucnormalLLN}}\label{subsec:proofofflucnormalLLN}

Define 
\begin{align}
\hat{\mathcal{F}}^{\pm}_{\e}(k,t) := \hat{F}^{\pm}_{\e}(k,t) - \hat{\bar{F}}^{\pm}_{\e}(k,t).
\end{align}
We see that $|\hat{F}^{\pm}_{\e}(k,t)|^2 = |\hat{f}^{\pm}_{\e}(k,t)|^2$ and from \eqref{eq:energyconservation} we have
\begin{align}
\int_{\frac{\T}{\e}} dk ~ \mathbb{E}_{\e} \Big[ |\hat{f}^{+}_{\e}(k,t)|^2 +|\hat{f}^{-}_{\e}(k,t)|^2 \Big] &= 2 \int_{\frac{\T}{\e}} dk ~ \mathbb{E}_{\e} \Big[|\e \hat{p}(\e k,\frac{t}{n_{\theta}(\e)})|^2 + |\e \hat{l}(\e k,\frac{t}{n_{\theta}(\e)})|^2 \Big] \\
&= 2 \int_{\frac{\T}{\e}} dk ~ \mathbb{E}_{\e} \Big[|\e \hat{p}(\e k,0)|^2 + |\e \hat{l}(\e k,0)|^2 \Big]
\end{align}
for any $t \ge 0$. Thus we obtain
\begin{align}
&2 \int_{\frac{\T}{\e}} dk ~ \mathbb{E}_{\e} \Big[ |\e \hat{p}(\e k,0)|^2 + |\e \hat{l}(\e k,0)|^2 \Big] \\ 
&= \int_{\frac{\T}{\e}} dk ~ \mathbb{E}_{\e} \Big[ |\hat{\mathcal{F}}^{+}_{\e}(k,t)|^2 + |\hat{\mathcal{F}}^{-}_{\e}(k,t)|^2 \Big] + \int_{\frac{\T}{\e}} dk ~ |\hat{\bar{F}}^{+}_{\e}(k,t)|^2 + |\hat{\bar{F}}^{-}_{\e}(k,t)|^2.
\end{align}
By using \eqref{ass:phononic} and Proposition \ref{prop:toshowfluc} we get
\begin{align}
&\varlimsup_{\e \to 0}  \int_{\frac{\T}{\e}} dk ~ \mathbb{E}_{\e} \Big[ |\hat{\mathcal{F}}^{+}_{\e}(k,t)|^2 + |\hat{\mathcal{F}}^{-}_{\e}(k,t)|^2 \Big] \\ &= 2 \int_{\R} d\xi ~ |\tilde{\bar{p}}_{0}(\xi)|^2 + |\tilde{\bar{l}}_{0}(\xi)|^2 - \int_{\R} d\xi ~ |\tilde{\bar{F}}^{+}(\xi,t)|^2 + |\tilde{\bar{F}}^{-}(\xi,t)|^2.
\end{align}
If $2 < \theta < 4$ then for any $t \ge 0$ we have 
\begin{align}
|\tilde{\bar{F}}^{\pm}(\xi,t)|^2 = |\tilde{\bar{F}}^{\pm}(\xi,0)|^2 = |\tilde{\bar{p}}_{0}(\xi) \pm \tilde{\bar{l}}_{0}(\xi)|^2
\end{align}
and thus we obtain
\begin{align}
\varlimsup_{\e \to 0}  \int_{\frac{\T}{\e}} dk ~ \mathbb{E}_{\e} \Big[ |\hat{\mathcal{F}}^{+}_{\e}(k,t)|^2 + |\hat{\mathcal{F}}^{-}_{\e}(k,t)|^2 \Big] = 0. \label{lim:toshowflucLLN}
\end{align}
Proposition \ref{prop:toshowfluc} and \eqref{lim:toshowflucLLN} imply Theorem \ref{thm:flucnormalLLN} because
\begin{align}
&\varlimsup_{\e \to 0}\mathbb{E}_{\e}\Bigg[ \Big| \e \sum_{x \in \Z} f^{\pm}_{x}(\frac{t}{n_{\theta}(\e)})\Big(S^{\pm}_{\theta}(\frac{t}{m_{\theta}(\e)})J\Big)(\e x) - \int_{\R} dy ~ \bar{F}^{\pm}(y,t) J(y) \Big|  \Bigg] \\
&= \varlimsup_{\e \to 0}  \mathbb{E}_{\e}\Bigg[ \Big| \int_{\frac{\T}{\e}} dk ~ \Big( \hat{F}^{\pm}_{\e}(k,t) - \tilde{\bar{F}}^{\pm}(k,t) \Big) \tilde{J}(-k) \Big| \Bigg] \\
&= \varlimsup_{\e \to 0} \mathbb{E}_{\e}\Bigg[ \Big| \int_{\frac{\T}{\e}} dk ~ \hat{\mathcal{F}}^{\pm}_{\e}(k,t) \tilde{J}(-k) \Big| \Bigg]  \\
&\le  \|J\|_{\mathbb{L}^2(\R)} \varlimsup_{\e \to 0}\Bigg( \int_{\frac{\T}{\e}} dk ~ \mathbb{E}_{\e} \Big[ |\hat{\mathcal{F}}^{+}_{\e}(k,t)|^2 + |\hat{\mathcal{F}}^{-}_{\e}(k,t)|^2 \Big] \Bigg)^{\frac{1}{2}} = 0.
\end{align}

\section*{Acknowledgement}

We are grateful to Professor Stefano Olla for insightful discussions. HS was supported by JSPS KAKENHI Grant Number JP19J11268 and the Program for Leading Graduate Schools, MEXT, Japan. 

\appendix

\section{asymptotic behavior of $\hat{\a}(k), k \to 0$.}\label{app:asympofa}

\begin{lemma}\label{lem:asympofa}

\begin{align}
\hat{\a}(k) =& \begin{dcases} C_{1}(\theta) (2\pi)^{\theta - 1} |k|^{\theta - 1} + O(|k|^{\theta}) \quad &1 < \theta < 2, \\
C_{1}(2) 2 \pi |k| + O(k^{2} \log |k|^{-1}) \quad &\theta = 2, \\
C_{1}(\theta) (2\pi)^{\theta - 1} |k|^{\theta - 1} + C_{2}(\theta) (2\pi)^{2} |k|^{2} + O(|k|^{\theta}) \quad &2 < \theta < 3, \\ 
C_{1}(3) (2\pi)^{2} |k|^{2} \log |2\pi k|^{-1} + C_{2}(3) (2\pi)^{2} |k|^{2} + O(|k|^{3}) \quad &\theta = 3, \\ 
C_{1}(\theta) (2\pi)^{2} |k|^{2} + C_{2}(\theta) (2\pi)^{\theta - 1} |k|^{\theta - 1} + O(|k|^{\theta}) \quad &3 < \theta < 4, \\ 
C_{1}(\theta) (2\pi)^{2} |k|^{2} + C_{2}(\theta) (2\pi)^{\theta - 1} |k|^{\theta - 1} + O(|k|^{4}\log{|k|^{-1}}) \quad &\theta = 4, \\
C_{1}(\theta) (2\pi)^{2} |k|^{2} + C_{2}(\theta) (2\pi)^{\theta - 1} |k|^{\theta - 1} + O(|k|^{4}) \quad &4 < \theta < 5, \\
C_{1}(5) (2\pi)^{2} |k|^{2} + C_{2}(5) (2\pi)^{4} |k|^{4} \log |k|^{-1} + O(|k|^{4}) \quad &\theta = 5, \\
C_{1}(\theta) (2\pi)^{2} |k|^{2} + C_{2}(\theta) (2\pi)^{4} |k|^{4} + O(|k|^{\theta - 1}) \quad &5 < \theta < 7, \\
C_{1}(\theta) (2\pi)^{2} |k|^{2} + C_{2}(\theta) (2\pi)^{4} |k|^{4} + O(|k|^{6} \log{|k|^{-1}}) \quad & \theta = 7, \\
C_{1}(\theta) (2\pi)^{2} |k|^{2} + C_{2}(\theta) (2\pi)^{4} |k|^{4} + O(|k|^{6}) \quad & \theta > 7
\end{dcases} 
\end{align}
as $k \to 0$, where
\begin{align}
C_{1}(\theta) :=& \begin{dcases} \int_{0}^{\infty} dy ~ \frac{2- 2\cos{y}}{y^{\theta}} \quad &1 < \theta < 3, \\ 1 \quad &\theta = 3, \\ \sum_{x \ge 1} |x|^{-\theta} \quad &\theta > 3, \end{dcases} 
\end{align}
and
\begin{align}
C_{2}(\theta) :=& \begin{dcases} - \int_{0}^{1} dy ~ \frac{1}{y^{\theta - 2}} + \int_{1}^{\infty} dy ~ ( [y]^{-(\theta - 2)} - y^{-(\theta - 2)}) \quad &2 < \theta < 3, \\
\frac{3}{2} \quad &\theta = 3, \\ 
\int_{0}^{\infty} dy ~ \frac{2 - 2\cos{y} - y^{2}}{y^{\theta}} \quad &3 < \theta < 5, \\
- \frac{1}{12} \quad &\theta = 5, \\
- \frac{1}{12} \sum_{x \ge 1} \frac{1}{x^{\theta - 4}} \quad &\theta > 5. \end{dcases} 
\end{align} 
Moreover, $[y], y \in \R$ is the greatest integer less than or equal to $y$. 
\end{lemma}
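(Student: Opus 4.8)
First I would strip the statement down to a single cosine series. Using the symmetry $\a_x=\a_{-x}$ together with $\a_0=2\sum_{x\ge1}|x|^{-\theta}$, for $k\neq0$ one has
\[
\hat{\a}(k)=\a_0-2\sum_{x\ge1}\frac{\cos(2\pi xk)}{x^{\theta}}=2\sum_{x\ge1}\frac{1-\cos(2\pi xk)}{x^{\theta}},
\]
which is even and nonnegative, so it suffices to treat $k\downarrow0$. Writing $u:=2\pi k$ and $h(s):=2-2\cos s=s^{2}-\tfrac{1}{12}s^{4}+O(s^{6})$, the task reduces to expanding $S(u):=\sum_{x\ge1}h(xu)\,x^{-\theta}$ as $u\downarrow0$, and then reading off powers of $u=(2\pi)k$ to recover the $(2\pi)^{\bullet}|k|^{\bullet}$ factors.

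The governing mechanism is that $S(u)$ splits into a \emph{smooth} (continuum) contribution, which produces a fractional power $u^{\theta-1}$, and \emph{arithmetic} contributions, coming from the even Taylor coefficients of $h$, which produce integer powers $u^{2},u^{4},\dots$ with coefficients equal to (analytically continued) values of $\sum_{x\ge1}x^{-(\theta-2m)}$. Concretely, for $1<\theta<3$ the integral $\int_0^\infty h(yu)\,y^{-\theta}\,dy$ converges, and the substitution $s=yu$ gives $\int_0^\infty h(yu)y^{-\theta}dy=u^{\theta-1}\int_0^\infty\frac{2-2\cos s}{s^{\theta}}ds=C_1(\theta)u^{\theta-1}$, identifying the leading constant. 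The correction $S(u)-\int_0^\infty h(yu)y^{-\theta}dy$ I would compute through the elementary identity $\sum_{x\ge1}f(x)=\int_1^\infty f([y])\,dy$; after rescaling this produces exactly the combination $-\int_0^1 y^{-(\theta-2)}dy+\int_1^\infty\!\big([y]^{-(\theta-2)}-y^{-(\theta-2)}\big)dy$ defining $C_2(\theta)$ for $2<\theta<3$, times $u^{2}$. For larger $\theta$ I would reverse the order of peeling: when $\theta>3$ the sum $u^{2}\sum_{x\ge1}x^{2-\theta}$ converges and furnishes the leading $C_1$-term, and the same integral comparison applied to $\tilde h(s):=h(s)-s^{2}$ yields the subleading fractional term $C_2(\theta)u^{\theta-1}=u^{\theta-1}\int_0^\infty\frac{2-2\cos s-s^{2}}{s^{\theta}}ds$ for $3<\theta<5$; when $\theta>5$ both the $u^{2}$ and $u^{4}$ coefficients are convergent zeta-type sums ($\sum x^{2-\theta}$ and $-\tfrac1{12}\sum x^{4-\theta}$), matching $C_1,C_2$, with the next scale absorbed into the remainder.

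The main obstacle is the uniform control of the sum-minus-integral remainder, and here I expect the real work to lie. One cannot Taylor-expand $h(xu)$ freely under $\sum_x$ or $\int dy$, since $h$ saturates to an $O(1)$ bound once $xu\gtrsim1$; the convergence of the constants $C_1,C_2$ and the claimed remainder orders rely on the interplay between the quadratic behavior of $h$ near $0$ and its boundedness at infinity. I would therefore split the range at the crossover scale $y\sim1/u$, use a mean-value (summation-by-parts / Euler--Maclaurin) estimate for the discrete-versus-continuous difference $\frac{h([y]u)}{[y]^{\theta}}-\frac{h(yu)}{y^{\theta}}$ on each unit interval, and bound the tail where $h$ is merely bounded, so as to show the discarded terms are genuinely $O(|k|^{\theta})$, $O(|k|^{4})$, or $O(|k|^{6})$ as stated.

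Finally, the borderline exponents $\theta\in\{2,3,4,5,7\}$ must be handled separately, and these are the most delicate points. They are precisely the values at which the fractional scale $u^{\theta-1}$ collides with an integer scale $u^{2m}$, equivalently where one of the integrals/sums defining a coefficient becomes marginally (logarithmically) divergent; the collision is resolved by a $\log|k|^{-1}$ factor. At $\theta=3$ and $\theta=5$ the marginal piece $\sum_{x\lesssim1/u}x^{-1}\sim\log u^{-1}$ produces the explicit terms $C_1(3)(2\pi)^{2}|k|^{2}\log|2\pi k|^{-1}$ and $C_2(5)(2\pi)^{4}|k|^{4}\log|k|^{-1}$, with the constants $C_1(3)=1$ and $C_2(5)=-\tfrac1{12}$ arising as the residues of the relevant $\sum x^{-s}$ at $s=1$; at $\theta=2,4,7$ the logarithm appears instead in the remainder, as recorded by the $O(|k|^{2}\log|k|^{-1})$, $O(|k|^{4}\log|k|^{-1})$, $O(|k|^{6}\log|k|^{-1})$ error terms. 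The hardest single estimate is the uniform remainder bound near $\theta=3$, where the two competing scales are both $|k|^{2}$ up to logarithms and must be disentangled with care.
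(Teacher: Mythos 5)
Your plan is correct and follows essentially the same route as the paper: the paper likewise writes $\hat{\a}(k)=2\int_1^\infty dy\,(1-\cos(2\pi k[y]))[y]^{-\theta}$, peels off the fractional power by the scaling substitution in the continuum integral (after subtracting $y^2$, resp.\ $y^2-\tfrac{1}{12}y^4$, for larger $\theta$), identifies the integer-power coefficients with the same sum-minus-integral combinations, and controls the remainder exactly as you propose, via a second-order Taylor expansion of the integrand at $[y]$ versus $y$ with the range split at $y\sim|k|^{-1}$. The only thing you leave unexecuted is that final derivative-bounding step for each $\theta$-regime, but the method you describe is precisely the one the paper carries out.
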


\begin{proof}

First we observe that $\hat{\a}$ can be devided into three parts: 
\begin{align}
\hat{\a}(k) &= \sum_{x \in \Z} \a_{x} e^{-2 \pi k x} = 2 \int_{1}^{\infty} dy ~ \frac{1 - \cos{2\pi k [y]}}{[y]^{\theta}} = \a_{1}(k) + \a_{2}(k) + \a_{3}(k) ,
\end{align}
where
\begin{align}
\a_{1}(k) &:= \begin{dcases} (2\pi)^{\theta - 1} |k|^{\theta - 1} \int_{0}^{\infty} dy ~ \frac{2 - 2 \cos{y}}{y^{\theta}} \quad &1 < \theta < 3, \\
(2\pi)^{2} |k|^{2} \log{|2\pi k|^{-1}} \quad & \theta = 3, \\
(2\pi)^{2} k^{2} \sum_{x \ge 1} \frac{1}{|x|^{\theta - 2}} \quad & \theta > 3,
\end{dcases}\\
\a_{2}(k) &:= \begin{dcases} 0 \quad &1 < \theta \le 2, \\ 
- (2\pi)^{\theta - 1} |k|^{\theta - 1} \int_{0}^{2 \pi |k|} dy ~ \frac{1}{y^{\theta - 2}} + (2\pi)^{2} |k|^{2} \int_{1}^{\infty} dy ~ \frac{1}{[y]^{\theta-2}} - \frac{1}{y^{\theta - 2}} \quad &2 < \theta < 3, \\
(2\pi)^{2} |k|^{2} \bigl\{ \int_{1}^{\infty} dy ~ \frac{1}{[y]} - \frac{1}{y}  & \\ \quad + \int_{1}^{\infty} dy ~ \frac{2 - 2 \cos{y}}{y^{3}} + \int_{0}^{1} dy ~ \frac{2 - 2 \cos{y} - y^{2}}{y^{3}} \bigr\} \quad &\theta = 3, \\
(2\pi)^{\theta - 1} |k|^{\theta - 1} \int_{0}^{\infty} dy ~ \frac{2 - 2 \cos^{2}{y} - y^{2} }{y^{\theta}} \quad &3 < \theta < 5, \\
- \frac{(2\pi)^{4} |k|^{4} \log{|k|^{-1}}}{12} \quad &\theta = 5, \\
- \frac{(2\pi)^{4} |k|^{4}}{12} \sum_{x \ge 1} \frac{1}{x^{\theta - 4}} \quad &\theta > 5,
\end{dcases} \\
\a_{3}(k) &:= \begin{dcases} \int_{1}^{\infty} dy ~ \frac{2 - 2 \cos{2 \pi k [y]}}{[y]^{\theta}} - \frac{2 - \cos{2 \pi k y}}{y^{\theta}} & \\ \quad - (2\pi)^{\theta - 1} |k|^{\theta - 1} \int_{0}^{2 \pi |k|} dy ~ \frac{2 - 2 \cos{y}}{y^{\theta}}  \quad &1 < \theta \le 2, \\ 
\int_{1}^{\infty} dy ~ \frac{2 - 2 \cos{2\pi k [y]} - (2\pi)^{2} k^{2} [y]^{2} }{[y]^{\theta}} - \frac{2 - 2 \cos{2\pi k y} - (2\pi)^{2} k^{2} y^{2} }{y^{\theta}} \quad & \\ 
\quad - (2\pi)^{\theta - 1} |k|^{\theta - 1} \int_{0}^{2 \pi |k|} dy ~ \frac{2 - 2 \cos{y} - y^{2}}{y^{\theta}} \quad &2 < \theta < 5, \\ 
\int_{1}^{\infty} dy ~ \frac{2 - 2 \cos{2\pi k [y]} - (2\pi)^{2} k^{2} [y]^{2} }{[y]^{5}} - \frac{2 - 2 \cos{2\pi k y} - (2\pi)^{2} k^{2} y^{2} }{y^{5}} \quad & \\ \quad + (2\pi)^{4}|k|^{4} \int_{1}^{\infty} dy ~ \frac{2 - 2 \cos{y} - y^{2} }{y^{5}} & \\  \quad + \frac{(2\pi)^{4} |k|^{4}}{12} \int_{2 \pi |k|}^{1} dy ~ \frac{24 - 24 \cos{y} - 12 y^{2} + y^{4}}{y^{5}} + \frac{(2\pi)^{4} |k|^{4} \log{2 \pi}}{12}  \quad &\theta = 5, \\
\frac{1}{12} \sum_{x \ge 1} \frac{24 - 24 \cos{2 \pi k x} - 12 (2\pi)^{2} k^{2} x^{2} + (2\pi)^{4} k^{4} x^{4}}{x^{\theta}} \quad &\theta > 5.
\end{dcases}
\end{align}
Each $\a_{i}$ corresponds to $i$-th order term of $\hat{\a}$ and we can easily see the asymptotic behavior of $\a_{1}, \a_{2}$. In the following subsections we compute the order of the remainder term $\a_{3}$. Since in the case $\theta > 5$ we can repeat almost the same argument, we only compute the dcases $1 < \theta \le 2$, $2 < \theta < 5$, $\theta = 5$. Before showing the asymptotic behavior of $\a_{3}$, we note that if $\theta = 3$ then $\a_{2}(k) = \frac{3}{2} k^{2}$ because
\begin{align}
\int_{1}^{\infty} dy ~ \frac{1}{[y]} - \frac{1}{y} &= \gamma_{E}, \\
\int_{1}^{\infty} dy ~ \frac{2 - 2 \cos{y}}{y^{3}} &= 1 - \cos{1} + \int_{1}^{\infty} dy ~ \frac{\sin{y}}{y^{2}} \\
&= 1 - \cos{1} + \sin{1} - \operatorname{Ci}(1), \\
\int_{0}^{1} dy ~ \frac{2 - 2 \cos{y} - y^{2}}{y^{3}} &= -\frac{1}{2} + \cos{1} + \int_{0}^{1} dy ~ \frac{\sin{y} - y}{y^{2}} \\
&= \frac{1}{2} + \cos{1} - \sin{1} + \int_{0}^{1} dy ~ \frac{\cos{y} - 1}{y} \\
&= \frac{1}{2} + \cos{1} - \sin{1} - \gamma_{E} + \operatorname{Ci}(1),
\end{align}
where $\gamma_{E}$ is the Euler's constant and $\operatorname{Ci}$ is the cosine integral. At the last line we use the following equation
\begin{align}
\operatorname{Ci}(x) := - \int_{x}^{\infty} dy \frac{\cos{y}}{y} = \gamma_{E} + \log{x} +  \int_{0}^{x} dy ~ \frac{\cos{y} - 1}{y}.
\end{align}

\subsection{When $1 < \theta \le 2$.}

Define $f_{1}(y) := \frac{1 - \cos{2 \pi k y}}{y^{\theta}} , y > 0$. Since
\begin{align}
f_{1}([y]) = f_{1}(y) + f_{1}^{'}(y)([y] - y) + \frac{1}{2}f_{1}^{''}(y^{'})([y] - y)^{2}
\end{align}
for some $[y] \le y^{'} \le y$, and 
\begin{align}
|f_{1}^{'}(y)| \lesssim |k| \frac{|\sin{2 \pi k y}|}{y^{\theta}}, \quad |f_{1}^{''}(y^{'})| \lesssim |k|^{2} \frac{1}{[y]^{\theta}},
\end{align}
we have
\begin{align}
&\bigl| \int_{1}^{\infty} dy ~ f_{1}([y]) - f_{1}(y) \bigr| \lesssim |k| \int_{1}^{\infty} dy ~ \frac{|\sin{ 2\pi k y}|}{y^{\theta}} + |k|^{2} \int_{1}^{\infty} dy ~ \frac{1}{[y]^{\theta}} \\
& \quad \lesssim |k|^{\theta} \bigl\{ \int_{|k|}^{1} dy ~ \frac{|\sin{2 \pi y}|}{y^{\theta}} + \int_{1}^{\infty} dy ~ \frac{|\sin{2 \pi y}|}{y^{\theta}} \bigr\} + |k|^{2} \\
& \quad \lesssim \begin{dcases} |k|^{\theta} \quad & 1 < \theta < 2, \\ |k|^{2} \log{|k|^{-1}} \quad & \theta = 2. \end{dcases}
\end{align}
In addition, we get
\begin{align}
|k|^{\theta - 1} \bigl| \int_{0}^{2 \pi |k|} dy ~ \frac{1-\cos{y}}{y^{\theta}} \bigr| \lesssim |k|^{\theta - 1}  \int_{0}^{|k|} dy ~ \frac{1}{y^{\theta - 2}} \lesssim |k|^{2}.
\end{align}

\subsection{When $2 < \theta < 5$.}

Define $f_{2}(y) := \frac{2 - 2 \cos{2\pi k y} - (2\pi)^{2}k^{2}y^{2}}{y^{\theta}} , y > 0$. Since
\begin{align}
f_{2}([y]) = f_{2}(y) + f_{2}^{'}(y)([y] - y) + \frac{1}{2}f_{2}^{''}(y^{'})([y] - y)^{2}
\end{align}
for some $[y] \le y^{'} \le y$, we have
\begin{align}
|f_{2}^{'}(y)| &\lesssim |k| \frac{|\sin{2\pi k y} - 2 \pi k y|}{y^{\theta}}, \\ 
|f_{2}^{''}(y^{'})| &\lesssim |k|^{2} \bigl\{ \frac{1 - \cos{2 \pi k y^{'}}}{(y^{'})^{\theta}} + \frac{ 2\pi |k| y^{'} - \sin{2\pi |k| y^{'}} }{(y^{'})^{\theta + 1}} + \frac{ \pi^{2}k^{2}(y^{'})^{2} - \sin^{2}{\pi k y^{'}}}{(y^{'})^{\theta + 2}} \bigr\} \\
&\lesssim \begin{dcases} |k|^{3} [y]^{-\theta} \quad &2 < \theta \le 3, \\ |k|^{4} [y]^{2-\theta} \quad &\theta > 3, \end{dcases} 
\end{align}
and 
\begin{align}
\int_{1}^{\infty} dy ~ |f_{2}^{'}(y)| &\lesssim \begin{dcases} |k|^{\theta} \bigl\{  \int_{2 \pi |k|}^{1} dy ~ \frac{y - \sin{y}}{y^{\theta}} + \int_{1}^{\infty} dy ~ \frac{y - \sin{y}}{y^{\theta}} \bigr\} \quad &2 < \theta \le 4, \\ 
|k|^{4} \int_{1}^{\infty} dy ~ \frac{1}{y^{\theta - 3}} \quad & 4 < \theta < 5, \end{dcases} \\
&\lesssim \begin{dcases} |k|^{\theta} \quad &2 < \theta < 4, \\ |k|^{4} \log{|k|^{-1}} \quad &\theta = 4, \\ |k|^{4} \quad &\theta > 4. \end{dcases}
\end{align}
In addition, we get
\begin{align}
|k|^{\theta - 1} \bigl| \int_{0}^{2\pi |k|} dy ~ \frac{2 - 2 \cos{y} - y^{2}}{y^{\theta}} \bigr| \lesssim |k|^{\theta - 1}  \int_{0}^{2\pi|k|} dy ~ \frac{1}{y^{\theta - 4}} \lesssim |k|^{4}.
\end{align}

\subsection{When $\theta = 5$.}

Since
\begin{align}
|k|^{4} \bigl| \int_{2 \pi |k|}^{1} dy ~ \frac{24 - 24 \cos{y} - 12 y^{2} + y^{4}}{y^{5}} \bigr| \lesssim |k|^{6},
\end{align}
we can see that $\a_{3}(k) = O(|k|^{4})$.

\end{proof}

\begin{lemma}\label{lem:asympofaco}

\begin{align}
&\hat{\a}(k+k^{'}) - \hat{\a}(k) - \hat{\a}(k^{'}) \\ 
&= \begin{dcases} (2 \pi)^{\theta - 1}C_{1}(\theta) \Big(|k + k^{'}|^{\theta - 1} - |k|^{\theta - 1} - |k^{'}|^{\theta - 1} \Big) + O( |k|^{\theta - 1} |k^{'}| + |k| |k^{'}|^{\theta - 1} )  & 1 < \theta < 2, \\
2 \pi C_{1}(2) \Big(|k + k^{'}| - |k| - |k^{'}| \Big) + O(|k| |k^{'}| \log{|k|^{-1}} + |k| |k^{'}| \log{|k^{'}|^{-1}}) & \theta = 2, \\
(2 \pi)^{\theta - 1}C_{1}(\theta) \Big(|k + k^{'}|^{\theta - 1} - |k|^{\theta - 1} - |k^{'}|^{\theta - 1} \Big) + O(|kk^{'}|) & 2 < \theta < 3, \\
(2\pi)^2 \Big( |k + k^{'}|^2 \log{\big(|k + k^{'}|^{-1} \big)} - |k|^2 \log{\big(|k|^{-1} \big)} -  |k^{'}|^2 \log{\big(|k^{'}|^{-1} \big)}  \Big) + O(|kk^{'}|) & \theta = 3, \\
2 (2 \pi)^2 C_{1}(\theta) k k^{'} + o(|kk^{'}|) & \theta > 3, \end{dcases}
\end{align}
for any $k,k^{'} \in \T$. In addition, if $\theta = 3$ then we obtain
\begin{align}
&|\hat{\a}(k+k^{'}) - \hat{\a}(k) - \hat{\a}(k^{'}) | \\
&\le 2 (2\pi)^2 |kk^{'}| \sqrt{\log(|k|^{-1}) \log(|k^{'}|^{-1})} + O(|kk^{'}|\sqrt{\log(|k|^{-1})} + |kk^{'}|\sqrt{\log(|k^{'}|^{-1})}).
\end{align}
\end{lemma}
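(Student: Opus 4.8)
The plan is to avoid subtracting the single-variable expansions of Lemma \ref{lem:asympofa}, whose $O$-remainders cannot be recombined into the genuinely bilinear errors claimed here (in the unbalanced regime $|k'|\ll|k|$ the combination $O(|k|^{\theta}+|k'|^{\theta})$ is strictly weaker than $O(|k|^{\theta-1}|k'|+|k||k'|^{\theta-1})$). Instead I would work from a representation of the second difference that is manifestly bilinear in $(k,k')$. Starting from $\hat{\a}(k) = 2\sum_{x\ge1}(1-\cos(2\pi kx))x^{-\theta}$ and using the elementary identity $-1-\cos(A+B)+\cos A+\cos B = \sin A\sin B-(1-\cos A)(1-\cos B)$ with $A=2\pi kx$, $B=2\pi k'x$, I would write
\begin{align}
\hat{\a}(k+k')-\hat{\a}(k)-\hat{\a}(k') = 2\sum_{x\ge1}\frac{\sin(2\pi kx)\sin(2\pi k'x) - \big(1-\cos(2\pi kx)\big)\big(1-\cos(2\pi k'x)\big)}{x^{\theta}}.
\end{align}
Every summand vanishes when $k=0$ or when $k'=0$, so this form already exhibits the bilinear structure the claimed errors demand.

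Next I would split the right-hand side into the sine sum $S_{1}$ and the cosine sum $S_{2}$ and, exactly as in the proof of Lemma \ref{lem:asympofa}, compare each to its integral counterpart $\int_{0}^{\infty}(\cdots)y^{-\theta}\,dy$, controlling the difference via the mean-value/second-difference bounds applied to $f(y)=\sin(2\pi ky)\sin(2\pi k'y)y^{-\theta}$ and its cosine analogue; since each differentiation in $y$ produces a factor $k$ or $k'$, these sum-to-integral errors are automatically bilinear. For $1<\theta<3$ both integrals converge, and expressing $\sin A\sin B$ and $(1-\cos A)(1-\cos B)$ as linear combinations of $1-\cos$ of $k\pm k'$ and of $k,k'$, then using $\int_{0}^{\infty}\frac{1-\cos(2\pi cy)}{y^{\theta}}\,dy=\tfrac12 C_{1}(\theta)(2\pi|c|)^{\theta-1}$, the $|k-k'|^{\theta-1}$ contributions cancel and the two integrals combine to
\begin{align}
C_{1}(\theta)(2\pi)^{\theta-1}\big(|k+k'|^{\theta-1}-|k|^{\theta-1}-|k'|^{\theta-1}\big),
\end{align}
which is precisely the asserted main term, the sum-to-integral remainders supplying the stated $O$-terms. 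For $\theta\ge3$ the individual integrals diverge at the origin, reflecting that the dominant scale shifts from $x\sim|k|^{-1}$ to $x\sim1$; there I would instead Taylor expand $\sin(2\pi kx)\sin(2\pi k'x)=(2\pi)^{2}kk'x^{2}+O\big(kk'(k^{2}+k'^{2})x^{4}\big)$ and sum against $x^{-\theta}$, giving the main term $2(2\pi)^{2}C_{1}(\theta)kk'$ for $\theta>3$ and the logarithmic main term for $\theta=3$, while $S_{2}$, whose summand is $O(k^{2}k'^{2}x^{4})$ below the cutoff $\max(|k|,|k'|)^{-1}$, is of order $o(|kk'|)$ and absorbed into the error. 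The case split $1<\theta<2$, $\theta=2$, $2<\theta<3$, $\theta=3$, $\theta>3$ mirrors that of Lemma \ref{lem:asympofa}.

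For the refined $\theta=3$ estimate I would bound $S_{2}$ crudely by $O(|kk'|)$ and apply the Cauchy--Schwarz inequality to $S_{1}$:
\begin{align}
|S_{1}| \le 2\Big(\sum_{x\ge1}\frac{\sin^{2}(2\pi kx)}{x^{3}}\Big)^{1/2}\Big(\sum_{x\ge1}\frac{\sin^{2}(2\pi k'x)}{x^{3}}\Big)^{1/2}.
\end{align}
Splitting each sum at $x\sim|k|^{-1}$ and using $\sin^{2}(2\pi kx)\le\min(1,(2\pi kx)^{2})$ gives $\sum_{x\ge1}\sin^{2}(2\pi kx)x^{-3}=(2\pi k)^{2}\log|k|^{-1}+O(k^{2})$, so that $|S_{1}|\le 2(2\pi)^{2}|kk'|\sqrt{\log|k|^{-1}\log|k'|^{-1}}$ up to the lower-order corrections of the form $O(|kk'|\sqrt{\log|k|^{-1}}+|kk'|\sqrt{\log|k'|^{-1}})$, which also swallow the $S_{2}$ contribution.

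I expect the main obstacle to be the bookkeeping of the sum-to-integral and Taylor remainders so that they are controlled by the sharp bilinear quantities $|k|^{\theta-1}|k'|+|k||k'|^{\theta-1}$ (respectively $o(|kk'|)$) \emph{uniformly in the ratio $|k|/|k'|$}, since the crude single-variable bounds are insufficient in the unbalanced regime; matching the exact constants in the borderline logarithmic cases $\theta=2,3,5$ will require the same careful treatment of the cutoff at $x\sim\max(|k|,|k'|)^{-1}$ as in the proof of Lemma \ref{lem:asympofa}.
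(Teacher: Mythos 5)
Your proposal follows essentially the same route as the paper: it starts from the identical bilinear identity $\hat{\a}(k+k')-\hat{\a}(k)-\hat{\a}(k')=2\sum_{x\ge1}\big[\sin(2\pi kx)\sin(2\pi k'x)-(1-\cos(2\pi kx))(1-\cos(2\pi k'x))\big]x^{-\theta}$, compares the sum to its integral via the mean-value theorem to get bilinear remainders, recombines the convergent integrals into the $C_1(\theta)$ main term for $1<\theta<3$, Taylor-expands for $\theta\ge3$, and uses Cauchy--Schwarz for the refined $\theta=3$ bound (applied to the sum rather than, as in the paper, to the integral --- an immaterial difference). The argument is correct and matches the paper's proof in all essentials.
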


\begin{proof}

First we observe that
\begin{align}
\hat{\a}(k+k^{'}) - \a(k) - \a(k^{'}) &= 2 \sum_{x \ge 1} \frac{\cos{(2\pi k x)} + \cos{(2\pi k^{'} x)} - \cos{\big(2\pi (k+k^{'}) x \big)} -1}{|x|^{\theta}} \\ &= 2 \sum_{x \ge 1} \frac{\sin{(2\pi k x)} \sin{(2\pi k^{'} x)} - \Big(1 - \cos{(2\pi k x)}\Big) \Big(1 - \cos{(2\pi k^{'} x)}\Big)}{|x|^{\theta}}.
\end{align}

\subsection{When $1 < \theta < 3$}

Define 
\begin{align}
f_{3}(y) := 2 \frac{\sin{(2\pi k y)} \sin{(2\pi k^{'} y)} - \Big(1 - \cos{(2\pi k y)}\Big) \Big(1 - \cos{(2\pi k^{'} y)}\Big)}{|y|^{\theta}}.
\end{align}
Then we have
\begin{align}
\a(k) + \a(k^{'}) - \hat{\a}(k+k^{'}) = \int_{1}^{\infty} dy ~ f_{3}(y) +  \int_{1}^{\infty} dy ~ f_{3}([y]) - f_{3}(y).
\end{align}
For any  $y \ge 1$, there exists some $y^{'} \ge 1, [y] \le y^{'} \le y$ such that
\begin{align}
f_{3}([y]) = f_{3}(y) + f_{3}^{'}(y^{'})([y] - y),
\end{align}
and we have
\begin{align}
|f_{3}^{'}(y^{'})| &\lesssim |k| \frac{|\sin{2 \pi k^{'}y}|}{|y|^{\theta}} + |k^{'}| \frac{|\sin{2 \pi k y}|}{|y|^{\theta}} 
\end{align}
where we use an inequality $[y]^{-1} \le 2 y^{-1}$ on $\{y \ge 1\}$. Since
\begin{align}
\int_{1}^{\infty} dy ~ \frac{|\sin{2 \pi k y}|}{|y|^{\theta}} &\lesssim \begin{dcases} |k|^{\theta - 1} \quad & 1 < \theta < 2, \\
|k| \log{|k|^{-1}} \quad & \theta = 2, \\
|k| \quad & \theta > 2,
\end{dcases} 
\end{align}
and
\begin{align}
\int_{1}^{\infty} dy ~ f_{3}(y) &= \int_{1}^{\infty} dy ~ \frac{2 -  2\cos{\big(2\pi (k+k^{'}) x \big)} }{|y|^{\theta}} \\ 
& \quad - \int_{1}^{\infty} dy ~ \frac{2 - 2 \cos{(2\pi k x)}}{|y|^{\theta}} - \int_{1}^{\infty} dy ~ \frac{2 - 2 \cos{(2\pi k^{'} x)}}{|y|^{\theta}} \\
&= (2 \pi)^{\theta - 1}C_{1}(\theta) \Big( |k + k^{'}|^{\theta - 1} - |k|^{\theta - 1} - |k^{'}|^{\theta - 1} \Big) \\ 
&\quad - 2 \int_{0}^{1} dy ~ \frac{\sin{(2\pi k y)} \sin{(2\pi k^{'} y)} - \Big(1 - \cos{(2\pi k y)}\Big) \Big(1 - \cos{(2\pi k^{'} y)}\Big)}{|y|^{\theta}} \\
&= (2 \pi)^{\theta - 1}C_{1}(\theta) \Big(|k + k^{'}|^{\theta - 1} - |k|^{\theta - 1} - |k^{'}|^{\theta - 1} \Big) + O(|k k^{'}|),
\end{align}
we have
\begin{align}
\hat{\a}(k+k^{'}) - \a(k) - \a(k^{'}) &= (2 \pi)^{\theta - 1}C_{1}(\theta) \Big(|k + k^{'}|^{\theta - 1} - |k|^{\theta - 1} - |k^{'}|^{\theta - 1} \Big) \\
& \quad + \begin{dcases} O( |k|^{\theta - 1} |k^{'}| + |k| |k^{'}|^{\theta - 1} ) \quad & 1 < \theta < 2, \\
O(|kk^{'}| \log{|k|^{-1}} + |kk^{'}| \log{|k^{'}|^{-1}})\quad & \theta = 2, \\
O(|kk^{'}|) \quad & \theta > 2. \end{dcases}
\end{align}

\subsection{When $\theta = 3$}

In this case, from the following decomposition, 
\begin{align}
\int_{1}^{\infty} dy ~ f_{3}(y) &= (2\pi)^2 |k + k^{'}|^2 \log{\big(|2\pi (k + k^{'})|^{-1} \big)}  \\ & \quad -  (2\pi)^2 |k|^2 \log{\big(|2\pi k|^{-1} \big)} -  (2\pi)^2 |k^{'}|^2 \log{\big(|2\pi k^{'}|^{-1} \big)}  \\
& \quad + (2\pi)^2 \Big( |k + k^{'}|^2 -  |k|^2 -  |k^{'}|^2 \Big) \int_{1}^{\infty} dy ~ \frac{2 - 2\cos{(y)}}{|y|^{3}} \\
& \quad + (2\pi)^2 \Big( |k + k^{'}|^2 -  |k|^2 -  |k^{'}|^2 \Big) \int_{0}^{1} dy \frac{2 - 2\cos{(y)} - y^2}{|y|^{3}} \\
& \quad - 2 \int_{0}^{1} dy ~ \frac{\sin{(2\pi k y)} \sin{(2\pi k^{'} y)} - 4 \pi^2 k k^{'} y^2}{|y|^3}  \\
& \quad + 2 \int_{0}^{1} dy ~ \frac{\Big(1 - \cos{(2\pi k y)}\Big) \Big(1 - \cos{(2\pi k^{'} y)}\Big)}{|y|^3},
\end{align}
we obtain
\begin{align}
&\hat{\a}(k+k^{'}) - \a(k) - \a(k^{'}) \\
&= (2\pi)^2 \Big( |k + k^{'}|^2 \log{\big(|k + k^{'}|^{-1} \big)} - |k|^2 \log{\big(|k|^{-1} \big)} -  |k^{'}|^2 \log{\big(|k^{'}|^{-1} \big)}  \Big) + O(|kk^{'}|).
\end{align}

Note that by using Schwarz's inequality we obtain the following boundedness : 
\begin{align}
\Big| \int_{1}^{\infty} dy ~ f_{3}(y) \Big| &= 2 \Big| \int_{1}^{\infty} \frac{\sin{(2\pi k y)} \sin{(2\pi k^{'} y)} - \Big(1 - \cos{(2\pi k y)}\Big) \Big(1 - \cos{(2\pi k^{'} y)}\Big)}{|y|^{3}} \Big| \\
&= 2 \Big| \int_{1}^{\infty} \frac{\sin{(2\pi k y)} \sin{(2\pi k^{'} y)} }{|y|^{3}} \Big| + O(|kk^{'}|) \\
&\le 2  \Big( \int_{1}^{\infty} \frac{|\sin{(2\pi k y)} |^2}{|y|^{3}} \Big)^{\frac{1}{2}}\Big( \int_{1}^{\infty} \frac{|\sin{(2\pi k^{'} y)} |^2}{|y|^{3}} \Big)^{\frac{1}{2}} + O(|kk^{'}|) \\
&= 2\Big( (2\pi)^2 |k|^{2} \log(|k|^{-1}) + O(|k^2|)  \Big)^{\frac{1}{2}} \Big( (2\pi)^2 |k^{'}|^{2} \log(|k^{'}|^{-1}) + O(|k^2|)  \Big)^{\frac{1}{2}} + O(|kk^{'}|) \\
&= 2 (2\pi)^2 |kk^{'}| \sqrt{\log(|k|^{-1}) \log(|k^{'}|^{-1})} + O(|kk^{'}|\sqrt{\log(|k|^{-1})} + |kk^{'}|\sqrt{\log(|k^{'}|^{-1})}).
\end{align}

\subsection{When $\theta > 3$}

In this case we can easily see the order because
\begin{align}
&2 \sum_{x \ge 1} \frac{\sin{(2\pi k x)} \sin{(2\pi k^{'} x)} - \Big(1 - \cos{(2\pi k x)}\Big) \Big(1 - \cos{(2\pi k^{'} x)}\Big)}{|x|^{\theta}} \\
&= 2 (2 \pi)^2 C_{1}(\theta) k k^{'} + 2 \sum_{x \ge 1} \frac{\sin{(2\pi k x)} \sin{(2\pi k^{'} x)} - (2 \pi)^2 k k^{'} x^2}{|x|^{\theta}} \\
& \quad - 2 \sum_{x \ge 1} \frac{\Big(1 - \cos{(2\pi k x)}\Big) \Big(1 - \cos{(2\pi k^{'} x)}\Big)}{|x|^{\theta}} \\
&= 2 (2 \pi)^2 C_{1}(\theta) k k^{'} + o(|kk^{'}|).
\end{align}

\end{proof}

\section{On the equivalence of (\ref{ass:phononic}) and (\ref{ass:equiassofthm3})}\label{app:initialequi}

First we observe that 
\begin{align}  
&\e \sum_{x \in \Z} |p_{x} - \bar{p}_{0}(\e x)|^{2} \\ &\le  \e \sum_{x \in \Z} \Big|p_{x} - \int_{\frac{\T}{\e}} dk ~ e^{2 \pi \sqrt{-1} k \e x} \tilde{\bar{p}}_{0}(k) \Big|^{2} +  \e \sum_{x \in \Z} \Big|\bar{p}_{0}(\e x) -  \int_{\frac{\T}{\e}} dk ~ e^{2 \pi \sqrt{-1} k \e x} \tilde{\bar{p}}_{0}(k) \Big|^{2} \\
&= \int_{\frac{\T}{\e}} dk ~ |\e \hat{p}(\e k) - \tilde{\bar{p}}_{0}(k)|^{2} + \e \sum_{x \in \Z} \Big|\bar{p}_{0}(\e x) -  \int_{\frac{\T}{\e}} dk ~ e^{2 \pi \sqrt{-1} k \e x} \tilde{\bar{p}}_{0}(k) \Big|^{2}
\end{align}
and
\begin{align}
&\int_{\frac{\T}{\e}} dk ~ |\e \hat{p}(\e k) - \tilde{\bar{p}}_{0}(k)|^{2} = \e \sum_{x \in \Z} |p_{x} - \frac{1}{\e} \int_{\T} dk ~ e^{2 \pi \sqrt{-1} k x} \tilde{\bar{p}}_{0}(\frac{k}{\e})|^2 \\
&\le \e \sum_{x \in \Z} |p_{x} -  \bar{p}_{0}(\e x)|^2 + \e \sum_{x \in \Z} \Big|\bar{p}_{0}(\e x) - \int_{\frac{\T}{\e}} dk ~ e^{2 \pi \sqrt{-1} k \e x} \tilde{\bar{p}}_{0}(k) \Big|^2,
\end{align}
where we use Parseval's identity. Therefore to check the equivalence of \eqref{ass:phononic} and \eqref{ass:equiassofthm3}, it is sufficient to show that 
\begin{align}
\lim_{\e \to 0} \e \sum_{x \in \Z} |\bar{p}_{0}(\e x) - \int_{\frac{\T}{\e}} dk ~ e^{2 \pi \sqrt{-1} k \e x} \tilde{\bar{p}}_{0}(k) |^2 = 0. \label{lim:equisuffi}
\end{align}
Since $\bar{p}_0 \in C^{\infty}_{0}(\R)$, we have
\begin{align}
&\Big| \bar{p}_{0}(\e x) - \int_{\frac{\T}{\e}} dk ~ e^{2 \pi \sqrt{-1} k \e x} \tilde{\bar{p}}_{0}(k) \Big| = \Big| \int_{|k| > \frac{1}{2\e}} dk ~ e^{2 \pi \sqrt{-1} k \e x} \tilde{\bar{p}}_{0}(k) \Big| \\
&= \Bigg| \frac{1}{2 \pi \sqrt{-1} \e x} \Bigg( \Big( e^{ - \pi \sqrt{-1} x} \tilde{\bar{p}}_{0}(- \frac{1}{2\e}) - e^{\pi \sqrt{-1} x} \tilde{\bar{p}}_{0}(\frac{1}{2\e})  \Big)  -  \int_{|k| > \frac{1}{2\e}} dk ~ e^{2 \pi \sqrt{-1} k \e x} \partial_{k} \tilde{\bar{p}}_{0}(k) \Bigg) \Bigg| \\
&\lesssim \frac{\e}{|x|}, 
\end{align}
and thus we obtain \eqref{lim:equisuffi}.

\end{document}